\def\maxwidth{ %
  \ifdim\Gin@nat@width>\linewidth
    \linewidth
  \else
    \Gin@nat@width
  \fi
}
\definecolor{fgcolor}{rgb}{0.345, 0.345, 0.345}
\definecolor{shadecolor}{rgb}{.97, .97, .97}
\definecolor{messagecolor}{rgb}{0, 0, 0}
\definecolor{warningcolor}{rgb}{1, 0, 1}
\definecolor{errorcolor}{rgb}{1, 0, 0}
\def\makeLineNumberLeft{%
  \linenumberfont\llap{\hb@xt@\linenumberwidth{\LineNumber\hss}\hskip\linenumbersep}
  \hskip\columnwidth
  \rlap{\hskip\linenumbersep\hb@xt@\linenumberwidth{\hss\LineNumber}}\hss}
\newcommand{\xmark}{\text{\ding{55}}}
\newcommand{\cmark}{\text{\ding{51}}}
\newcolumntype{C}[1]{>{\centering\arraybackslash}m{#1}}
\theoremstyle{plain}
\newtheorem{assumption}{\protect\assumptionname}[section]
\theoremstyle{plain}
\newtheorem{thm}{\protect\theoremname}[section]
  \theoremstyle{remark}
  \newtheorem{rem}{\protect\remarkname}[section]
  \theoremstyle{plain}
  \newtheorem{lem}{\protect\lemmaname}[section]
  \theoremstyle{plain}
  \newtheorem{cor}{\protect\corollaryname}[section]
    \theoremstyle{plain}
    \newtheorem{example}{\protect\examplename}[section]
     \theoremstyle{plain}
    \newtheorem{prop}{\protect\propositionname}[section]
    \theoremstyle{definition}
\newtheorem{definition}{Definition}[section]
\providecommand{\customgenericname}{}
\newcommand{\newcustomtheorem}[2]{%
  \newenvironment{#1}[1]
  {%
   \renewcommand\customgenericname{#2}%
   \renewcommand\theinnercustomgeneric{##1}%
   \innercustomgeneric
  }
  {\endinnercustomgeneric}
}
\numberwithin{equation}{section}
\newcommand\independent{\protect\mathpalette{\protect\independenT}{\perp}}
\def\independenT#1#2{\mathrel{\rlap{$#1#2$}\mkern2mu{#1#2}}}
  \providecommand{\lemmaname}{Lemma}
  \providecommand{\remarkname}{Remark}
\providecommand{\corollaryname}{Corollary}
\providecommand{\theoremname}{Theorem}
\providecommand{\examplename}{Example}
\providecommand{\assumptionname}{Assumption}
\providecommand{\propositionname}{Proposition}
\newcommand*{\addFileDependency}[1]{
  \typeout{(#1)}
  \@addtofilelist{#1}
  \IfFileExists{#1}{}{\typeout{No file #1.}}
}
\begin{document}
\nolinenumbers
\title{{\vspace{-2cm}}Matching Points: Supplementing Instruments with Covariates in Triangular Models  \thanks{This paper is based on the first chapter of my doctoral dissertation at Columbia. I thank Jushan Bai, Sokbae (Simon) Lee and Bernard Salani\'e, who were gracious with their advice, support and feedback. I have also greatly benefited from comments and discussions with Karun Adusumili, Isaiah Andrews, Andres Aradillas-Lopez, Sandra Black, Ivan Canay, Songnian Chen, Xiaohong Chen, Eun Yi Chung, Leonard Goff, Florian Gunsilius, Han Hong, Roger Klein, Jessie Li, Xun Lu, Elena Manresa, Jos\'e Luis Montiel Olea, Ulrich M\"{u}ller, Whitney Newey, Serena Ng, Junhui Qian, Geert Ridder, Christoph Rothe, Zhentao Shi, Suyong Song, J\"{o}rg Stoye, Matt Taddy, Alexander Torgovitsky, Quang Vuong, Yulong Wang, Kaspar Wuthrich and the participants of the 2019 Econometric Society Asian Meeting in Xiamen and the econometrics seminars at Columbia, SJTU (Antai), CUHK Shenzhen, CUHK, HKUST, U of Iowa, Rutgers, USC, and UIUC. I also thank Research Connections for providing the data of the Head Start Impact Study. All errors are my own.}
}
\author{Junlong Feng\thanks{Department of Economics, the Hong Kong University of Science and Technology; \href{mailto:jlfeng@ust.hk}{jlfeng@ust.hk}.}}
\date{\today}
\maketitle
\vspace{-0.45in}
\begin{abstract}
Models with a discrete endogenous variable are typically underidentified when the instrument takes on too few values. This paper presents a new method that matches pairs of covariates and instruments to restore point identification in this scenario in a triangular model. The model consists of a structural function for a continuous outcome and a selection model for the discrete endogenous variable. The structural outcome function must be continuous and monotonic in a scalar disturbance, but it can be nonseparable. The selection model allows for unrestricted heterogeneity. Global identification is obtained under weak conditions. The paper also provides estimators of the structural outcome function. Two empirical examples of the return to education and selection into Head Start illustrate the value and limitations of the method. 
\vspace{0in}\\
\smallskip\\
\noindent\textbf{Keywords:} Nonparametric identification, triangular model, instrumental variable, endogeneity, generalized propensity score.
\\
\vspace{0in}\\
\end{abstract}
\setcounter{page}{0}
\thispagestyle{empty}

\section{Introduction}
This paper considers identification and estimation of the structural outcome function $\boldsymbol{g}^{*}\equiv (g^{*}_{d})_{d}$ in a triangular model:
\begin{linenomath*}\begin{align}
Y&=\sum_{d\in S(D)}\mathbbm{1}(D=d)\cdot g^{*}_{d}(\boldsymbol{X},U_{d})\label{eq1.1}\\
D&=h(\boldsymbol{X},Z,\boldsymbol{V})\label{eq1.2}
\end{align}\end{linenomath*}
where both the endogenous variable $D\in S(D)$ and the instrumental variable $Z\in S(Z)$ are discrete, $\boldsymbol{X}$ is a vector of covariates, and the disturbances $(U_{d})_{d}$ and $\boldsymbol{V}$ are correlated (see different versions of the model in \cite{newey1999nonparametric}, \cite{chesher2003identification}, \cite{imbens2009identification}, etc.).

In many applications, the instrument takes on fewer values than the endogenous variable, i.e., the cardinality of their support sets satisfy $|S(Z)|<|S(D)|$. The outcome function $\boldsymbol{g}^{*}$ is then in general underidentified. For instance, under exogeneity of $Z$ and certain shape restrictions or separability of $\bm{g}^{*}(\boldsymbol{X},\cdot)$, the $|S(D)|$-vector of the unknown parameters $\bm{g}^{*}(\boldsymbol{x}_{0},u)$ for some $\bm{X}=\bm{x}_{0}$ and $u$ may satisfy moment equations conditional on $Z$ and $\bm{X}=\bm{x}_{0}$ (e.g. \cite{newey2003instrumental} and \cite{chernozhukov2005iv}). Conditioning on each value of $Z$ generates one moment equation. The number of the equations is thus $|S(Z)|$, smaller than the number of the unknowns ($|S(D)|$). The classical order condition fails and so does identification of $\bm{g}^{*}(\boldsymbol{x}_{0},u)$.

This paper develops a novel approach to obtain point identification of $\boldsymbol{g}^{*}$ in this scenario. For a value of interest $\bm{X}=\bm{x}_{0}$, identification is achieved by finding special values $\bm{X}=\bm{x}_{m}$, the \textit{matching points}, such that the function $\boldsymbol{g}^{*}(\boldsymbol{x}_{m},\cdot)$ can be expressed as a known mapping of $\boldsymbol{g}^{*}(\boldsymbol{x}_{0},\cdot)$. As a consequence, both variation in $Z$ and local variation in $\bm{X}$ across $\bm{x}_{0}$ and $\bm{x}_{m}$ have identification power for $\boldsymbol{g}^{*}(\boldsymbol{x}_{0},\cdot)$. Specifically, by substituting the mapping into the moment equations conditional on $Z$ and $\bm{X}=\bm{x}_{m}$ for a chosen $u$, the unknowns in these equations become $\boldsymbol{g}^{*}(\boldsymbol{x}_{0},u)$. Together with the moment equations conditional on $Z$ and $\bm{X}=\bm{x}_{0}$, the total number of the equations for $\boldsymbol{g}^{*}(\boldsymbol{x}_{0},u)$ is increased, while the number of the unknowns is unchanged. Even though the covariates are not excluded from the structural function, the special value $\bm{x}_{m}$ facilitates identification in a way an additional instrument value does. The effective support set of the instrument is thus enlarged, making identification possible.

Two key restrictions are needed in addition to the exogeneity of the instrument for the mapping of $\boldsymbol{g}^{*}(\boldsymbol{x}_{0},\cdot)$ to $\boldsymbol{g}^{*}(\boldsymbol{x}_{m},\cdot)$ to be traced out before themselves are. First, the matching point $\bm{x}_{m}$ and the value of interest $\bm{x}_{0}$ need to generate the same selection patterns when paired with appropriate instrument values. That is, for some $z,z'\in S(Z)$, $\bm{x}_{m}$ needs to satisfy $h(\bm{x}_{m},z',\cdot)=h(\bm{x}_{0},z,\cdot)$. Second, each outcome disturbance $U_{d}$ is a scalar and $g^{*}_{d}(\bm{X},\cdot)$ is continuous and strictly increasing for all $d\in S(D)$ almost surely. 

A consequence of the first restriction is that the distributions of the outcome disturbance $U_{D}\equiv\sum_{d\in S(D)}U_{d}$ conditional on $D$ given $(\bm{X},Z)=(\bm{x}_{0},z)$ and $(\bm{x}_{m},z')$ are equal, under some other assumptions. Intuitively, $D$ has the same degree of endogeneity under $(\bm{X},Z)=(\bm{x}_{m},z')$ and $(\bm{x}_{0},z)$. Continuity and monotonicity of $g^{*}_{d}(\bm{X},\cdot)$ imposed by the second key restriction then transform equality between the conditional distributions of $U_{D}$ into equality between the conditional distributions of $Y$ evaluated at $g_{d}^{*}(\bm{x}_{0},\cdot)$ and $g^{*}_{d}(\bm{x}_{m},\cdot)$ for each $d$. The mapping from $\bm{g}^{*}(\bm{x}_{0},\cdot)$ to $\bm{g}^{*}(\bm{x}_{m},\cdot)$ can then be traced out by inverting these observable distributions. 

Two approaches are available to find the matching points $\bm{x}_{m}$ satisfying the first restriction. If $h$ is known and identified, one can obtain the matching points by searching for an $\bm{x}_{m}$ that matches $h$ with $\bm{x}_{0},z,z'$ fixed. If $h$ is unknown, I show that a statistical implication of the first key restriction is that the generalized propensity scores conditional on $(\bm{X},Z)=(\bm{x}_{0},z)$ and $(\bm{x}_{m},z')$ are equal. Therefore, one can recover the matching points robustly by searching for an $\bm{x}_{m}$ to match the generalized propensity scores without knowing $h$. A sufficient condition for all such $\bm{x}_{m}$ to be matching points is that $(\bm{X},Z)$ enters $h$ only via the generalized propensity scores. Usually, this requires the selection model to have an index structure such as discrete choice models (see \cite{heckman2005structural}, for example). For selection models not satisfying these conditions, one can still use the solutions to the generalized propensity score matching as candidates of matching points, and their validity is testable. Therefore, the underlying selection model can be very general with no restrictions on the dimensionality or separability in the selection heterogeneity $\bm{V}$.

After the order condition is fulfilled using the instrument and the matching points, continuity and monotonicity of $\bm{g}^{*}(\bm{X},\cdot)$ also simplify the sufficient conditions for global identification. I show that the outcome function $\boldsymbol{g}^{*}(\bm{x}_{0},\cdot)$ is globally identified among monotonic functions if $\boldsymbol{g}^{*}(\boldsymbol{x}_{0},u)$ is only locally identified for all $u$. Hence, this new result only relies on local invertibility conditions for nonlinear functions, which are much weaker than the conditions in global inverse theorems widely adopted for global identification. This result also applies to the standard nonparametric quantile IV approach when the instrument has large support, and may be of independent interest. Based on the identification strategy, I construct a sieve estimator and derive its asymptotic properties under simple low level conditions.

An important special case of a continuous and strictly increasing structural outcome function is that it is additively separable in the disturbance. I show that under separability, the outcome function at a given $\bm{X}=\bm{x}_{0}$ solves a system of linear equations, preserving a similar structure as in the nonparametric IV approach with rich instruments (e.g. \cite{newey2003instrumental} and \cite{das2005instrumental}). I construct a closed form estimator that is easy to implement in practice. I apply it to a return to education application using the same extract from the 1979 National Longitudinal Surveys (NLS) as in \cite{card1995using}. Adopting the binary proximity-to-college instrument, the returns of three levels of education, high school, some college, and college and above, are nonparametrically underidentified by the existing approaches. To apply my approach, I use the average of parents' years of schooling as a covariate to generate matching points. Identification is restored using the matching points. I then estimate the returns and find that they are increasing in the level of education and heterogeneous in parents' years of schooling.

It is worth noting that my approach hinges on the covariates' ability to offset the effect of the instrument on selection. For applications where the instrument has a dominant effect, matching points may not exist. As an illustration, I consider another empirical example of the choice of preschool programs. I use the Head Start Impact Study dataset following \cite{kline2016evaluating}. A randomly assigned lottery granting access to Head Start serves as the instrument, while the endogenous variable is the multivalued preschool program choice. I find that the instrument has a much larger effect on selection than covariates such as the baseline test scores and family income. No matching points exist. 

I defer a detailed discussion of the relation of my approach to the literature until Section \ref{1.sec8}. Here let me only briefly highlight some major differences. Methods that circumvent the problem of a small-support instrument include imposing homogeneity between adjacent levels of $D$ when $D$ is ordered, or specifying a parametric form for $\boldsymbol{g}^{*}$ and using interactions between $Z$ and $\boldsymbol{X}$ as extra instruments by assuming $\boldsymbol{X}$ is exogenous. These methods would fail in a fully nonparametric model, as studied in this paper. \cite{torgovitsky2015identification,torgovitsky2017minimum} and \cite{d2015identification} show that a binary instrument can identify nonseparable models with a continuous $D$. Different from my approach, continuity in $D$ is indispensable, and they require $\bm{V}$ to be a scalar and the selection function $h$ to be strictly increasing in it. \cite{caetano2016identifying} use covariates to identify models when the instruments do not have enough variation. Their approach does not rely on a selection model, but they need the covariates used for identification purposes separable from the model. In contrast, the covariates in my approach can enter the model in an arbitrary way. The idea in my approach of using shifts in some observables to compensate for a shift in a target variable to facilitate identification can also be seen in \cite{ichimura2000direct}, \cite{vytlacil2007dummy} and \cite{chen2020identification}. They focus on different parameters than this paper, and the shifting variables and the target variables are also different.
\cite{vuong2017counterfactual} and \cite{feng2019estimation} study the individual treatment effect of a binary $D$ and develop a concept called the counterfactual mapping. It is also an identifiable mapping linking two outcome functions, but at \textit{different} values of $D$ and the \textit{same} value of $\bm{X}$.

The rest of the paper is organized as follows. In Section \ref{1.sec2}, I introduce the matching points and show how to use them to generate new moment equations and restore the order condition. How to find the matching points is also discussed. Given the fulfilled order condition, Section \ref{1.sec3} provides sufficient conditions for the global identification of the nonseparable model. Section \ref{1.secSP} discusses identification of a separable model as a special case under relaxed conditions. Section \ref{1.sec4} sketches estimation of the matching points and the separable model. Section \ref{1.sec6} shows Monte Carlo simulation results to illustrate the estimator's finite sample performance. Section \ref{1.sec7} presents two empirical applications.  Section \ref{1.sec8} discusses the relation of my approach to the literature. Section \ref{1.sec9} concludes. Some additional results and the proofs of identification are in Appendix. In the Supplemental Material, I provide an estimator of the general nonseparable model, proofs of its asymptotic properties, and additional simulation results.

\subsubsection*{Notation} 
A vector valued function is said to be (strictly) increasing or monotonic if every component in it is (strictly) increasing or monotonic. Random variables are denoted by upper-case Latin letters and  their realizations by the corresponding lower-cases. Bold Latin letters denote vectors or matrices. For two generic random variables $A$ and $B$, denote the conditional expectation of $A$ given $B=b$ by $\mathbb{E}_{A|B}(b)$, with similar notation for conditional cumulative distribution functions $F_{A|B}(a|b)$, densities $f_{A|B}(a|b)$ and variances $\mathbb{V}_{A|B}(b)$. Denote the support set of $A$ by $S(A)$, and the support of $A$ given $B=b$ by $S(A|B=b)$, or simply $S(A|b)$ when it does not cause confusion. For a finite set $G$, $|G|$ denotes its cardinality.
Throughout, I assume all the random variables are in a common probability space with the measure function $\mathbb{P}$. Almost surely (a.s.) and measurable are always with respect to $\mathbb{P}$.
\section{Matching Points and the Order Condition}\label{1.sec2}
To highlight the key features of my approach, I focus on a simple case where a single endogenous $D$ takes on three values ($|S(D)|=3$) and the instrument $Z$ is binary $(|S(Z)|=2)$. The discrete $D$ can be either ordered or unordered. Without loss of generality, let $S(D)=\{1,2,3\}$ and $S(Z)=\{0,1\}$. The general cases of arbitrary $|S(D)|>|S(Z)|$ and of multiple endogenous variables will be discussed in Appendix \ref{1.appA}. Besides the endogenous variable and the instrument, a continuous outcome variable $Y$ and a vector of covariates $\bm{X}$ are observable. The outcome is determined by the structural equation \eqref{eq1.1}. The function $\bm{g}^{*}$ will be referred to as the outcome function subsequently.

For a fixed value $\bm{x}_{0}\in S(\bm{X})$, $\bm{g}^{*}(\bm{x}_{0},\cdot)$ is underidentified by the standard IV approach (for example \cite{chernozhukov2005iv}). This is because for any $u$, there can be only two moment equations for $\bm{g}^{*}(\bm{x}_{0},u)$ by conditioning on $(\bm{X},Z)=(\bm{x}_{0},0)$ and $(\bm{x}_{0},1)$. But $\bm{g}^{*}(\bm{x}_{0},u)$ contains three elements. The classical order condition thus fails. The idea of this paper is to generate more moment equations by conditioning on some other special values of the covariates, the matching points, denoted by $\bm{x}_{m}$. Consider the moment equations conditional on $(\bm{X},Z)=(\bm{x}_{m},0)$ and $(\bm{x}_{m},1)$. Although the unknowns in these equations are $\bm{g}^{*}(\bm{x}_{m},u)$, if $\bm{g}^{*}(\bm{x}_{m},u)$ can be rewritten as an known function of $\bm{g}^{*}(\bm{x}_{0},u)$, then substituting the function into these moment equations provide new equations for $\bm{g}^{*}(\bm{x}_{0},u)$.  

To achieve this goal, a selection model for $D$ is needed. By the discreteness of $D$, rewrite the selection model \eqref{eq1.2} as follows:
\begin{linenomath*}\begin{equation}\label{1.sl}
D=d\text{ if and only if } h_{d}(\boldsymbol{X},Z,\boldsymbol{V})=1
\end{equation}\end{linenomath*}
where for all $d\in S(D)$, the function $h_{d}(\boldsymbol{X},Z,\boldsymbol{V})\in\{0,1\}$ and $\sum_{d\in S(D)} h_{d}(\boldsymbol{X},Z,\boldsymbol{V})=1$ a.s. The random vector $\boldsymbol{V}$ contains unobservables that are correlated with the outcome disturbances $\{U_{d}\}$. I assume that for every $(\boldsymbol{x},z)\in S(\boldsymbol{X},Z)$, the indicator function $h_{d}(\boldsymbol{x},z,\cdot)$ is measurable on $S(\boldsymbol{V})$. Note that the covariates in the selection model are the same as those in the outcome function for simplicity. Additional covariates in the outcome function are indeed allowed. In that case, all the analysis in this paper can be viewed as conditional on them. In the rest of this section, I introduce the matching points of an arbitrary $\bm{x}_{0}\in S(\bm{X})$ and related concepts, and  show how the mappings between the outcome functions at these points are identified, and how to use them to generate moment equations.


\subsection{Matching Points and the M-Connected Set}
A matching point $\bm{x}_{m}$ of a given point $\bm{x}_{0}\in S(\bm{X})$ is defined as follows:
\begin{definition}[Matching Point and Matching Pair]\label{1.defMP} A point $\boldsymbol{x}_{m}\in S(\boldsymbol{X})$ is a matching point of $\boldsymbol{x}_{0}\in S(\boldsymbol{X})$ if there exist $z, z'\in S(Z)$ such that the following equations hold for all $d\in S(D)$, $u\in S(U)$ and $\bm{v}\in S(\bm{V})$:
\begin{linenomath*}\begin{align}
h_{d}(\boldsymbol{x}_{m},z',\bm{v})&=h_{d}(\boldsymbol{x}_{0},z,\bm{v}),\label{1.eq3}\\
F_{U_{d}\boldsymbol{V}|\bm{X}Z}(u,\bm{v}|\boldsymbol{x}_{m},z')&=F_{U_{d}\boldsymbol{V}|\bm{X}Z}(u,\bm{v}|\boldsymbol{x}_{0},z)\label{1.eq2.5}
\end{align}\end{linenomath*}
The covariates-instrument combinations $(\boldsymbol{x}_{0},z)$  and $(\boldsymbol{x}_{m},z')$ form a matching pair.
\end{definition}

Note that equation \eqref{1.eq2.5} is satisfied for any $(\bm{x}_{0},z)$ and $(\bm{x}_{m},z')$ if $(\bm{X},Z)$ are jointly independent of $(U_{d},\bm{V})$ for all $d\in S(D)$. The independence of $Z$ is needed in this paper and will be introduced in the following subsections. The independence of the covariates is unnecessary, though commonly assumed in practice. It is worth noting that the conditions only need to be satisfied by the covariates used to generate the matching points. Finally, all these restrictions are indirectly testable under overidentification of $\bm{g}^{*}(\bm{x}_{0},\cdot)$. 

Denote the generalized propensity score $\mathbb{P}(D=d|\boldsymbol{X}=\boldsymbol{x},Z=z)$ at an arbitrary point $(\bm{x},z)$ by $p_{d}(\boldsymbol{x},z)$. The following lemma is a direct consequence of equations \eqref{1.eq3} and \eqref{1.eq2.5}.

\begin{lem}\label{1.thmMEQ}
Suppose $(\bm{x}_{0},z)$ and $(\bm{x}_{m},z')$ are a matching pair. For all $d\in S(D)$ and $u\in S(U)$, 
\begin{linenomath*}\begin{align}
F_{U_{d}|D\boldsymbol{X}Z}(u|d,\boldsymbol{x}_{m},z')&=F_{U_{d}|D\boldsymbol{X}Z}(u|d,\boldsymbol{x}_{0},z)\label{1.eq5}\\
p_{d}(\boldsymbol{x}_{m},z')&=p_{d}(\boldsymbol{x}_{0},z)\label{eq2.4a}
\end{align}\end{linenomath*}
\end{lem}
\begin{proof}
See Appendix \ref{1.appB}.
\end{proof}
Lemma \ref{1.thmMEQ} does not require the instrument or the covariates to be exogenous. Also, no structures for the outcome functions are imposed so far. So the lemma may have a broader application than studied in this paper.  In Section \ref{sec2.1}, I will show under what conditions the mapping from $\bm{g}^{*}(\bm{x}_{0},\cdot)$ to $\bm{g}^{*}(\bm{x}_{m},\cdot)$ can be traced out from equation \eqref{1.eq5}, while equation \eqref{eq2.4a} provides a model-free way to find the matching points, discussed in Section \ref{1.sec3.1}. 

For each $\bm{x}_{0}$, one may find different matching points by pairing $\bm{x}_{0}$ with different values of $Z$; for some $\bm{x}_{m1}\neq \bm{x}_{m2}$, points $(\bm{x}_{0},0)$ and $(\bm{x}_{m1},1)$ and $(\bm{x}_{0},1)$ and $(\bm{x}_{m2},0)$ may form two different matching pairs. Furthermore, a matching point of $\bm{x}_{0}$ may have its own matching points besides $\bm{x}_{0}$. See the following example for an illustration.


\begin{example}[Ordered Choice]\label{1.exOC3}
Suppose $D$ is ordered and there is only one covariate $X$. Let $h_{1}(X,Z,V)=\mathbbm{1}(V<\kappa_{1}+\beta X+\alpha Z)$, $h_{3}(X,Z,V)=\mathbbm{1}(V\geq \kappa_{2}+\beta X+\alpha Z)$, and $h_{2}=1-h_{1}-h_{3}$. Assume $\alpha\cdot\beta \neq 0$, $\kappa_{1}<\kappa_{2}$, and $(X,Z)\independent (U_{d},V)$ for all $d$ where $V$ is continuously distributed on $\mathbb{R}$. For any fixed $x_{0}\in S(X)$, it has the following two matching points by equation \eqref{1.eq3} if both are in $S(X)$:
\begin{linenomath*}\begin{align}
(z=0,z'=1):\beta x_{m1}+\alpha\cdot 1=\beta x_{0}+\alpha\cdot 0\implies x_{m1}=x_{0}-\frac{\alpha}{\beta}\label{1.eqOC1}\\
(z=1,z'=0):\beta x_{m2}+\alpha\cdot 0=\beta x_{0}+\alpha\cdot 1\implies x_{m2}=x_{0}+\frac{\alpha}{\beta}\label{1.eqOC2}
\end{align}\end{linenomath*}
Similarly, each of $x_{m1}$ and $x_{m2}$ also has two matching points: One is $x_{0}$, and the other is $x_{0}-2\frac{\alpha}{\beta}$ and $x_{0}+2\frac{\alpha}{\beta}$ respectively. This process can be continued until the boundaries of $S(X)$ are reached, illustrated in the following figure:
 \begin{figure}[H]
\centering
\includegraphics[width=1\linewidth,trim={2cm 9cm 0 5cm},clip]{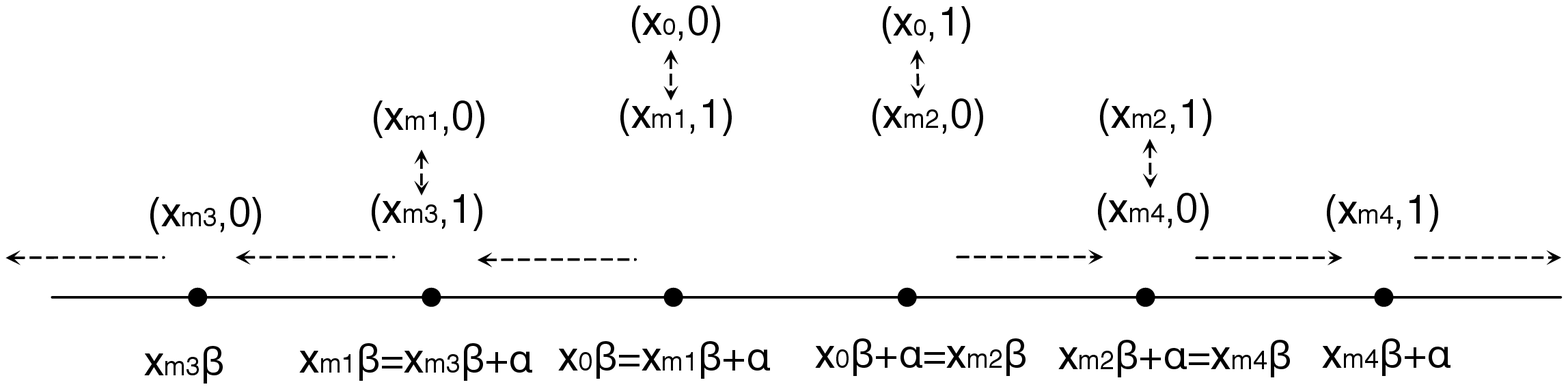}
\caption{The Pyramid of Matching Points}\label{1.fig1}
\end{figure}
\noindent The horizontal axis in Figure \ref{1.fig1} is the value of the single index $x\beta+z\alpha$. Starting from $(x_{0},0)$ and $(x_{0},1)$, $x_{m1}$ and $x_{m2}$ are obtained by solving the equations below the horizontal axis. Repeat this procedure to match $(x_{m1},0)$ with $(x_{m3},1)$ and $(x_{m2},1)$ with $(x_{m4},0)$. Continuing the process, one can expect to see that the dotted points on this axis extend to both directions, until they reach the boundaries of $S(X)$.
\end{example}

In Example \ref{1.exOC3}, $x_{0}$ has two matching points: $x_{m1}$ and $x_{m2}$. Each of them has one extra descendent matching point $x_{m3}$ and $x_{m4}$. These points, in turn, have their matching points. Although they are no longer matching points of $\boldsymbol{x}_{0}$, the conditional distributions of $U_{d}$ at these descendants and at $\boldsymbol{x}_{0}$ paired with appropriate values of the instrument are still linked by repeatedly applying Lemma \ref{1.thmMEQ}. Motivated by this observation, let me introduce a concept that is more general than the matching point.

\begin{definition}[M-Connected Set]\label{1.defMPMC} 
A set $\mathcal{X}_{MC}(\boldsymbol{x}_{0})\subseteq S(\bm{X})$ is called the m-connected set of $\boldsymbol{x}_{0}$ if $\boldsymbol{x}_{0}\in \mathcal{X}_{MC}(\boldsymbol{x}_{0})$ and for any $\boldsymbol{x}\in \mathcal{X}_{MC}(\boldsymbol{x}_{0})$, there exists $\boldsymbol{x}_{1},\boldsymbol{x}_{2},...,\boldsymbol{x}_{k(\boldsymbol{x})}\in  \mathcal{X}_{MC}(\boldsymbol{x}_{0})$ such that $\boldsymbol{x}_{j}$ is a matching point of $\boldsymbol{x}_{j-1}$, $j=1,...,k(\boldsymbol{x})$, and $\boldsymbol{x}$ is a matching point of $\boldsymbol{x}_{k(\boldsymbol{x})}$. Any two points in the m-connected set are said to be \textit{m-connected}.
\end{definition}

In Example \ref{1.exOC3}, any two points in the form of $x_{0}+c\frac{\alpha}{\beta},c\in\mathbb{Z}$, are m-connected provided that both are in $S(X)$. The m-connected set of $x_{0}$ is $\mathcal{X}_{MC}(x_{0})=\{x_{0}+c\frac{\alpha}{\beta},c\in\mathbb{Z}\}\cap S(X)$.

By construction, the m-connected set is the largest subset of $S(\bm{X})$ such that the relationship between the conditional distributions $U_{d}$ at any two elements in it can be established by recursively applying Lemma \ref{1.thmMEQ}. 

\subsection{The Fulfillment of the Order Condition}\label{sec2.1}
Now it is ready to present how to use Lemma \ref{1.thmMEQ} to recover the mapping of $\bm{g}^{*}(\bm{x}_{0},\cdot)$ to $\bm{g}^{*}(\bm{x},\cdot)$ for any $\bm{x}\in\mathcal{X}_{MC}(\bm{x}_{0})$, and how in turn to use the m-connected points to supplement the instrument.
\subsubsection{The Mapping of $\bm{g}^{*}(\bm{x}_{0},\cdot)$ to $\bm{g}^{*}(\bm{x},\cdot)$}
Matching $\bm{g}^{*}(\bm{x},\cdot)$ with $\bm{g}^{*}(\bm{x}_{0},\cdot)$ by Lemma \ref{1.thmMEQ} needs the following assumptions.
\begin{assumption}[Continuity and Monotonicity]\label{1.assCM} For all $\boldsymbol{x}\in S(\boldsymbol{X})$, $\bm{g}^{*}(\boldsymbol{x},\cdot)$ is continuous and strictly increasing.
\end{assumption}

\begin{assumption}[Normalization, Exogeneity, and Full Support]\label{1.assENSP}
For all $\bm{x}\in\mathcal{X}_{MC}(\bm{x}_{0})$ and $d\in S(D)$, conditional on $\bm{X}=\bm{x}$ , i) $U_{d}\sim\textrm{Unif}[0,1]$, ii) $(U_{d},\boldsymbol{V})$ is continuously distributed with $(U_{d},\boldsymbol{V})\independent Z$, and iii) $S(U_{d}|\boldsymbol{V})=S(U_{d})$ .
\end{assumption}

Assumption \ref{1.assCM} regulates the behavior of $\boldsymbol{g}^{*}(\boldsymbol{x},\cdot)$. Continuity and strict monotonicity are two standard requirements in the literature of nonseparable models with a scalar unobservable (e.g. \cite{matzkin2003nonparametric,matzkin2007nonparametric}, \cite{chernozhukov2005iv}, etc.). In addition to constructing moment conditions, I will show that they also deliver nice results for the global uniqueness of the solution to systems of nonlinear equations in Section \ref{1.sec3}.

Assumption \ref{1.assENSP} i) normalizes the distribution of $U_{d}$ for each $d\in S(D)$ conditional on an arbitrary point $\bm{x}$ in $\mathcal{X}_{MC}(\bm{x}_{0})$. Together with Assumption \ref{1.assCM}, the normalization gives $g^{*}_{d}(\bm{x},\cdot)$ an interpretation of the counterfactual quantile function. 

In Assumption \ref{1.assENSP} ii), the requirement of joint independence between $Z$ and $(U_{d},\bm{V})$ is common for triangular models (e.g. \cite{imbens2009identification}), yet it is usually made only conditional on the value of interest $\bm{X}=\bm{x}_{0}$. The assumption here is stronger since joint independence needs to hold conditional on every point in $\mathcal{X}_{MC}(\bm{x}_{0})$. The need arises from using the moment conditions conditional on the m-connected points instead of only on $\bm{X}=\bm{x}_{0}$, as will be seen later in this section. It is noteworthy that when the moment conditions obtained from the m-connected points are more than needed, joint independence can be relaxed to hold only conditional on points in a subset of $\mathcal{X}_{MC}(\bm{x}_{0})$. 

In Assumption \ref{1.assENSP} iii), the full support requirement is useful to identify $\bm{g}^{*}(\bm{x}_{0},\cdot)$ on the entire domain $[0,1]$. It can also be found in related work with a similar focus, for instance \cite{d2015identification}, \cite{torgovitsky2015identification} and \cite{vuong2017counterfactual}.

Assumptions \ref{1.assCM} and \ref{1.assENSP} guarantee that $S(Y|d,\boldsymbol{x},z)=S(Y|d,\bm{x})$, that $S(Y|d,\bm{x})$ is compact, and that the range of $g^{*}_{d}(\boldsymbol{x},\cdot)$ on $[0,1]$ is equal to $S(Y|d,\boldsymbol{x})$ for all $d\in S(D)$, $z\in S(Z)$ and $\bm{x}\in\mathcal{X}_{MC}(\bm{x}_{0})$.  

Under Assumption \ref{1.assCM}, $F_{Y|D\bm{X}Z}\left(g^{*}_{d}(\bm{x},u)|d,\bm{x},z\right)=F_{U_{d}|D\bm{X}Z}\left(u|d,\bm{x},z\right)$ for any $(u,d,\bm{x},z)\in S(U_{D},D,\bm{X},Z)$. Equation \eqref{1.eq5} in Lemma \ref{1.thmMEQ} thus implies the following equation for a matching pair $(\bm{x}_{0},z)$ and $(\bm{x}_{m},z')$ for all $d\in S(D)$ and $u\in S(U_{d}|d,\bm{x}_{0},z)$:
\begin{linenomath*}\begin{equation}
F_{Y|D\bm{X}Z}\left(g^{*}_{d}(\bm{x}_{m},u)|d,\bm{x}_{m},z'\right)=F_{Y|D\bm{X}Z}\left(g^{*}_{d}(\bm{x}_{0},u)|d,\bm{x}_{0},z\right),\label{eq2.9}
\end{equation}\end{linenomath*}
Let $F^{-1}_{Y|DXZ}:[0,1]\mapsto S(Y|D,X,Z)$ be the inverse of the cumulative conditional distribution function of $Y$ whenever it is strictly increasing on the support. Let $q_{d,(\bm{x},z),(\bm{x'},z')}(y)=F^{-1}_{Y|D\boldsymbol{X}Z}\big(F_{Y|D\boldsymbol{X}Z}(y|d,\boldsymbol{x},z)\big|d,\boldsymbol{x}',z'\big)$ for $y\in \mathbb{R}$. Under exogeneity of $Z$ and the full support condition in Assumption \ref{1.assENSP}, $F_{Y|D\bm{X}Z}\left(g^{*}_{d}(\bm{x}_{m},u)|d,\bm{x}_{m},z'\right)$ is strictly increasing in $g^{*}_{d}(\bm{x}_{m},u)$ for all $u\in [0,1]$. Therefore, equation \eqref{eq2.9} implies
\begin{linenomath*}\begin{equation}\label{1.eq15}
g^{*}_{d}(\boldsymbol{x}_{m},u)=q_{d,(\bm{x}_{0},z),(\bm{x}_{m},z')}\left(g^{*}_{d}(\boldsymbol{x}_{0},u)\right)
\end{equation}\end{linenomath*}
for all $d\in S(D)$ and $u\in [0,1]$. The mapping $q_{d,(\bm{x}_{0},z),(\bm{x}_{m},z')}$ is directly identifiable from the population, continuous and increasing on $\mathbb{R}$ and strictly increasing on $S(Y|d,\bm{x}_{0})$. 

More generally, let $\varphi_{d}(g^{*}_{d}(\boldsymbol{x}_{0},u);\bm{x})$ denote the mapping from $g^{*}_{d}(\boldsymbol{x}_{0},u)$ to $g^{*}_{d}(\boldsymbol{x},u)$ for an arbitrary $\bm{x}\in\mathcal{X}_{MC}(\bm{x}_{0})$. This mapping is also identified. Specifically, suppose $\bm{x}$ is m-connected with $\bm{x}_{0}$ such that $(\bm{x}_{0},z_{0})$ and $(\bm{x}_{1},z_{1})$, $(\bm{x}_{1},z_{1}')$ and $(\bm{x}_{2},z_{2})$, ..., $(\bm{x}_{k},z_{k}')$ and $(\bm{x},z)$ are matching pairs, where $z_{0},z_{1},z_{1}',z_{2},...,z_{k}',z\in S(Z)$, then for all $d\in S(D)$ and $u\in [0,1]$,
\begin{linenomath*}\begin{equation}
\varphi_{d}(g^{*}_{d}(\boldsymbol{x}_{0},u);\bm{x})=q_{d,(\bm{x}_{k},z_{k}'),(\bm{x},z)}\circ\cdots\circ q_{d,(\bm{x}_{1},z_{1}'),(\bm{x}_{2},z_{2})}\circ q_{d,(\bm{x}_{0},z_{0}),(\bm{x}_{1},z_{1})}(g_{d}^{*}(\bm{x}_{0},u))
\end{equation}\end{linenomath*}
where $\circ$ denotes function composition. 

The mapping $\varphi_{d}(\cdot;\bm{x})$ has some useful properties. First, it is continuous and increasing on $\mathbb{R}$, and strictly increasing on $S(Y|d,\bm{x}_{0})$. Second, for a matching pair $(\bm{x}_{m},z')$ and $(\bm{x}_{0},z)$, $\varphi_{d}(g^{*}_{d}(\boldsymbol{x}_{0},u);\bm{x}_{m})=q_{d,(\bm{x}_{0},z),(\bm{x}_{m},z')}(g^{*}_{d}(\boldsymbol{x}_{0},u))$. Third, since $(\bm{x}_{0},z)$ and itself form a matching pair by definition, $\varphi_{d}(g^{*}_{d}(\bm{x}_{0},u);\bm{x}_{0})=q_{d,(\bm{x}_{0},z),(\bm{x}_{0},z)}(g^{*}_{d}(\boldsymbol{x}_{0},u))=g^{*}_{d}(\boldsymbol{x}_{0},u)$. 
\subsubsection{The Moment Conditions}
With the traced out mapping from $g^{*}_{d}(\bm{x}_{0},\cdot)$ to $g^{*}_{d}(\bm{x},\cdot)$ for each $d\in S(D)$ and $\bm{x}\in \mathcal{X}_{MC}(\bm{x}_{0})$, moment conditions can be constructed.

\begin{assumption}[Rank Similarity]\label{1.assRSS} Conditional on any $\bm{x}\in\mathcal{X}_{MC}(\bm{x}_{0})$, $\{U_{d}\}$ are identically distributed conditional on $\bm{V}$.
\end{assumption}

Assumption \ref{1.assRSS} is adopted from \cite{chernozhukov2005iv} to handle $d$-dependent outcome disturbances. It is worth emphasizing that Assumption \ref{1.assRSS} is not needed to identify the mapping $\varphi_{d}$.
\begin{prop}[Moment Condition]\label{1.prop2}  Under Assumptions \ref{1.assCM} to \ref{1.assRSS}, the following equation holds for all $z\in\{0,1\}$, $\bm{x}\in \mathcal{X}_{MC}(\bm{x}_{0})$ and $u\in[0,1]$,
\begin{linenomath*}\begin{equation}\label{1.eq2}
\sum_{d\in S(D)}p_{d}(\boldsymbol{x},z)\cdot F_{Y|D\boldsymbol{X}Z}(\varphi_{d}(g^{*}_{d}(\boldsymbol{x}_{0},u);\bm{x})|d,\boldsymbol{x},z)=u
\end{equation}\end{linenomath*}
\end{prop} 
\begin{proof}
See Appendix \ref{1.appB}.
\end{proof}

Proposition \ref{1.prop2} generalizes Theorem 1 in \cite{chernozhukov2005iv}. To identify $\bm{g}^{*}(\bm{x}_{0},u)$, they only condition on $\bm{X}=\bm{x}_{0}$. In equation \eqref{1.eq2}, this is the case when $\bm{x}=\bm{x}_{0}$ because $\varphi_{d}(g^{*}_{d}(\boldsymbol{x}_{0},u);\bm{x}_{0})=g^{*}_{d}(\boldsymbol{x}_{0},u)$, and the equation becomes
\begin{linenomath*}\begin{equation*}
\sum_{d\in S(D)}p_{d}(\boldsymbol{x}_{0},z)\cdot F_{Y|D\boldsymbol{X}Z}(g^{*}_{d}(\boldsymbol{x}_{0},u)|d,\boldsymbol{x}_{0},z)=u
\end{equation*}\end{linenomath*}
As the generalized propensity scores and the conditional cumulative distribution functions are directly identified, two moment conditions are available for each $u\in [0,1]$ by setting $z=0$ and $1$, but there are three unknowns. 

Now with the matching points and more generally, the m-connected points, equation \eqref{1.eq2} induces a larger system of equations for $\bm{g}^{*}(\bm{x}_{0},u)$. The number of the equations is determined by the sizes of $S(Z)$ and $\mathcal{X}_{MC}(\bm{x}_{0})$. As $(\bm{X},Z)=(\bm{x}_{0},0)$ and $(\bm{x}_{0},1)$ already provide two moment equations, the order condition is fulfilled as long as one nontrivial matching point of $\bm{x}_{0}$ exists. More discussion on the number of the matching points and the size difference of $S(D)$ and $S(Z)$ can be found in Appendix \ref{1.appA}. With more matching points and the m-connected points, we may have overidentification, and the rank condition introduced in Section \ref{1.sec3} for global identification is more likely to be satisfied.

It is worth noting that although each m-connected point (including $\bm{x}_{0}$ by definition) can induce two moment equations with $z=0,1$, in the four equations generated by two adjacently m-connected points (i.e., one is a matching point of the other), one equation is redundant. For example, suppose $(\bm{x}_{0},0)$ and $(\bm{x}_{m},1)$ are a matching pair. Equation \eqref{1.eq2} holds for $(\bm{x},z)=(\bm{x}_{0},0),(\bm{x}_{0},1),(\bm{x}_{m},0),(\bm{x}_{m},1)$, but it can be checked that the equations at $(\bm{x}_{0},0)$ and $(\bm{x}_{m},1)$ are identical. So in Figure \ref{1.fig1} (Example \ref{1.exOC3}), ten covariate-instrument combinations are available to build moment equations for $\bm{g}^{*}(\boldsymbol{x}_{0},u)$, but four of them are redundant (one in each pair of the arrow-connected points). 

Finally, utilizing the matching points is useful even when $S(D)=S(Z)$; the outcome function is then overidentified so that the validity of the instrument is testable together with the validity of the matching points.

\subsection{Finding the Matching Points}\label{1.sec3.1}
So far, I assumed that the matching points existed and were known. In this subsection, I discuss the existence of the matching points and how to find them.

First, equation \eqref{eq2.4a} in Lemma \ref{1.thmMEQ} provides a statistical implication for a matching point. For $\bm{x}_{m}\in S(\bm{X})$ to be a matching point of $\bm{x}_{0}$, there must exist $z,z'\in S(Z)$ such that
\begin{linenomath*}\begin{equation}\label{1.eq11}
\big(p_{1}(\boldsymbol{x}_{m},z')-p_{1}(\boldsymbol{x}_{0},z)\big)^{2}+\big(p_{2}(\boldsymbol{x}_{m},z')-p_{2}(\boldsymbol{x}_{0},z)\big)^{2}=0
\end{equation}\end{linenomath*}
The generalized propensity scores at $d=3$ are not included because they are matched automatically under \eqref{1.eq11}. Hence, for a given $\bm{x}_{0}$, if there does not exist $\bm{x}_{m},z$ and $z'$ that satisfy equation \eqref{1.eq11}, no matching point exists. 

The existence of a solution to equation \eqref{1.eq11} depends on how much $\boldsymbol{X}$ can affect the generalized propensity scores at $z'$ and how large $S(\bm{X})$ is. For instance, if the generalized propensity scores $(p_{1}(\bm{X},z'),p_{2}(\bm{X},z'))$ have full support, i.e., $(p_{1}(\cdot,z'),p_{2}(\cdot,z')):S(\boldsymbol{X})\mapsto [0,1]\times [0,1]$ is surjective, then a solution always exists. As another example, in Example \ref{1.exOC3}, $x_{0}\pm \alpha/\beta$ are two matching points if they are in $S(X)$. So $S(X)$ needs to be sufficiently large and/or $\alpha/\beta$ is small. The latter implies that the effects of $X$ on the generalized propensity scores are large compared to $Z$.

Conversely, not all the solutions to equation \eqref{1.eq11} are necessarily to be matching points. Sufficient conditions for both equations \eqref{1.eq3} and \eqref{1.eq2.5} in Definition \ref{1.defMP} to hold for all the solutions to \eqref{1.eq11} are that i) $(\bm{X},Z)$ enters the selection model only via the generalized propensity scores, and ii) $(\bm{X},Z)\independent (U_{d},\bm{V})$ for all $d\in S(D)$. 

Condition i) is often imposed in the literature on local instrumental variable (LIV) and marginal treatment effect (MTE) \citep{heckman1999local, heckman2001policy,heckman2005structural,heckman2006understanding}. Typically only selection models that are separable in $\bm{V}$ satisfy the condition. Appendices \ref{1.appA} and \ref{1.appC} provide some examples.

Note that the purpose of condition i) in this paper is different from that in the MTE and LIV literature. Here it guarantees that the selection model can be matched via the generalized propensity scores. Yet in the LIV and MTE literature, the instruments are continuous and the condition is imposed to obtain \textit{index sufficiency}, that is, for any $d\in S(D)$ and $y\in S(Y|d,\bm{X})$, $F_{Y|D\bm{X}Z}(y|d,\bm{X},Z)=F_{Y|D\bm{Xp}}\left(y|d,\bm{X},\bm{p}(\bm{X},Z)\right)$ a.s. (\cite{heckman2005structural}, p.678). Index sufficiency holds trivially when $Z$ is binary as in this paper's setup, provided that $\bm{p}(\bm{X},z)\neq \bm{p}(\bm{X},z')$ a.s. This is because in this case, $\bm{p}(\bm{X},Z)$ is a one-to-one function of $Z$ given $\bm{X}$. Hence, index sufficiency itself does not have identification power here.

When condition i) or ii) does not hold, for instance $\bm{X}$ is correlated with $(U_{d},\bm{V})$ for some $d\in S(D)$ or the selection model is nonseparable in $\bm{V}$, matching points may still exist because both requirements in Definition \ref{1.defMP} are local, yet conditions i) and ii) impose global restrictions on the selection model and the dependence of $(U_{d},\bm{V})$ on $(\bm{X},Z)$. Since the matching points form a subset of the solutions to equation \eqref{1.eq11}, one can treat the solutions as candidates for the matching points and obtain candidates for the m-connected points similarly. As long as the number of the moment equations from the instrument and such candidates is greater than $|S(D)|$, whether the candidates are truly matching points or m-connected points is testable.

\section{Identification}\label{1.sec3}
The fulfilled order condition makes identification of the outcome function $\bm{g}^{*}(\bm{x}_{0},\cdot)$ possible. This section provides a new result on the uniqueness of the solution to the system of nonlinear equations characterized by Propositions \ref{1.prop2}. The result relies on weaker conditions than those commonly used in the Hadamard-type global inverse function theorems. Since the moment conditions obtained by the nonparametric quantile IV approach with rich instruments are special cases of the ones obtained in this paper, this new result also applies there.

Proposition \ref{1.prop2} shows that for each $u\in [0,1]$, $\bm{g}^{*}(\bm{x}_{0},u)$ solves a system of nonlinear equations. Unlike identification of nonseparable models with a continuous $D$  \citep{chernozhukov2007instrumental,chen2014local}, here we do not face the ill-posed problem due to the discreteness of $D$. Nonetheless, establishing global identification of $\bm{g}^{*}(\bm{x}_{0},u)$ is still demanding. The Jacobian matrix of the nonlinear equation system being full rank at $\boldsymbol{g}^{*}(\boldsymbol{x}_{0},u)$ only implies local identification of $\boldsymbol{g}^{*}(\boldsymbol{x}_{0},u)$, and stronger high level conditions are required for global identification \citep{chernozhukov2005iv}.

However, the above approach with $u$ fixed does not exploit the structures of $\bm{g}^{*}(\bm{x}_{0},\cdot)$ as a function as well as the structures of the moment equations. By construction, $\bm{g}^{*}(\bm{x}_{0},\cdot)$ and the function $F_{Y|D\bm{X}Z}(\varphi_{d}(\cdot;\bm{x})|d,\bm{x},z)$ are continuous and strictly increasing on $[0,1]$ and on $S(Y|d,\bm{x}_{0})$ respectively. In this section, I show that with these properties, local identification of $\bm{g}^{*}(\bm{x}_{0},u)$ at every $u\in [0,1]$ guarantees global identification of $\boldsymbol{g}^{*}(\boldsymbol{x}_{0},\cdot)$ in the class of monotonic functions. This notion of identification is from a solution path perspective, common in differential equations and defined as follows.
\begin{definition}[Solution Path]
For an interval $\mathcal{U}\subseteq\mathbb{R}$ and a system of equations $\boldsymbol{M}(\boldsymbol{y},u)=\boldsymbol{0}$ where $\boldsymbol{y}$ is a real vector and $u\in\mathcal{U}$, a solution path $\boldsymbol{y}^{*}(\cdot)$ is a function on $\mathcal{U}$ such that $\boldsymbol{M}(\boldsymbol{y}^{*}(u),u)=\boldsymbol{0}$ for all $u\in\mathcal{U}$. 
\end{definition}
\begin{lem}\label{1.lemUnq}
Let $\mathcal{Y}\subseteq \mathbb{R}^{K}$, $\mathcal{U}\subseteq\mathbb{R}$ be a compact interval,  and $\bm{M}(\cdot,\cdot):\mathcal{Y}\times \mathcal{U}\mapsto \mathbb{R}^{L}$ be continuously differentiable. Suppose there exists a continuous and weakly increasing function $\bm{y}^{*}:\mathcal{U}\mapsto\mathcal{Y}$ such that $\bm{M}(\bm{y}^{*}(u),u)=\bm{0}$ for all $u\in\mathcal{U}$ and $\bm{y}^{*}(u^{*})=\bm{c}$ for some $u^{*}\in \mathcal{U}$ and $\bm{c}\in\mathcal{Y}$. If $\bm{M}(\cdot,u)$ is strictly increasing in each argument and its Jacobian matrix at $\bm{y}^{*}(u)$, $\nabla\bm{M}(\bm{y}^{*}(u),u)$, is full rank for all $u\in\mathcal{U}$, then $\bm{y}^{*}$ is the unique weakly increasing solution path that passes through $(u^{*},\bm{c})$.
\end{lem}
\begin{proof}
See Appendix \ref{1.appB}.
\end{proof}
\begin{rem} The lemma also holds for a system of equations with a decreasing solution path: If $\bm{y}$ is decreasing, let $\tilde{\bm{M}}((-\bm{y}),u)\equiv-\bm{M}(-(-\bm{y}),u)$, and thus $-\bm{y}$ is increasing and $\tilde{\bm{M}}(\cdot,u)$ as a function of $-\bm{y}$ is strictly increasing in every argument. Lemma \ref{1.lemUnq} thus applies. Similarly, $\bm{M}(\cdot,u)$ can be strictly decreasing as well.
\end{rem}

The proof of Lemma \ref{1.lemUnq} is in Appendix \ref{1.appB}. Here let me provide some heuristics to highlight the key roles played by monotonicity and continuity. Suppose $u^{*}$ is in the interior of $\mathcal{U}$ and there exists another increasing solution path $\tilde{\bm{y}}$ with $\tilde{\bm{y}}(u^{*})=\bm{c}$ but $\tilde{\bm{y}}(u)\neq \bm{y}^{*}(u)$ for all $u$ in some interval right to $u^{*}$. Then $\tilde{\bm{y}}(\cdot)$ cannot be continuous at $u^{*}$, otherwise there must exist some $u>u^{*}$ such that $\tilde{\bm{y}}(u)$ is close enough to $\bm{y}^{*}(u)$ that violates the local uniqueness of the solution implied by the full rank and continuous Jacobian. Consequently, $\tilde{\bm{y}}$ must jump up at $u^{*}$ since it is increasing. However, this is again not possible because otherwise, by continuity and monotonicity of $\bm{M}(\cdot,u^{*})$, $\bm{M}$ would jump up as well and thus the equation cannot hold at $\tilde{\bm{y}}(u^{*})$.

Although the uniqueness only holds among functions passing through the same point, this condition can be trivially satisfied in some special cases. For instance, let $\underline{u}$ be the lower boundary of $\mathcal{U}$. Suppose $\mathcal{Y}$ is the product of compact intervals and each component in $\bm{y}^{*}(\underline{u})$ equals the lower boundary of the corresponding interval, then all possible increasing solution paths $\tilde{\bm{y}}$ must satisfy $\tilde{\bm{y}}(\underline{u})=\bm{y}^{*}(\underline{u})$ because $\tilde{\bm{y}}(\underline{u})$ must be no smaller than $\bm{y}^{*}(\underline{u})$ to be in $\mathcal{Y}$, while if its greater than $\bm{y}^{*}(\underline{u})$, the system of equations cannot hold at $(\tilde{\bm{y}}(\underline{u}),\underline{u})$ because $\bm{M}(\cdot,\underline{u})$ is strictly increasing. This is indeed the case in this paper, as will be seen later in this section.

Lemma \ref{1.lemUnq} shows that monotonicity and continuity simplify the sufficient conditions usually required for the global uniqueness of a solution at a fixed $u$  (see \cite{ambrosetti1995primer} for variants of Hadamard's theorem). Here, the Jacobian matrix is just required to be full rank along the unique solution path, which only guarantees the local uniqueness of the solution at each fixed $u$. The lemma thus says that the local uniqueness of the solution pointwise in $u$ implies the global uniqueness of a monotonic solution path. Finally, the result holds among a class of functions where discontinuous functions are allowed. This is crucial to obtain the other results in this section.

Now let us turn to global identification of $\bm{g}^{*}(\bm{x}_{0},\cdot)$. Let $\mathcal{Z}(\boldsymbol{x}_{0})\equiv \mathcal{X}_{MC}(\boldsymbol{x}_{0})\times S(Z)$. For any three points $\tilde{\boldsymbol{z}}_{1},\tilde{\boldsymbol{z}}_{2},\tilde{\boldsymbol{z}}_{3}\in\mathcal{Z}(\bm{x}_{0})$, let $\Psi(\boldsymbol{g}^{*}(\boldsymbol{x}_{0},u);\tilde{\boldsymbol{z}}_{1},\tilde{\boldsymbol{z}}_{2},\tilde{\boldsymbol{z}}_{3}))$ denote the $3\times 1$ vector by stacking the left hand side of equation \eqref{1.eq2} evaluated at theses points respectively. For example, the $k$-th component in $\Psi$ is $\sum_{d=1}^{3}p_{d}(\tilde{\boldsymbol{z}}_{k})\cdot F_{Y|D\boldsymbol{X}Z}(\varphi_{d}(\cdot)|d,\tilde{\boldsymbol{z}}_{k})$. Denote the vector $(u,u,u)'$ by $\bm{u}$. Then $\boldsymbol{g}^{*}(\boldsymbol{x}_{0},\cdot)$ is one solution path to $\boldsymbol{M}(\boldsymbol{y},u)\equiv \Psi(\boldsymbol{y};\tilde{\boldsymbol{z}}_{1},\tilde{\boldsymbol{z}}_{2},\tilde{\boldsymbol{z}}_{3})-\boldsymbol{u}=\bm{0}$ on $[0,1]$. Let $\mathcal{G}$ be the set of all increasing functions defined on $[0,1]$:
\begin{linenomath*}\begin{equation}
\mathcal{G}\equiv\{\boldsymbol{g}:[0,1]\mapsto \mathbb{R}^{3}\text{ and is weakly increasing}\}\label{eqG}
\end{equation}\end{linenomath*}
The following theorem provides sufficient conditions that guarantee global identification of $\boldsymbol{g}^{*}(\bm{x}_{0},\cdot)$ in $\mathcal{G}$. 
\begin{thm}[Global Identification of $\bm{g}^{*}(\bm{x}_{0},\cdot)$]\label{1.thmIDNSP}
Under Assumptions \ref{1.assCM} to \ref{1.assRSS}, if there exist $\tilde{\boldsymbol{z}}_{1},\tilde{\boldsymbol{z}}_{2},\tilde{\boldsymbol{z}}_{3}\in \mathcal{Z}(\boldsymbol{x}_{0})$ such that $\Psi(\cdot;\tilde{\boldsymbol{z}}_{1},\tilde{\boldsymbol{z}}_{2},\tilde{\boldsymbol{z}}_{3})$ is continuously differentiable on $\prod_{d=1}^{3}S(Y|d,\boldsymbol{x}_{0})$, and that its Jacobian matrix at $\boldsymbol{g}^{*}(\boldsymbol{x}_{0},u)$ is full rank for all $u\in[0,1]$, then $\boldsymbol{g}^{*}(\boldsymbol{x}_{0},\cdot)$ is the unique solution path (up to $u=0,1$) to $\Psi(\cdot;\tilde{\boldsymbol{z}}_{1},\tilde{\boldsymbol{z}}_{2},\tilde{\boldsymbol{z}}_{3})-\boldsymbol{u}=0$ in $\mathcal{G}$. 
\end{thm}
\begin{proof}
See Appendix \ref{1.appB}.
\end{proof}
\begin{rem} The conditioning points $\tilde{\bm{z}}_{1},\tilde{\bm{z}}_{2}$ and $\tilde{\bm{z}}_{3}$ do not necessarily include $(\boldsymbol{x}_{0},z)$ and $(\boldsymbol{x}_{0},z')$. It is possible to use any points in $\mathcal{Z}(\bm{x}_{0})$ to achieve full rankness. Meanwhile, once $\bm{g}^{*}(\bm{x}_{0},\cdot)$ is identified, $\bm{g}^{*}(\bm{x},\cdot)$ is identified for all $\bm{x}\in\mathcal{X}_{MC}(\bm{x}_{0})$ by $\varphi_{d}(g^{*}_{d}(\bm{x}_{0},u);\bm{x})$.
\end{rem}

The proof of Theorem \ref{1.thmIDNSP} consists of two steps. In the first step, I invoke Lemma \ref{1.lemUnq} to show that the uniqueness holds in a smaller space $\mathcal{G}^{*}\equiv \{\boldsymbol{g}:[0,1]\mapsto \prod_{d=1}^{3}S(Y|d,\boldsymbol{x}_{0})\text{ and is weakly increasing}\}$, a subset of $\mathcal{G}$ defined in equation \eqref{eqG}. This step is straightforward by treating $[0,1]$ as $\mathcal{U}$ and $\prod_{d=1}^{3}S(Y|d,\boldsymbol{x}_{0})$ as $\mathcal{Y}$ in Lemma \ref{1.lemUnq}. Since $\bm{g}^{*}(\bm{x}_{0},\cdot)$ is continuous on the closed interval $[0,1]$, $\prod_{d=1}^{3}S(Y|d,\boldsymbol{x}_{0})$ is compact by Assumption \ref{1.assENSP} and $g^{*}_{d}(\bm{x}_{0},0)$ and $g_{d}^{*}(\bm{x}_{0},1)$ are equal to the lower and the upper boundaries of $S(Y|d,\bm{x}_{0})$ for all $d\in S(D)$. Since $F_{Y|DXZ}\circ\varphi_{d}$ in $\Psi$ is strictly increasing, all possible solutions in $\mathcal{G}^{*}$ at $u=0$ and $1$ must also equal these boundaries to make the system of equations hold at these $u$. All the conditions in Lemma \ref{1.lemUnq} are then satisfied.

In the second step, I show that the uniqueness indeed holds in the larger space $\mathcal{G}$ by exploiting the properties of the cumulative distribution functions in $\Psi$. Outside $S(Y|d,\boldsymbol{x}_{0})$, $F_{Y|DXZ}(\cdot|d,\bm{x}_{0},z)$ is either $0$ or $1$ and equal to the value at the corresponding boundary of $S(Y|d,\boldsymbol{x}_{0})$. Therefore, if there exists a second solution path $\bm{g}(u)$ which may take on values outside $\prod_{d=1}^{3}S(Y|d,\boldsymbol{x}_{0})$, there must also exist a function only taking on values within it (including the boundaries) which yields the same $\Psi$. That function is then in $\mathcal{G}^{*}$, so it is necessarily equal to $\bm{g}^{*}(\bm{x}_{0},\cdot)$. Therefore, $\bm{g}(u)$ has to be equal to $\bm{g}^{*}(\bm{x}_{0},u)$ for all $u\in (0,1)$ and can only take on values outside $\prod_{d=1}^{3}S(Y|d,\boldsymbol{x}_{0})$ at $u=0$ or $1$. See the proof in Appendix \ref{1.appB} for more details.

Allowing the parameter space to contain functions taking on values outside the conditional support $\prod_{d=1}^{3}S(Y|d,\boldsymbol{x}_{0})$ is useful in estimation because then one does not need to accurately estimate $\prod_{d=1}^{3}S(Y|d,\boldsymbol{x}_{0})$ to obtain a consistent estimator of $\bm{g}^{*}(\bm{x}_{0},\cdot)$. For instance, one can focus on the following parameter space 
\begin{linenomath*}\begin{equation}
\mathcal{G}_{0}\equiv \{\boldsymbol{g}:[0,1]\mapsto \prod_{d=1}^{3}S(Y|d)\text{ and is weakly increasing}\}\label{eqG0}
\end{equation}\end{linenomath*}
where $S(Y|d)$ is easier to estimate than $S(Y|d,\bm{x}_{0})$ with an estimator converging much faster due to the discreteness of $D$.

Define
\begin{linenomath*}\begin{equation}
Q_{NSP}(\boldsymbol{g},u)\equiv \left(\Psi\left(\boldsymbol{g}(u);\tilde{\bm{z}}_{1},\tilde{\bm{z}}_{2},\tilde{\bm{z}}_{3}\right)-\boldsymbol{u}\big)'\boldsymbol{W}_{NSP}(u)\big(\Psi\left(\boldsymbol{g}(u);\tilde{\bm{z}}_{1},\tilde{\bm{z}}_{2},\tilde{\bm{z}}_{3}\right)-\boldsymbol{u}\right)\label{eq3.1}
\end{equation}\end{linenomath*}
 where $\boldsymbol{W}_{NSP}(u)$ is positive definite uniformly in $u\in [0,1]$. Theorem \ref{1.thmIDNSP} implies that $\bm{g}^{*}(\bm{x}_{0},\cdot)$ is the unique increasing function such that $\int_{0}^{1}Q_{NSP}(\cdot,u)du=0$. Beyond that, it is important to know if $\int_{0}^{1}Q_{NSP}(\boldsymbol{g}(u),u)du$ is well separated from $0$ when $\boldsymbol{g}(\cdot)$ is well separated from $\boldsymbol{g}^{*}(\bm{x}_{0},\cdot)$, for instance, whether the following inequality holds for any $\delta>0$ and any closed interval $\mathcal{U}_{0}$ in the interior of $[0,1]$:
\begin{linenomath*}\begin{equation}\label{1.eq42}
\inf_{\substack{\boldsymbol{g}\in\mathcal{G}_{0}\\\sup_{u\in \mathcal{U}_{0}}|\boldsymbol{g}(u)-\boldsymbol{g}^{*}(x_{0},u)|\geq \delta}}\int_{0}^{1}Q_{NSP}(\boldsymbol{g}(u),u)du>0
\end{equation}\end{linenomath*}

It can be verified that the infinite dimensional space $\mathcal{G}_{0}$ is not compact under the sup-metric. In general, inequality \eqref{1.eq42} does not necessarily hold when the parameter space is noncompact even under the global uniqueness of $\bm{g}^{*}(\bm{x}_{0},\cdot)$ \citep{chen2007large,chen2012estimation}.  However, the following corollary shows that this is not a concern here. Again, monotonicity and continuity of $\bm{g}^{*}(\bm{x}_{0},\cdot)$ play the central role in it. 

\begin{cor}\label{1.corIDNSP}
Under the conditions in Theorem \ref{1.thmIDNSP}, inequality \eqref{1.eq42} is true.
\end{cor}
\begin{proof}
See Appendix \ref{1.appB}.
\end{proof}


It is noteworthy that since $\varphi_{d}(g^{*}_{d}(\boldsymbol{x}_{0},\cdot);\boldsymbol{x}_{0})=g^{*}_{d}(\boldsymbol{x}_{0},\cdot)$, Theorem \ref{1.thmIDNSP} and Corollary \ref{1.corIDNSP} also apply to the standard nonparametric quantile IV approach when $D$ is discrete with $|S(Z)|\geq |S(D)|$ (for example \cite{chernozhukov2005iv}).

Before closing this section, let me emphasize that global identification in terms of the solution path does not rule out the possibility that at some $u$, the solution to $\Psi(\cdot)=\boldsymbol{u}$ is not unique. This is expected because the conditions required here are much weaker than the sufficient conditions for global invertibility of $\Psi(\cdot)$ on $\Pi_{d=1}^{3} S(Y|d,\boldsymbol{x}_{0})$. Under this weaker notion of identification, one cannot estimate $\bm{g}^{*}(\bm{x}_{0},u)$ for a fixed $u$. In Appendix \ref{secSA2} in the Supplemental Material, I provide an estimator that minimizes the sample analogue of $Q_{NSP}(\cdot,u)$ jointly at multiple nodes of $u$ in $[0,1]$ under a monotonicity constraint. The number of the nodes needs to grow to infinity slowly with the sample size.

\section{A Special Case: The Separable Model}\label{1.secSP}
In some applications, the outcome disturbance $U_{d}$ may be additively separable. In this special case, identification results can be obtained under weaker conditions. Formally, suppose for each $d\in S(D)$, $g^{*}_{d}(\bm{X},U_{d})=m^{*}_{d}(\bm{X})+U_{d}$. The outcome equation \eqref{eq1.1} can then be rewritten as
\begin{linenomath*}\begin{equation}\label{1.sp}
Y=\sum_{d\in S(D)}\mathbbm{1}(D=d)\cdot\left(m_{d}^{*}(\boldsymbol{X})+U_{d}\right)\
\end{equation}\end{linenomath*}

Under separability, some requirements for the matching points and exogeneity of $Z$ can be relaxed. First, for $\bm{x}_{m}$ to be a matching point of $\bm{x}_{0}$, equation \eqref{1.eq2.5} in Definition \ref{1.defMP} can be weakened such that for all $d\in S(D)$,
\begin{linenomath*}\begin{equation}\label{1.eq2.4}
\mathbb{E}_{U_{d}|\boldsymbol{V}\boldsymbol{X}Z}(\bm{v},\boldsymbol{x}_{m},z')=\mathbb{E}_{U_{d}|\boldsymbol{V}\boldsymbol{X}Z}(\bm{v},\boldsymbol{x}_{0},z)\text{ and }F_{\boldsymbol{V}|\bm{X}Z}(\bm{v}|\boldsymbol{x}_{m},z')=F_{\boldsymbol{V}|\bm{X}Z}(\bm{v}|\boldsymbol{x}_{0},z)
\end{equation}\end{linenomath*}
Essentially, for $U_{d}$, only mean dependence of $U_{d}$ on $\bm{V}$ are required to be the same conditional on $(\bm{X},Z)=(\bm{x}_{m},z')$ and on $(\bm{x}_{0},z)$.

Then silmilar to Lemma \ref{1.thmMEQ}, the generalized propensity scores at $(\bm{x}_{m},z')$ and $(\bm{x}_{0},z)$ are equal, and the following equation holds for all $d\in S(D)$:
\begin{linenomath*}\begin{equation}
\mathbb{E}_{U_{d}|D\boldsymbol{X}Z}(d,\boldsymbol{x}_{m},z')=\mathbb{E}_{U_{d}|D\boldsymbol{X}Z}(d,\boldsymbol{x}_{0},z)\label{1.eq4}
\end{equation}\end{linenomath*}

With equation \eqref{1.eq4}, one can trace out the mapping from $m_{d}^{*}(\bm{x}_{0})$ to $m_{d}^{*}(\bm{x}_{m})$ for all $d\in S(D)$: Take expectations on both sides of equation \eqref{1.sp} conditional on $(D,\bm{X},Z)=(d,\bm{x},z)$: 
\begin{linenomath*}\begin{equation*}
m_{d}^{*}(\boldsymbol{x})+\mathbb{E}_{U_{d}|D\boldsymbol{X}Z}(d,\boldsymbol{x},z)=\mathbb{E}_{Y|D\boldsymbol{X}Z}(d,\boldsymbol{x},z)
\end{equation*}\end{linenomath*}
Evaluate the above equation at $(\bm{x}_{m},z')$ and $(\bm{x}_{0},z)$ and subtract one from the other. The conditional expectations of $U_{d}$ are canceled out by equation \eqref{1.eq4}. For any $\bm{x},\bm{x}'\in S(\bm{X})$, let $\delta_{d,(\bm{x},z),(\bm{x}',z')}=\mathbb{E}_{Y|D\boldsymbol{X}Z}(d,\boldsymbol{x}',z')-\mathbb{E}_{Y|D\boldsymbol{X}Z}(d,\boldsymbol{x},z)$. Then we have
\begin{linenomath*}\begin{equation}
m_{d}^{*}(\boldsymbol{x}_{m})=m_{d}^{*}(\boldsymbol{x}_{0})+\delta_{d,(\bm{x}_{0},z),(\bm{x}_{m},z')}\label{eq6}
\end{equation}\end{linenomath*}
The term $\delta_{d,(\bm{x}_{0},z),(\bm{x}_{m},z')}$ is directly identified from the population.

More generally, suppose $\bm{x}$ is m-connected with $\bm{x}_{0}$ such that $(\bm{x}_{0},z_{0})$ and $(\bm{x}_{1},z_{1})$, $(\bm{x}_{1},z_{1}')$ and $(\bm{x}_{2},z_{2})$, ..., $(\bm{x}_{k},z_{k}')$ and $(\bm{x},z)$ are matching pairs, where $z_{0},z_{1},z_{1}',...,z_{k}',z\in S(Z)$. Let 
\begin{linenomath*}\begin{equation*}
\Delta_{d}(\bm{x}_{0},\bm{x})=\delta_{d,(\bm{x}_{0},z_{0}),(\bm{x}_{1},z_{1})}+\delta_{d,(\bm{x}_{1},z_{1}'),(\bm{x}_{2},z_{2})}\cdots+\delta_{d,(\bm{x}_{k},z_{k}'),(\bm{x},z)},
\end{equation*}\end{linenomath*}
then we have $m^{*}_{d}(\bm{x})=m^{*}_{d}(\bm{x}_{0})+\Delta_{d}(\bm{x}_{0},\bm{x})$ for all $d\in S(D)$ and $\bm{x}\in\mathcal{X}_{MC}(\bm{x}_{0})$. By construction, $\Delta_{d}(\bm{x}_{0},\bm{x})$ is directly identified. Meanwhile, for a matching pair $(\bm{x}_{0},z)$ and $(\bm{x}_{m},z')$, $\Delta_{d}(\bm{x}_{0},\bm{x}_{m})=\delta_{d,(\bm{x}_{0},z),(\bm{x}_{m},z')}$. In particular, $\Delta_{d}(\bm{x}_{0},\bm{x}_{0})=0$.

Finally, the exogeneity assumption of $Z$ and rank similarity can be relaxed as follows due to separability: For all $\bm{x}\in\mathcal{X}_{MC}(\bm{x}_{0})$ and all $d\in S(D)$,
\begin{assumption}[Normalization and Exogeneity]\label{1.assESP} $\mathbb{E}_{U_{d}|\boldsymbol{X}}(\boldsymbol{x})=0$, $\mathbb{E}_{U_{d}|\boldsymbol{V}\boldsymbol{X}Z}(\boldsymbol{V},\boldsymbol{x},Z)=\mathbb{E}_{U_{d}|\boldsymbol{V}\boldsymbol{X}}(\boldsymbol{V},\boldsymbol{x})$ a.s., and $Z\independent\boldsymbol{V}$ conditional on $\boldsymbol{X}=\boldsymbol{x}$.
\end{assumption}
\begin{assumption}[Mean Similarity]\label{1.assMS}
Conditional on $\bm{X}=\bm{x}$, $\{U_{d}\}$ have the same expectation conditional on $\bm{V}$.
\end{assumption}
Assumption \ref{1.assESP} imposes a location normalization and joint (mean) independence of $Z$ and $(U_{d},\bm{V})$ for each $d\in S(D)$ conditional on any m-connected point. Joint (mean) independence is standard in the literature on triangular models with a separable outcome functions \citep{newey1999nonparametric}. 

Assumption \ref{1.assMS} relaxes rank similarity (Assumption \ref{1.assRSS}) in the nonseparable case. Instead of identical conditional distributions of $\{U_{d}\}$, only the conditional expectations are required to be identical. 

The following proposition characterizes the moment condition for $\bm{m}^{*}(\bm{x}_{0})\equiv (m^{*}_{d}(\bm{x}_{0}))_{d}$. 

\begin{prop}[Moment Condition]\label{1.prop1} Under Assumptions \ref{1.assESP} and \ref{1.assMS}, the following equation holds for all $z\in S(Z)$ and $\bm{x}\in\mathcal{X}_{MC}(\bm{x}_{0})$,
\begin{linenomath*}\begin{equation}\label{1.eq1}
\sum_{d\in S(D)}p_{d}(\boldsymbol{x},z)\cdot m_{d}^{*}(\boldsymbol{x}_{0})=\sum_{d\in S(D)}p_{d}(\boldsymbol{x},z)\cdot \left(\mathbb{E}_{Y|D\boldsymbol{X}Z}(d,\boldsymbol{x},z)-\Delta_{d}(\bm{x}_{0},\bm{x})\right)
\end{equation}\end{linenomath*}
\end{prop} 
\begin{proof}
See Appendix \ref{1.appB}.
\end{proof}
Similar to the nonseparable case, when $\bm{x}=\bm{x}_{0}$, $\Delta_{d}(\bm{x}_{0},\bm{x})=0$ for all $d\in S(D)$ and then equation \eqref{1.eq1} is back to equation (2.2) in \cite{newey2003instrumental} or equation (2.5) in \cite{das2005instrumental}. 
 
Global identification of $\bm{m}^{*}(\bm{x}_{0})$ is straightforward to establish due to linearity of equation \eqref{1.eq1}. Recall the augmented set of the conditioning points $\mathcal{Z}(\boldsymbol{x}_{0})\equiv \mathcal{X}_{MC}(\boldsymbol{x}_{0})\times S(Z)$. Evaluating equation \eqref{1.eq1} at any three points $\tilde{\boldsymbol{z}}_{1},\tilde{\boldsymbol{z}}_{2},\tilde{\boldsymbol{z}}_{3}\in\mathcal{Z}(\boldsymbol{x}_{0})$ yields a system of linear equations of $\bm{m}^{*}(\bm{x}_{0})$ with the coefficient matrix equal to $\Pi_{SP}(\tilde{\boldsymbol{z}}_{1},\tilde{\boldsymbol{z}}_{2},\tilde{\boldsymbol{z}}_{3})\equiv (\bm{p}(\tilde{z}_{1}),\bm{p}(\tilde{z}_{2}),\bm{p}(\tilde{z}_{3}))'$, where $\bm{p}$ is the column vector of the three generalized propensity scores $(p_{1},p_{2},p_{3})'$. Then $\bm{m}^{*}(\bm{x}_{0})$ is globally identified on $\mathbb{R}^{3}$ if 
 \begin{linenomath*}\begin{equation}
 \Pi_{SP}(\tilde{\boldsymbol{z}}_{1},\tilde{\boldsymbol{z}}_{2},\tilde{\boldsymbol{z}}_{3})\text{ is full rank.}
 \end{equation}\end{linenomath*}



Finally, let us discuss the sufficient and necessary conditions for the full rankness of $\Pi_{SP}$. For concreteness, let $(\boldsymbol{x}_{m},z')$ and $(\boldsymbol{x}_{0},z)$ be a matching pair. Let $\tilde{\bm{z}}_{1}=(\bm{x}_{0},z),\tilde{\bm{z}}_{2}=(\bm{x}_{0},z')$ and $\tilde{\bm{z}}_{3}=(\bm{x}_{m},z)$. The moment equation at $(\bm{x}_{m},z')$ is not included as it is identical with the equation at $(\boldsymbol{x}_{0},z)$.

Since the sum of the three columns in $\Pi_{SP}$ is always equal to $\boldsymbol{1}_{3\times 1}$, it can be shown that $\Pi_{SP}$ is full rank if and only if 
\begin{linenomath*}\begin{align}
&\left[p_{1}(\boldsymbol{x}_{m},z)-p_{1}(\boldsymbol{x}_{0},z)\right]\cdot\left[p_{3}(\boldsymbol{x}_{0},z)-p_{3}(\boldsymbol{x}_{0},z')\right]\notag\\
\neq& \left[p_{1}(\boldsymbol{x}_{0},z)-p_{1}(\boldsymbol{x}_{0},z')\right]\cdot \left[p_{3}(\boldsymbol{x}_{m},z)-p_{3}(\boldsymbol{x}_{0},z)\right]\label{1.eq13}
\end{align}\end{linenomath*}
Inequality \eqref{1.eq13} does not hold if both sides are simultaneously zero. This is the case when $Z$ has no effect on $\boldsymbol{p}$ at $\boldsymbol{X}=\boldsymbol{x}_{0}$ or $\boldsymbol{X}\in\{\bm{x}_{0},\bm{x}_{m}\}$ has no effect on $\boldsymbol{p}$ at $Z=z$. Both can be ruled out by a local relevance condition saying that $\boldsymbol{X}$ and $Z$ have nonzero effects on the propensity scores at $(\boldsymbol{x}_{0},z)$. 

Now suppose neither side is $0$. By $\boldsymbol{p}(\boldsymbol{x}_{0},z)=\boldsymbol{p}(\boldsymbol{x}_{m},z')$, inequality \eqref{1.eq13} can be rewritten as
\begin{linenomath*}\begin{equation}\label{1.eq14}
\frac{p_{1}(\boldsymbol{x}_{m},z)-p_{1}(\boldsymbol{x}_{0},z)}{p_{3}(\boldsymbol{x}_{m},z)-p_{3}(\boldsymbol{x}_{0},z)}\neq
\frac{p_{1}(\boldsymbol{x}_{m},z')-p_{1}(\boldsymbol{x}_{0},z')}{p_{3}(\boldsymbol{x}_{m},z')-p_{3}(\boldsymbol{x}_{0},z')}
\end{equation}\end{linenomath*}

Inequality \eqref{1.eq14} generally holds unless the propensity score differences are locally uniform. For example, it can be verified that the inequality holds in the ordered choice model in Example \ref{1.exOC3} for almost all $x_{0}\in S(X)$ and their matching points unless $V$ is (locally) uniformly distributed. In particular, it holds in widely used logit and probit models. 

\section{Estimation}\label{1.sec4}
In this section, I illustrate how to estimate the matching points and the separable model given an i.i.d. sample $(Y_{i},D_{i},\boldsymbol{X}_{i},Z_{i})_{i=1}^{n}$. The estimator for the nonseparable model and its properties are introduced and proved in Appendix \ref{secSA2} in the Supplemental Material. For illustrative purposes, I focus on the following benchmark case to highlight the key features of the estimation procedure: i) $\boldsymbol{X}$ is one dimensional, denoted by $X$, and ii) all the solutions to the generalized propensity score matching equation \eqref{1.eq11} are the matching points. Let $(x_{0},0)$ and $(x_{m1},1)$, and $(x_{0},1)$ and $(x_{m2},0)$ be two matching pairs. This benchmark case is the simplest scenario where both the matching points and the outcome functions are overidentified, allowing me to introduce the overidentification tests. Extending a scalar $X$ to the multivariate case is straightforward. At the same time, whether the generalized propensity score matching is successful and whether a solution to the matching is a matching point are testable by the overidentification tests provided in this section.

\subsection{Estimating the Matching Points}\label{sec4.1}
Let $S_{0}(X)$ be a compact subset in the interior of $S(X)$. For $z\in S(Z)$, let $\hat{\boldsymbol{p}}(\cdot,z)$ be a consistent estimator of the $3\times 1$ vector of the generalized propensity scores $\boldsymbol{p}(\cdot,z)$ uniformly on $S_{0}(X)$. For concreteness, I consider the following Nadaraya-Watson estimator for each component in it. Other common nonparametric estimators of conditional probability would work as well.
\begin{linenomath*}\begin{equation}\label{1.eq23}
\hat{p}_{d}(x,z)=\frac{\sum_{i=1}^{n}\mathbbm{1}(D_{i}=d)K(\frac{X_{i}-x}{h_{x}})\mathbbm{1}(Z_{i}=z)}{\sum_{i=1}^{n} K(\frac{X_{i}-x}{h_{x}})\mathbbm{1}(Z_{i}=z)}
\end{equation}\end{linenomath*}
where $K(\cdot)$ is a kernel function and $h_{x}$ is the bandwidth converging to $0$. 

Assume both matching points are in $S_{0}(X)$. The matching points can be estimated by the sample analogue of equation \eqref{1.eq11}. Denote $\Delta \hat{\boldsymbol{p}}(x_{1},x_{2})\equiv \big(\hat{p}_{1}(x_{1},1)-\hat{p}_{1}(x_{0},0),\hat{p}_{2}(x_{1},1)-\hat{p}_{2}(x_{0},0),\hat{p}_{1}(x_{2},0)-\hat{p}_{1}(x_{0},1),\hat{p}_{2}(x_{2},0)-\hat{p}_{2}(x_{0},1)\big)'$ and its propability limit by $\Delta \boldsymbol{p}(x_{1},x_{2})$. For some weighting matrix $\boldsymbol{W}_{xn}$ with a positive definite probability limit, let $\hat{Q}_{x}(x_{1},x_{2})\equiv \Delta \hat{\boldsymbol{p}}(x_{1},x_{2})'\boldsymbol{W}_{xn}\Delta\hat{\boldsymbol{p}}(x_{1},x_{2})$. Define the estimator $(\hat{x}_{m1},\hat{x}_{m2})$ as any point in $S_{0}^{2}(X)$ such that
for some $a_{n}=o(1)$, 
\begin{linenomath*}\begin{equation}\label{1.eq22}
\hat{Q}_{x}(\hat{x}_{m1},\hat{x}_{m2})\leq \inf_{S_{0}^{2}(X)} \hat{Q}_{x}(x_{1},x_{2})+a_{n}^{2}
\end{equation}\end{linenomath*}
This estimator is adapted from \cite{chernozhukov2007estimation} for partially identified parameters. It is applicable here because the solution to the generalized propensity score matching may not be unique. For simplicity, in this section I focus on the case where $(x_{m1},x_{m2})$ is unique and let $a_{n}=0$. Consistency and asymptotic normality of $(\hat{x}_{m1},\hat{x}_{m2})$ then follow from the standard arguments for (local) GMM estimators under Assumption \ref{1.assREGMP}. The proofs are omitted. The general case with multiple solutions to the matching and $a_{n}>0$ is discussed in Appendix \ref{1.appA2} in the Supplemental Material. 

\begin{assumption}\label{1.assREGMP}
Let $f_{XZ}\equiv \mathbb{P}_{Z|X}\cdot f_{X}$ and $f_{DXZ}\equiv \mathbb{P}_{D|XZ}\cdot f_{XZ}$. For every $d\in S(D)$ and $z\in S(Z)$, $f_{DXZ}(d,\cdot,z)$ and $f_{XZ}(\cdot,z)$ are three times continuously differentiable on $S(X)$ with bounded derivatives, and are bounded away from zero on $S_{0}(X)$. The kernel $K(\cdot)$ is positive, symmetric at $0$, continuously differentiable on $\mathbb{R}$ with bounded derivative, and supported on $[-1,1]$.
\end{assumption}
Under Assumption \ref{1.assREGMP}, it can be shown that $||(\hat{x}_{m1},\hat{x}_{m2})-(x_{m1},x_{m2})||=o_{p}(1)$. For the asymptotic distribution, denote the Jacobian matrix of $\Delta\boldsymbol{p}(x_{1},x_{2})$ evaluated at $x_{m1}$ and $x_{m2}$ by $\partial_{x'}\Delta\boldsymbol{p}(x_{m1},x_{m2})$. Let $\tilde{\boldsymbol{z}}_{1},...,\tilde{\boldsymbol{z}}_{4}$ be $(x_{0},0)$, $(x_{0},1)$, $(x_{m1},1)$ and $(x_{m2},0)$ respectively. Let $\kappa\equiv \int K(v)^{2}dv$ and $\Sigma_{x}=\kappa\begin{pmatrix}
\Sigma_{x1} & \boldsymbol{0}\\
\boldsymbol{0}    & \Sigma_{x2}
\end{pmatrix}$, where for $k=1,2$, 
\[\Sigma_{xk}=\begin{pmatrix}
\frac{p_{1}(\tilde{\boldsymbol{z}}_{k})(1-p_{1}(\tilde{\boldsymbol{z}}_{k}))}{f_{XZ}(\tilde{\boldsymbol{z}}_{k})}+\frac{p_{1}(\tilde{\boldsymbol{z}}_{k+2})(1-p_{1}(\tilde{\boldsymbol{z}}_{k+2}))}{f_{XZ}(\tilde{\boldsymbol{z}}_{k+2})}&
-\frac{p_{1}(\tilde{\boldsymbol{z}}_{k})p_{2}(\tilde{\boldsymbol{z}}_{k})}{f_{XZ}(\tilde{\boldsymbol{z}}_{k})}-\frac{p_{1}(\tilde{\boldsymbol{z}}_{k+2})p_{2}(\tilde{\boldsymbol{z}}_{k+2})}{f_{XZ}(\tilde{\boldsymbol{z}}_{k+2})}\\
& &\\
-\frac{p_{1}(\tilde{\boldsymbol{z}}_{k})p_{2}(\tilde{\boldsymbol{z}}_{k})}{f_{XZ}(\tilde{\boldsymbol{z}}_{k})}-\frac{p_{1}(\tilde{\boldsymbol{z}}_{k+2})p_{2}(\tilde{\boldsymbol{z}}_{k+2})}{f_{XZ}(\tilde{\boldsymbol{z}}_{k+2})} &
\frac{p_{2}(\tilde{\boldsymbol{z}}_{k})(1-p_{2}(\tilde{\boldsymbol{z}}_{k}))}{f_{XZ}(\tilde{\boldsymbol{z}}_{3})}+\frac{p_{2}(\tilde{\boldsymbol{z}}_{k+2})(1-p_{2}(\tilde{\boldsymbol{z}}_{k+2}))}{f_{XZ}(\tilde{\boldsymbol{z}}_{k+2})}\end{pmatrix}.\] 
Using the standard two step GMM procedure, let $\hat{\Sigma}_{x}^{-1}$ be the feasible optimal weighting matrix obtained by a consistent first step estimator (for instance using the identity matrix as the weighting matrix). If $(x_{m1},x_{m2})$ lies in the interior of $S^{2}_{0}(X)$ and $\partial_{x'}\Delta\boldsymbol{p}(x_{m1},x_{m2})$ is nonsingular, $(\hat{x}_{m1},\hat{x}_{m2})$ has the following asymptotic distribution under undersmoothing $h_{x}^{2}\cdot \sqrt{nh_{x}}=o(1)$ and $\sqrt{nh_{x}^{3}}\to\infty$ (guaranteeing the derivatives of $\hat{\bm{p}}(z,\cdot)$ are also uniformly consistent):
\begin{linenomath*}\begin{equation}\label{1.eq31}
\sqrt{nh_{x}}\begin{pmatrix}
\hat{x}_{m1}-x_{m1}\\
\hat{x}_{m2}-x_{m2}
\end{pmatrix}\overset{d}{\to} \mathcal{N}\left(0,\left[\partial_{x'}\Delta\boldsymbol{p}(x_{m1},x_{m2})\Sigma_{x}^{-1}\partial_{x}\Delta\boldsymbol{p}(x_{m1},x_{m2})\right]^{-1}\right)
\end{equation}\end{linenomath*}

Since only one covariate is in the model, each matching point is a scalar that matches two generalized propensity scores. So $(x_{m1},x_{m2})$ is overidentified. The null hypothesis $\mathbb{H}_{0}: \Delta\boldsymbol{p}(x_{m1},x_{m2})=\boldsymbol{0}$ can be tested by the J test 
\begin{linenomath*}
\[\mathcal{J}_{x}=nh_{x} \Delta\hat{\boldsymbol{p}}(\hat{x}_{m1},\hat{x}_{m2})'\widehat{\Sigma}_{x}^{-1}\Delta\hat{\boldsymbol{p}}(\hat{x}_{m1},\hat{x}_{m2})\]
\end{linenomath*} Under the null, $\mathcal{J}_{x}\overset{d}{\to} \chi_{2}^{2}$. In addition to jointly testing whether $(x_{m1},x_{m2})$ solves the propensity score matching equation, one can separately test either one of them when needed. By block-diagonality of the asymptotic variance, $\hat{x}_{m1}$ and $\hat{x}_{m2}$ are asymptotically independent, and thus it is equivalent to estimate the two matching points separately. In each separate problem, the matching point is still overidentified. Let $\mathcal{J}_{x1}=nh_{x}\Delta\hat{\boldsymbol{p}}(\hat{x}_{m1})'(\kappa\widehat{\Sigma}_{x1})^{-1}\Delta\hat{\boldsymbol{p}}(\hat{x}_{m1})$ and
$\mathcal{J}_{x2}=nh_{x}\Delta\hat{\boldsymbol{p}}(\hat{x}_{m2})'(\kappa\widehat{\Sigma}_{x2})^{-1}\Delta\hat{\boldsymbol{p}}(\hat{x}_{m2})$, where $\Delta\hat{\boldsymbol{p}}(\hat{x}_{m1})$ and $\Delta\hat{\boldsymbol{p}}(\hat{x}_{m2})$ are subvectors of $\Delta\hat{\boldsymbol{p}}(\hat{x}_{m1},\hat{x}_{m2})$ containing its first and last two elements respectively. Each test statistic converges in distribution to $\chi_{1}^{2}$ under the null.

\subsection{Estimating the Separable Model}
For the separable model, linearity of the moment conditions \eqref{1.eq1} yields a closed form estimator. Assume $\Pi_{SP}\equiv \left(\bm{p}(x_{0},0),\bm{p}(x_{0},1),\bm{p}(x_{m1},0),\bm{p}(x_{m2},1)\right)'$ is full rank. Let $\hat{\delta}_{(x_{0},0),(\hat{x}_{m1},1)}=\widehat{\mathbb{E}}_{Y|DXZ}(d,\hat{x}_{m1},1)-\widehat{\mathbb{E}}_{Y|DXZ}(d,x_{0},0)$ and $\hat{\delta}_{(x_{0},1),(\hat{x}_{m2},0)}$ be defined similarly. Then with a weighting matrix $\bm{W}_{mn}$ that has a positive definite probability limit, let
\begin{linenomath*}\begin{equation}\label{1.eq24}
\hat{\boldsymbol{m}}(x_{0})
=(\widehat{\Pi}_{SP}'\boldsymbol{W}_{mn}\widehat{\Pi}_{SP})^{-1}\cdot \widehat{\Pi}_{SP}'\boldsymbol{W}_{mn}\widehat{\Phi}
\end{equation}\end{linenomath*}
where $\widehat{\Pi}_{SP}=\left(\hat{\bm{p}}(x_{0},0),\hat{\bm{p}}(x_{0},1),\hat{\bm{p}}(\hat{x}_{m1},0),\hat{\bm{p}}(\hat{x}_{m2},1)\right)'$ and
 \[
\widehat{\Phi}=
\begin{pmatrix}
\sum_{d=1}^{3}\widehat{\mathbb{E}}_{Y|DXZ}(d,x_{0},0)\hat{p}_{d}(x_{0},0)\\
\sum_{d=1}^{3}\widehat{\mathbb{E}}_{Y|DXZ}(d,x_{0},1)\hat{p}_{d}(x_{0},1)\\
\sum_{d=1}^{3}\big[\widehat{\mathbb{E}}_{Y|DXZ}(d,\hat{x}_{m1},0)-\hat{\delta}_{(x_{0},0),(\hat{x}_{m1},1)}\big]\hat{p}_{d}(\hat{x}_{m1},0)\\
\sum_{d=1}^{3}\big[\widehat{\mathbb{E}}_{Y|DXZ}(d,\hat{x}_{m2},1)-\hat{\delta}_{(x_{0},1),(\hat{x}_{m2},0)}\big]\hat{p}_{d}(\hat{x}_{m2},1)
\end{pmatrix}.
\] 
The propensity score estimators are as equation \eqref{1.eq23}. The conditional expectations can be estimated by the Nadaraya-Watson estimator: 
\begin{linenomath*}\begin{equation}\label{1.eq25}
\widehat{\mathbb{E}}_{Y|DXZ}(d,x,z)=\frac{\sum_{i=1}^{n}Y_{i}\mathbbm{1}(D_{i}=d)K(\frac{X_{i}-x}{h_{m}})\mathbbm{1}(Z_{i}=z)}{\sum_{i=1}^{n}\mathbbm{1}(D_{i}=d) K(\frac{X_{i}-x}{h_{m}})\mathbbm{1}(Z_{i}=z)}
\end{equation}\end{linenomath*} 

\begin{assumption}\label{1.assREGSP}
The conditional variance of $Y$, $\mathbb{V}_{Y|DXZ}(d,\cdot,z)$, is finite and continuous on $S(X)$ for each $(d,z)\in S(D,Z)$. The conditional expectation $\mathbb{E}_{Y|DXZ}(d,\cdot,z)$ is three times continuously differentiable on $S(X)$ with bounded derivatives.
\end{assumption}
Under Assumptions \ref{1.assREGMP} and \ref{1.assREGSP}, every component in the right hand side of equation \eqref{1.eq24} is uniformly consistent on $S_{0}(X)$. Under consistency of $(\hat{x}_{m1},\hat{x}_{m2})$, $||\hat{\boldsymbol{m}}(x_{0})-\boldsymbol{m}^{*}(x_{0})||=o_{p}(1)$.

For the asymptotic distribution, I let $h_{m}/h_{x}\to 0$ so that the impacts of estimating $(x_{m1},x_{m2})$ and the generalized propensity scores are negligible. Let $\tilde{\boldsymbol{z}}_{1},...,\tilde{\boldsymbol{z}}_{6}$ be $(x_{0},0)$, $(x_{m1},0)$, $(x_{m1},1)$, $(x_{0},1)$, $(x_{m2},1)$ and $(x_{m2},0)$. Let $\Sigma_{SP}=\kappa(\Sigma_{SP,1}+\Sigma_{SP,2}+\Sigma_{SP,3})$ and $\Sigma_{SP,d}$ ($d=1,2,3$) equal
\begin{linenomath*}\begin{equation*}\small
\setlength\arraycolsep{0pt}
\begin{pmatrix}
\frac{p_{d}(\tilde{\boldsymbol{z}}_{1})^{2}\mathbb{V}_{Y|DXZ}(d,\tilde{\boldsymbol{z}}_{1})}{f_{DXZ}(d,\tilde{\boldsymbol{z}}_{1})}&0& \frac{p_{d}(\tilde{\boldsymbol{z}}_{1})p_{d}(\tilde{\boldsymbol{z}}_{2})\mathbb{V}_{Y|DXZ}(d,\tilde{\boldsymbol{z}}_{1})}{f_{DXZ}(d,\tilde{\boldsymbol{z}}_{1})}&0\\
0 &\frac{p_{d}(\tilde{\boldsymbol{z}}_{4})^{2}\mathbb{V}_{Y|DXZ}(d,\tilde{\boldsymbol{z}}_{4})}{f_{DXZ}(d,\tilde{\boldsymbol{z}}_{4})}&0&\frac{p_{d}(\tilde{\boldsymbol{z}}_{4})p_{d}(\tilde{\boldsymbol{z}}_{5})\mathbb{V}_{Y|DXZ}(d,\tilde{\boldsymbol{z}}_{4})}{f_{DXZ}(d,\tilde{\boldsymbol{z}}_{4})}\\
\frac{p_{d}(\tilde{\boldsymbol{z}}_{1})p_{d}(\tilde{\boldsymbol{z}}_{2})\mathbb{V}_{Y|DXZ}(d,\tilde{\boldsymbol{z}}_{1})}{f_{DXZ}(d,\tilde{\boldsymbol{z}}_{1})}&0 & \sum_{k=1}^{3}\frac{p_{d}(\tilde{\boldsymbol{z}}_{2})^{2}\mathbb{V}_{Y|DXZ}(d,\tilde{\boldsymbol{z}}_{k})}{f_{DXZ}(d,\tilde{\boldsymbol{z}}_{k})}&0\\
0& \frac{p_{d}(\tilde{\boldsymbol{z}}_{4})p_{d}(\tilde{\boldsymbol{z}}_{5})\mathbb{V}_{Y|DXZ}(d,\tilde{\boldsymbol{z}}_{4})}{f_{DXZ}(d,\tilde{\boldsymbol{z}}_{4})}& 0 & \sum_{k=4}^{6}\frac{p_{d}(\tilde{\boldsymbol{z}}_{5})^{2}\mathbb{V}_{Y|DXZ}(d,\tilde{\boldsymbol{z}}_{k})}{f_{DXZ}(d,\tilde{\boldsymbol{z}}_{k})}
\end{pmatrix}.
\end{equation*}\end{linenomath*}\normalsize Let $W_{mn}=\hat{\Sigma}_{SP}^{-1}$. Then if $h_{m}^{2}\cdot \sqrt{nh_{m}}\to 0$ and $\sqrt{nh_{m}^{3}}\to\infty$,

%
\begin{linenomath*}\begin{equation}\label{1.eq39}
\sqrt{nh_{m}}\left(\hat{\boldsymbol{m}}(x_{0})-\boldsymbol{m}^{*}(x_{0})\right) \overset{d}{\to} \mathcal{N}\big(0,(\Pi_{SP}'\Sigma_{SP}^{-1}\Pi_{SP})^{-1}\big)
\end{equation}\end{linenomath*}

Since there are four moment equations, $\bm{m}^{*}(\bm{x}_{0})$ is overidentified. Let the overidentification test statistic be $\mathcal{J}_{SP}=nh_{m}\big(\widehat{\Pi}_{SP}\hat{\boldsymbol{m}}(x_{0})-\widehat{\Phi}_{SP}\big)'\widehat{\Sigma}_{SP}^{-1}\big(\widehat{\Pi}_{SP}\hat{\boldsymbol{m}}(x_{0})-\widehat{\Phi}_{SP}\big)$. The null hypothesis is that all of the four moment conditions hold. Validity of the moment conditions jointly depend on the exogeneity of the instrument and validity of the matching points obtained by generalized propensity score matching. Under the null, it can be verified that $\mathcal{J}_{SP}\overset{d}{\to} \chi_{1}^{2}$ following the standard argument in the GMM framework.

\section{Monte Carlo Simulations}\label{1.sec6}
This section illustrates the finite sample performance of the estimator. The endogenous variable $D$ follows the ordered choice model in Example \ref{1.exOC3}. The instrument $Z$ is binary. Let the outcome variable $Y$ be determined by the following model:
\begin{linenomath*}\begin{align*}
Y=[\gamma_{1}\mathbbm{1}(D=1)+&\gamma_{2}\mathbbm{1}(D=2)+\gamma_{3}\mathbbm{1}(D=3)]\cdot (X+1)+U
\end{align*}\end{linenomath*}
where $X$ is drawn from $\text{Unif}[-3,3]$, $Z$ from a Bernoulli distribution with parameter $0.5$, $[U,V]$ from $\mathcal{N}\left(0,\begin{pmatrix}
1&\rho\\\rho&1
\end{pmatrix} \right)$, and $X\independent Z\independent (U,V)$. 

For the parameters, I set $(\gamma_{1},\gamma_{2},\gamma_{3},\kappa_{1},\kappa_{2})=(1.5,3,3.5,-0.7,0.1)$. The parameters $(\alpha,\beta)$ govern the strength of the instrument and the covariate. In this section I present the results under $(\alpha,\beta)=(0.8,0.4)$. The parameter values are selected for two reasons: i) all the generalized propensity scores are away from 0 so that in the simulated sample, there are a sufficient number of observations to estimate the conditional expectation and the generalized propensity score for each $d$, and ii) $X$ and $Z$ have large effects on the generalized propensity scores. Finally, I set $\rho=0.5$ and $x_{0}=0$. Additional simulation results for small $(\alpha,\beta)$, different $\rho$ and different $x_{0}$ are provided in Appendix \ref{secSB} in the Supplemental Material. 

\begin{table}[t]
\centering
\caption{$x_{0}=0$. $\boldsymbol{m}^{*}(0)=(1.5,3,3.5)$.}\label{1.tab1}
\begin{tabular}{C{1cm} C{1cm} C{1.3cm} C{1.3cm} C{1.3cm} C{1.3cm} C{1.3cm} C{1.3cm} C{1.3cm}}
\hline
\hline
&$n$ & Average & $\text{Bias}^{2}$& Variance & MSE & 90\% & 95\% & 99\% \\
\hline
\multirow {3}{*}{$\hat{m}_{1}(0)$} & 1000 &1.49 &$2\cdot 10^{-4}$&0.12&0.12&90.2\%& 95.4\% & 99\%  \tabularnewline
                                                     &2000 &1.51 & $3\cdot 10^{-5}$  &0.06&0.06& 91.6\%& 96\%&99\%   \tabularnewline
                                                     &3000 &1.49 & $4\cdot 10^{-5}$ &0.04&0.04& 88.4\%&94.4\%&99\% \tabularnewline
\hline 
\multirow {3}{*}{$\hat{m}_{2}(0)$} & 1000 &2.89&0.01&0.78&0.79&93.2\%& 96.2\% & 99\% \tabularnewline
                                                     &2000 &2.88 &0.01&0.37&0.39& 89.6\%& 95\%  &99\%  \tabularnewline
                                                     &3000 &2.92 & 0.01&0.25&0.26& 89.2\%&95.6\% &99.8\%\tabularnewline
\hline 
\multirow {3}{*}{$\hat{m}_{3}(0)$} & 1000 &3.47&0.001&0.22&0.22&92.8\% & 97\%& 98.6\%\tabularnewline
                                                  &2000 &3.49 &$2\cdot 10^{-4}$&0.12&0.12 & 92.2\%& 97.2\%  &98.8\%\tabularnewline
                                                     &3000 &3.49 & $10^{-4}$&0.07&0.07& 92.6\%&97.2\% &99.4\%\tabularnewline
\hline 

\multirow {3}{*}{$\mathcal{J}_{x}$} &1000&             &           &            &            &  90\%  &   95\%   &99.2\%\tabularnewline
                                                     &2000&             &           &            &            &  91.6\%  &   95.8\%   &99.6\%\tabularnewline
                                                     &3000&             &           &            &            &  92.6\%  &   96.8\%   &99.2\%\tabularnewline
\hline 
\multirow {3}{*}{$\mathcal{J}_{SP}$} &1000&             &           &            &  &  91.6\%  &   94.8\%   &98.2\%\tabularnewline
                                                     &2000&             &           &            &            &  93.6\%  &   96.4\%   &98.4\%\tabularnewline
                                                     &3000&             &           &            &            &  91.6\%  &   96.2\%   &98.8\%\tabularnewline
\hline 
\end{tabular}
\end{table}

Table \ref{1.tab1} contains the results for sample size $n=1000$, 2000 and 3000. The number of simulation replications is 500. In each replication, I estimate $(x_{m1},x_{m2})$ using grid search with $500$ grid nodes. The propensity scores and the conditional expectations are estimated as proposed in Section \ref{1.sec4} with the biweight kernel. A smaller bandwidth is chosen when estimating the outcome function than the one used to estimate the matching points. The actual coverage probabilities of the confidence intervals for $\boldsymbol{m}^{*}(x_{0})$ are computed using the asymptotic variance estimator.  The coverage probabilities of the overidentification tests for $(x_{m1},x_{m2})$ and for $\boldsymbol{m}^{*}(x_{0})$ are also reported.

As is shown in Table \ref{1.tab1}, the variance of the estimator dominates in mean squared error (MSE) due to undersmoothing. The actual coverage probabilities are close to the nominal values for both the outcome function and the overidentification tests. 

\section{Empirical Applications}\label{1.sec7}
In this section, I use two empirical examples to illustrate the value and limitations of my approach. Section \ref{sec6.1} presents an application of the return to education. Section \ref{sec6.2} uses an example of the preschool program choice to illustrate when a matching point does not exist.
\subsection{The Return to Education}\label{sec6.1}
In this application, I use the same extract from the 1979 NLS dataset as in \cite{card1995using} and adopt the same instrument. The instrument equals $1$ if an individual grew up near an accredited four year college, and equals $0$ otherwise. The outcome variable $Y$ is the log wage and is assumed to be determined by a separable model. I use the average of parents' years of schooling as the matching covariate $X$, and $x_{0}$ is set equal to $10,11$ and $12$. Finally, I drop the individuals who were still enrolled in a school at the time of the survey. The remaining sample size is 2000.

I construct $D$ by dividing individuals' (or, children's) years of schooling into either two or three categories. In the case of a binary $D$, both my approach and the standard nonparametric IV approach can identify and estimate the outcome function at each level of $D$. Comparing the results obtained by these two approaches, they yield similar point estimates, but my approach attains smaller variances. When $D$ takes on three values, no existing method would work without imposing additional structures on the outcome function. My estimates are coherent with the empirical literature. 

\subsubsection{A Binary $D$}\label{1.sec7.1}
In this subsection, I assume that the latent selection mechanism only yields two outcomes: $D=1$ if an individual's years of schooling is greater than 12 and $D=0$ otherwise. As $|S(D)|=|S(Z)|=2$, at a fixed $x_{0}$, the outcome function $\bm{m}^{*}(x_{0})\equiv ({m}_{0}^{*}(x_{0}),{m}_{1}^{*}(x_{0}))'$ is just-identified by the standard nonparametric IV approach, and overidentified by my approach if a matching point exists.

For each value of $x_{0}$, I try to find two matching points $x_{m1}$ and $x_{m2}$ such that $(x_{0},0)$ and $(x_{m1},1)$ are a matching pair, and $(x_{0},1)$ and $(x_{m2},0)$ are another. As $D$ only takes on two values, each matching point only needs to match one propensity score. The propensity score is estimated as proposed in Section \ref{sec4.1}. The kernel and the bandwidth follow those used in Section \ref{1.sec6}. 

Figure \ref{1.fig4} illustrates the propensity score matching for $x_{0}=12$. The red solid curves in the left and the right panels are $\hat{p}_{0}(x,1)-\hat{p}_{0}(12,0)$ and $\hat{p}_{0}(x,0)-\hat{p}_{0}(12,1)$ respectively. The intersection points of these propensity score differences with zero are the estimated matching points. The patterns for $x_{0}=10$ and $11$ are similar to Figure \ref{1.fig4} thus omitted. Figure \ref{1.fig4} implies that individuals whose parents have more years of schooling are more likely to attain post-high school education. From the values of the matching points, living close to a four year college and parents' education are substitutes for an individual's educational attainment. At $X=12$, an increase of about half a year in parents' education compensates for living far from a college.

 \begin{figure}[t]
\centering
\includegraphics[width=0.65\linewidth,trim={0cm 1cm 0 0cm},clip]{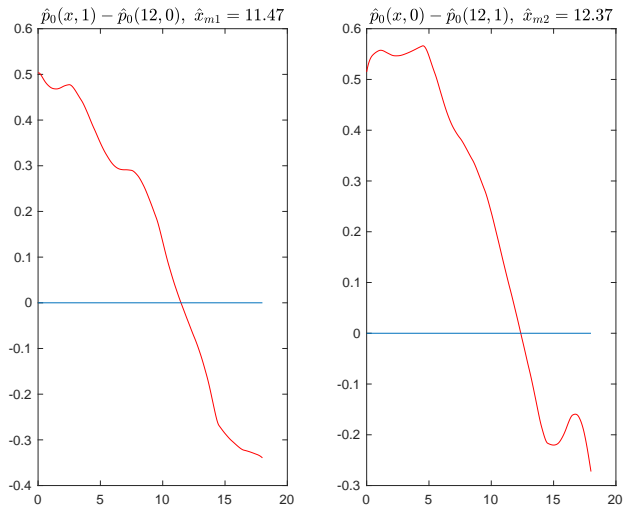}
\caption{Propensity Score Differences: $x_{0}=12$, $|S(D)|=2$}\label{1.fig4}
\end{figure}

Now the outcome function can be estimated using two approaches. The results are shown in Table \ref{1.tab8}. The second row \textit{Matching} indicates whether the matching points are estimated and used. When not using the matching points, $\bm{m}^{*}(x_{0})$ is estimated by the nonparametric IV approach by solving the sample analogue of equation \eqref{1.eq1} with $(x,z)=(x_{0},0)$ and $(x_{0},1)$, and $\Delta_{d}(x_{0},x_{0})=0$. The standard errors in parentheses are computed using the asymptotic variance estimators. The $p$-values of the overidentification test for the outcome function are reported in the last row when applicable.

\begin{table}[h]
\centering
\caption{Estimates of $(m^{*}_{0}(x_{0}),m^{*}_{1}(x_{0}))$}\label{1.tab8}
\begin{tabular}{C{3cm} C{1.5cm} C{1.5cm}C{1.5cm}C{1.5cm}C{1.5cm}C{1.5cm}}
\hline
\hline
& \multicolumn{2}{c}{$x_{0}=10$} & \multicolumn{2}{c}{$x_{0}=11$} & \multicolumn{2}{c}{$x_{0}=12$} \\
\hline
Matching: & \xmark  & \cmark & \xmark & \cmark & \xmark & \cmark\\
\hline

$\hat{m}_{0}(x_{0})$&$5.63$ &$5.64$ &$5.59$&$5.56$& $5.35$ & $5.37$\\
&$(0.28)$ & $(0.17)$ &$(0.33)$&$(0.20)$& $(0.60)$ & $(0.33)$ \\

$\hat{m}_{1}(x_{0})$& $7.15$ & $7.13$ &$6.90$&$6.92$& $6.90$ & $6.89$ \\ 
 &$(0.31)$ & $(0.19)$ &$(0.25)$&$(0.15)$& $(0.32)$ & $(0.18)$\\

\hline
Over-Id $p$ value  & N.A. & $0.98$ & N.A. &$0.99$ &N.A. & $0.80$\\
\hline
\end{tabular}
\end{table}

From Table \ref{1.tab8}, we can make three observations. First, the estimates using the two approaches are very close. It provides evidence that the additional moment conditions brought in by the matching points are valid. The insignificant overidentification tests also suggest that both the instrument and the matching points are valid. Note that the overidentification test is unavailable when using the nonparametric IV approach. Second, the variances are lower using the new approach. Variance reduction is due to the use of more moment conditions. Consequently, the estimated effects can be more significant. For instance, though not reported here, $\hat{m}_{1}(12)-\hat{m}_{0}(12)$ is significant at 10\% level using the IV approach but is significant at 1\% level using my approach. Third, the return is increasing in the level of own education and heterogeneous in parents' education.
\subsubsection{A Three-Valued $D$}\label{1.sec7.2}
Now assume the selection model yields three outcomes. The baseline level of $D$ is still at most high school but recoded by $D=1$. Post-high school education is further divided into two groups: $D=2$ if $12<\text{years of schooling}\leq 15$ (some college), and $D=3$ if $\text{years of schooling}> 15$ (college and above). In this case, no existing method can identify and estimate $\boldsymbol{m}^{*}(x_{0})$ without imposing additional assumptions on it. 

Figure \ref{1.fig5} illustrates the matching points for $x_{0}=12$. Again, the plots for the other values of $x_{0}$ are omitted as the patterns are similar. The solid red curves are $\hat{p}_{1}(x,1)-\hat{p}_{1}(x_{0},0)$ and $\hat{p}_{1}(x,0)-\hat{p}_{1}(x_{0},1)$, while the dashed blue curves are $\hat{p}_{3}(x,1)-\hat{p}_{3}(x_{0},0)$ and $\hat{p}_{3}(x,0)-\hat{p}_{3}(x_{0},1)$. In theory, matching is successful if the solid and the dashed curves intersect with the horizontal line of zero at the same point. From the figure, the intersection points are indeed very close in both panels. The overidentification tests also support successful matching; $\mathcal{J}_{x1}$ and $\mathcal{J}_{x2}$ reported on top of the plots are insignificant in both cases. Finally, since the baseline level here (years of schooling $\leq 12$) is defined in the same way as in the case of a binary $D$, its propensity score is also equal to the previous case. Since this propensity score has to be matched in both cases, the matching points in these cases should be identical. Here the estimates are 11.54 and 12.34, indeed very close to those when $D$ is binary (11.47 and 12.37).
 \begin{figure}[t]
\centering
\includegraphics[width=0.65\linewidth,trim={0cm 1cm 0 0cm},clip]{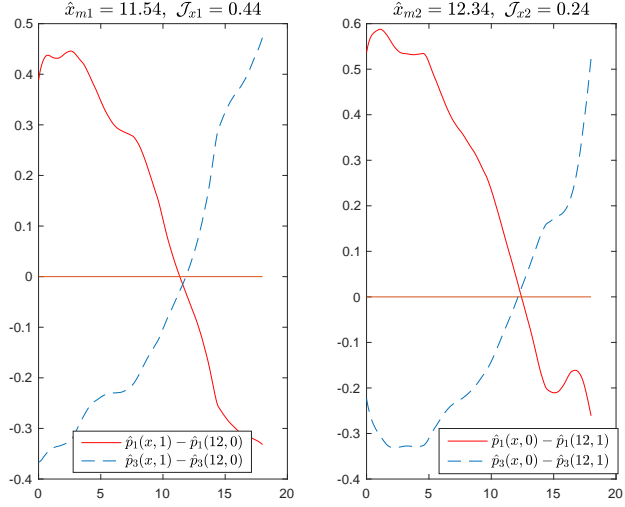}
\caption{Generalized Propensity Score Differences: $x_{0}=12$, $|S(D)|=3$}\label{1.fig5}
\end{figure}

Next, let us turn to $\hat{\bm{m}}(x_{0})$ shown in Table \ref{1.tab9}. In this case both the outcome function and the matching points are overidentified. The $p$-value for each overidentification test is presented in the bottom panel. First, we can see that none of the overidentification tests for $\bm{m}^{*}(x_{0})$ is significant at any reasonable level, similar to Table \ref{1.tab8} for the binary case. Meanwhile, the joint overidentification tests for the matching points are also insignificant, confirming that the single covariate matches all the generalized propensity scores. Second, the return to education is monotonic in the level of own education and heterogeneous in parents' years of schooling, while the difference in returns across adjacent own education levels is decreasing.
\begin{table}[H]
\centering
\caption{Estimates of $(m^{*}_{1}(x_{0}),m^{*}_{2}(x_{0}),m^{*}_{3}(x_{0}))$}\label{1.tab9}
\begin{tabular}{C{3cm}C{2.5cm} C{2.5cm} C{2.5cm}}
\hline
\hline
& $x_{0}=10$& $x_{0}=11$& $x_{0}=12$\\
\hline
$\hat{m}_{1}(x_{0})$  & $5.62$ &$5.56$& $5.33$ \\
& $(0.23)$ &$(0.22)$& $(0.38) $\\
$\hat{m}_{2}(x_{0})$  & $7.03$ &$6.47$ &$6.39$ \\
& $(3.18)$& $(1.39)$& $(1.45)$ \\
$\hat{m}_{3}(x_{0})$ & $7.28$ &$7.32$& $7.31$\\
 & $(2.72)$ & $(1.09)$&$(1.00)$\\
\hline
\multicolumn{4}{c}{Over-Id $p$-value}\\
$\bm{m}^{*}(x_{0})$ & $0.89$ & $0.87$& $0.58$ \\
$(x_{m1},x_{m2})$ & $0.54$ & $0.36$& $0.41$\\
\hline
\end{tabular}
\end{table}

\subsection{When Does the Matching Fail?}\label{sec6.2}
As shown in Lemma \ref{1.thmMEQ}, a matching pair necessarily matches all the generalized propensity scores. Matching may fail if the instrument has dominant effects on the generalized propensity scores such that shifting the covariates cannot compensate for those effects. As my identification strategy treats local variation in the covariates like instruments, one would hope that both such variation and the original instrument have comparably strong effects on selection. When the former is too weak, it loses identification power and this paper's approach would fail.

For illustration, I consider an application on preschool program selection, following \cite{kline2016evaluating} using the Head Start Impact Study (HSIS) dataset. The endogenous variable $D$ takes on three values: participating in Head Start ($h$), participating in another competing preschool program ($c$), and not participating in any preschool programs ($n$). The binary instrument $Z$ is a random lottery granting access to Head Start. Candidates for the covariate $X$ are family income, baseline test score and the centers' quality index.

 \begin{figure}[t]
\centering
\includegraphics[width=0.65\linewidth,trim={0cm 1cm 0 0cm},clip]{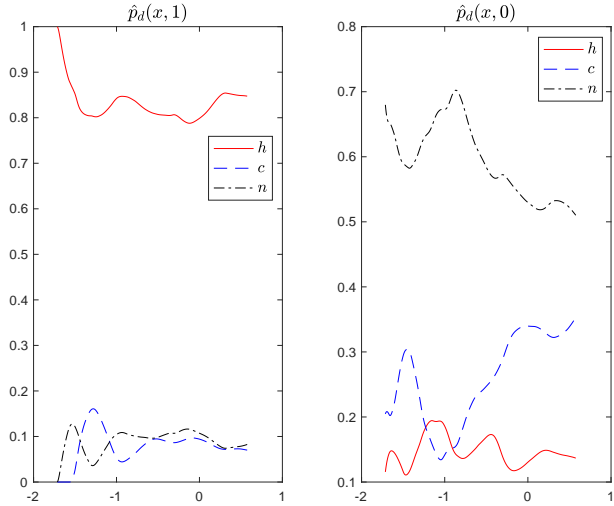}
\caption{Generalized Propensity Scores of Preschool Program Choices}\label{1.fig7}
\end{figure}

Figure \ref{1.fig7} shows the estimated generalized propensity scores using the baseline test score as $X$ and $x_{0}$ is equal to the sample median. Findings under other values of this covariate or using other covariates are similar. We see that if an individual wins the lottery, the probability of attending Head Start is very high, and not much affected by the baseline test score. On the contrary, when not winning the lottery, the individual would most likely not participate in any program, and in particular, the probability of attending Head Start is lowest for almost any baseline test score. A matching point does not exist in this example because shifting $X$ never offsets the dominant effects of $Z$ on the generalized propensity scores.

\section{Relation to the Existing Literature}\label{1.sec8}
\subsection{Triangular Models}\label{1.sec8.1}
Techniques that achieve point identification of triangular models often require the endogenous variable $D$ to be continuous. Different approaches are developed depending on whether $Z$ is also continuous or discrete.

Continuous $D$ and $Z$. The widely used control function approach usually needs a continuous instrument (e.g. \cite{newey1999nonparametric}, \cite{chesher2003identification}, \cite{florens2008identification}, \cite{imbens2009identification}, etc.). This approach allows the outcome heterogeneity to be multidimensional, but the first stage heterogeneity needs to be a scalar with the selection function $h$ strictly increasing in it. 

A continuous $D$ and a binary $Z$. \cite{d2015identification} and \cite{torgovitsky2015identification} show that identification of a nonseparable outcome function increasing in the scalar disturbance can be achieved with a binary $Z$. \cite{gunsilius2018point} extends the model to allow for multidimensional heterogeneity. Compared with my approach, I need to use covariates as an additional source of variation, but I allow for a discrete $D$ and more flexible selection heterogeneity.

Multidimensional $D$ (with a continuous component) and a binary $Z$. \cite{huang2019identification} consider a triangular model with a separable outcome function where there are two endogenous variables and a single instrument that can be binary. In this respect, their paper focuses on a similar problem to mine since in my setup, a discrete $D$ is equivalent to multiple dummy endogenous variables while there is only one binary instrument.  However, like the other papers discussed, they also need one of the endogenous variables to be continuous at least on a subset of its support, with a first stage where the unobservable must be a scalar.

\subsection{Single Equation Approaches}\label{1.sec8.2}
Single equation approaches refer to methods that achieve identification without explicitly relying on a selection model. The typical IV approach for nonparametric identification falls into this category (e.g. \cite{newey2003instrumental}, \cite{das2005instrumental}, \cite{chernozhukov2005iv}, \cite{chernozhukov2007instrumental}, \cite{chen2014local}, ect.). This approach requires $Z$ to have large support. 

\cite{caetano2016identifying} develop a strategy that achieves identification using small-support $Z$ when $D$ is multivalued. Their method does not rely on selection models. Similar to this paper, they utilize the covariates for identification, but the covariates need to be structurally separable from the model. Taking the nonseparable model with a discrete $D$ as an example, they essentially impose a single index structure: $Y=g^{*}_{D}(U)$ and $U=\phi(\boldsymbol{X},U_{0})$, where both $U$ and $U_{0}$ are unobserved. The function $\phi$ is real valued and $g^{*}_{D}(\cdot)$ is strictly increasing.
In contrast, I allow all the covariates to enter the model in arbitrary ways, but a selection model, though very general, is needed. These two approaches are complementary.

\section{Concluding Remarks}\label{1.sec9}
In this paper, I develop a novel approach to use covariates to identify structural outcome functions in a triangular model when the discrete endogenous variable takes on more values than the instrument. This paper illustrates that information on endogenous selection has large identifying power. The generalized propensity scores can provide useful information on the degree of endogeneity indexed by covariates-instrument combination. Across such combinations at which the endogenous variable has the same degree of endogeneity, extrapolation can be made to supplement the insufficient information from the instrument and facilitate identification.

Moving forward, it would be of interest to apply the idea in this paper to other scenarios, such as models with limited dependent variables, extrapolation in regression discontinuity designs, etc. Another direction is to generalize the outcome function by allowing for multidimensional heterogeneity.
\begin{appendices}


\section{The General Case of $|S(D)|>|S(Z)|$ and Multiple Endogenous Variables}\label{1.appA}
Consider the general case of arbitrary $|S(D)|>|S(Z)|$. For a given $\boldsymbol{x}_{0}$, at least $|S(D)|-|S(Z)|$ m-connected points are needed for identification. The size difference is not as formidable as it appears: When $|S(Z)|$ increases, the number of the m-connected points may increase faster. Recall in Figure \ref{1.fig1}, each value of $Z$ induces a branch for m-connected points to grow. Then, for instance, if $|S(Z)|=3$, two matching points of $\bm{x}_{0}$ for each $z\in S(Z)$ may be obtained by solving the following equations:
\begin{linenomath*}\begin{equation*}
\boldsymbol{p}(\boldsymbol{x},z')=\boldsymbol{p}(\boldsymbol{x}_{0},z)\text{ and }\boldsymbol{p}(\boldsymbol{x},z'')=\boldsymbol{p}(\boldsymbol{x}_{0},z)
\end{equation*}\end{linenomath*}
for $z',z''\in S(Z)$. Now that $z$ takes on 3 values, up to 6 matching points may be obtained even if each propensity score matching equation has only one solution. By induction, the number of the matching points can be as many as $|S(Z)|\cdot (|S(Z)|-1)$. With the variation in $Z$ itself, a discrete instrument taking on $|S(Z)|$ values may identify a nonparametric model with $|S(D)|=|S(Z)|^{2}$, instead of $|S(D)|=|S(Z)|$ when using the standard IV approach. Note this is only the number of the matching points. For the m-connected points, the number can be even larger. 
%

The approach in this paper can also be applied to the case of multiple discrete endogenous variables. Suppose there are $M$ discrete endogenous variables $D_{1},...,D_{M}$ in a model. It is equivalent to recode them as one single endogenous variable $D_{0}$. For instance, if $S(D_{1},...,D_{M})=S(D_{1})\times\cdots\times S(D_{M})$, then let $S(D_{0})=\{1,2,...,\Pi_{m=1}^{M}|S(D_{m})|\}$. The matching points may still be found by matching the generalized propensity scores. The following example illustrates it.
\begin{example}[Two Endogenous Variables]\label{egA1}
Let $D_{1},D_{2}\in\{0,1\}$ be two endogenous variables. Suppose they are determined by the following model:
\begin{linenomath*}\begin{align*}
D_{1}&=\mathbbm{1}(\gamma_{1}(\bm{X},Z)\leq V_{1})\\
D_{2}&=\mathbbm{1}(\gamma_{2}(\bm{X},Z)\leq V_{2})
\end{align*}\end{linenomath*}
where the vector of the unobservables $(V_{1},V_{2})$ is continuously distributed on $\mathbb{R}^{2}$. Assume $(\bm{X},Z)\independent (V_{1},V_{2})$. Let $D_{0}=1,2,3,4$, corresponding to $(D_{1},D_{2})=(0,0),(0,1),\newline (1,0),(1,1)$ respectively. Then the selection model can be rewritten as:
\begin{linenomath*}\begin{align*}
h_{1}(\bm{X},Z,V_{1},V_{2})&=\big(1-\mathbbm{1}(\gamma_{1}(\bm{X},Z)\leq V_{1})\big)\cdot \big(1-\mathbbm{1}(\gamma_{2}(\bm{X},Z)\leq V_{2})\big)\\
h_{2}(\bm{X},Z,V_{1},V_{2})&=\big(1-\mathbbm{1}(\gamma_{1}(\bm{X},Z)\leq V_{1})\big)\cdot \mathbbm{1}(\gamma_{2}(\bm{X},Z)\leq V_{2})\\
h_{3}(\bm{X},Z,V_{1},V_{2})&=\mathbbm{1}(\gamma_{1}(\bm{X},Z)\leq V_{1})\cdot \big(1-\mathbbm{1}(\gamma_{2}(\bm{X},Z)\leq V_{2})\big)\\
h_{4}(\bm{X},Z,V_{1},V_{2})&=\mathbbm{1}(\gamma_{1}(\bm{X},Z)\leq V_{1})\cdot \mathbbm{1}(\gamma_{2}(\bm{X},Z)\leq V_{2})
\end{align*}\end{linenomath*}

Now we can verify that whenever $\boldsymbol{p}(\bm{x}_{0},z)=\boldsymbol{p}(\bm{x}_{m},z')$, we have $h_{d}(\bm{x}_{0},z,\cdot,\cdot)=h_{d}(\bm{x}_{m},z',\cdot,\cdot)$ for all $d\in\{1,2,3,4\}$. First, by $p_{1}(\bm{x}_{0},z)=p_{1}(\bm{x}_{m},z')$, we have
\begin{linenomath*}\begin{equation*}
\mathbb{P}\big(V_{1}< \gamma_{1}(\bm{x}_{0},z),V_{2}<\gamma_{2}(\bm{x}_{0},z)\big)=\mathbb{P}\big(V_{1}< \gamma_{1}(\bm{x}_{m},z'),V_{2}<\gamma_{2}(\bm{x}_{m},z')\big)
\end{equation*}\end{linenomath*}
Suppose $\gamma_{1}(\bm{x}_{0},z)\neq \gamma_{1}(\bm{x}_{m},z')$. Without loss of generality, let $ \gamma_{1}(\bm{x}_{0},z)< \gamma_{1}(\bm{x}_{m},z')$. Then $\gamma_{2}(\bm{x}_{0},z)>\gamma_{2}(\bm{x}_{m},z')$ to make the above equation hold. Consequently,
\begin{linenomath*}\begin{equation*}
\mathbb{P}\big(V_{1}\geq  \gamma_{1}(\bm{x}_{0},z),V_{2}<\gamma_{2}(\bm{x}_{0},z)\big)>\mathbb{P}\big(V_{1}\geq  \gamma_{1}(\bm{x}_{m},z'),V_{2}<\gamma_{2}(\bm{x}_{m},z')\big).
\end{equation*}\end{linenomath*}
But this implies $p_{3}(\bm{x}_{0},z)>p_{3}(\bm{x}_{m},z')$, a contradiction. Using a similar argument, it can be shown that $\gamma_{2}(\bm{x}_{0},z)= \gamma_{2}(\bm{x}_{m},z')$ as well. 
\end{example}

\section{More  Examples}\label{1.appC}
More examples are presented where $\bm{p}(\bm{x}_{m},z')=\bm{p}(\bm{x}_{0},z)$ implies $h_{d}(\bm{x}_{m},z',\cdot)=h_{d}(\bm{x}_{0},z,\cdot)$ for all $d\in S(D)$.
\begin{example}[Ordered Choice with Stochastic Thresholds]\label{1.exOC}
This example generalizes the ordered choice model in Example \ref{1.exOC3}. Let $h_{1}(\boldsymbol{X},Z,\bm{V})=\mathbbm{1}(V_{1}\leq \gamma_{1}(\boldsymbol{X},Z))$, $h_{3}(\boldsymbol{X},Z,\bm{V})=\mathbbm{1}(V_{2}> \gamma_{2}(\boldsymbol{X},Z))$, and $h_{2}=1-h_{1}-h_{3}$ where $(V_{1},V_{2})$ are continuously distributed on $\mathbb{R}^{2}$ with $V_{1}<V_{2}$ a.s. Also, assume $\gamma_{1}(\boldsymbol{X},Z)<\gamma_{2}(\boldsymbol{X},Z)$ a.s. This model nests parametric ordered choice models and some nonparametric models, for instance the general ordered choice model in \cite{cunha2007identification}. 

Under $(\bm{X},Z)\independent (V_{1},V_{2})$, we have $p_{1}(\bm{x},z)=F_{V_{1}}(\gamma_{1}(\bm{x},z))$ and $p_{3}(\bm{x},z)=1-F_{V_{2}}(\gamma_{2}(\bm{x},z))$ for any $(\bm{x},z)\in S(\bm{X},Z)$. By strict monotonicity of $F_{V_{1}}$ and $F_{V_{2}}$, $p_{1}$ and $\gamma_{1}$ as well as $p_{2}$ and $\gamma_{2}$ are one-to-one. Hence, whenever $\bm{p}(\boldsymbol{x}_{0},z)=\bm{p}(\boldsymbol{x}_{m},z')$, $\gamma_{1}(\boldsymbol{x}_{0},z)=\gamma_{1}(\boldsymbol{x}_{m},z')$ and $\gamma_{2}(\boldsymbol{x}_{0},z)=\gamma_{2}(\boldsymbol{x}_{m},z')$. 
\end{example}

\begin{example}[Multinomial Choice]\label{1.exMC}
Consider a nonparametric multinomial choice model. Versions of it are considered in \cite{matzkin1993nonparametric}, \cite{heckman2008instrumental}, \cite{lee2018identifying} etc. Let $R_{d}(\boldsymbol{X},Z)+\tilde{V}_{d}$ be the indirect utility of alternative $d$ where $\tilde{V}_{d}$ is an unobserved continuous random variable. Alternative $d$ is selected if $R_{d}(\boldsymbol{X},Z)+\tilde{V}_{d}>R_{-d}(\boldsymbol{X},Z)+\tilde{V}_{-d}$ where the subscript $-d$ refers to any selection other than $d$. Reparameterize the model by letting $V_{1}=\tilde{V}_{2}-\tilde{V}_{1}$, $V_{2}=\tilde{V}_{3}-\tilde{V}_{1}$, $V_{3}=\tilde{V}_{3}-\tilde{V}_{2}$, $\gamma_{1}(\boldsymbol{X},Z)=R_{1}(\boldsymbol{X},Z)-R_{2}(\boldsymbol{X},Z)$ and $\gamma_{2}(\boldsymbol{X},Z)=R_{1}(\boldsymbol{X},Z)-R_{3}(\boldsymbol{X},Z)$. The model can be rewritten as
\begin{linenomath*}\begin{align*}
& D=1 \iff V_{1}<\gamma_{1}(\boldsymbol{X},Z),V_{2}<\gamma_{2}(\boldsymbol{X},Z)\\
& D=2 \iff V_{1}>\gamma_{1}(\boldsymbol{X},Z),V_{3}<\gamma_{2}(\boldsymbol{X},Z)-\gamma_{1}(\boldsymbol{X},Z)\\
& D=3 \iff V_{2}>\gamma_{2}(\boldsymbol{X},Z),V_{3}>\gamma_{2}(\boldsymbol{X},Z)-\gamma_{1}(\boldsymbol{X},Z)
\end{align*}\end{linenomath*}
If $0<p_{d}(\boldsymbol{x}_{0},z)<1$ for all $d$, it can be verified using a similar argument as in Example \ref{egA1} that $\bm{p}(\bm{x}_{m},z')=\bm{p}(\bm{x}_{0},z)$ implies $\gamma_{1}(\bm{x}_{m},z')=\gamma_{1}(\bm{x}_{0},z)$ and $\gamma_{2}(\bm{x}_{m},z')=\gamma_{2}(\bm{x}_{0},z)$.
\end{example}

\section{Proofs of Results in Sections \ref{1.sec2} and \ref{1.sec3}}\label{1.appB}
%

\begin{proof}[Proof of Lemma \ref{1.thmMEQ}]
To save space, I only show equation \eqref{1.eq5} as equation \eqref{eq2.4a} follows a similar argument. For any $d\in S(D)$,
\begin{linenomath*}\begin{align*}
F_{U_{d}|D\boldsymbol{X}Z}(u|d,\boldsymbol{x}_{0},z)=&F_{U_{d}|\boldsymbol{V}\boldsymbol{X}Z}(u|h_{d}(\boldsymbol{x}_{0},z,\boldsymbol{V})=1,\boldsymbol{x}_{0},z)\\
=&\frac{\int_{h_{d}(\boldsymbol{x}_{0},z,\boldsymbol{v})=1}F_{U_{d}|\boldsymbol{V}\boldsymbol{X}Z}(u|\boldsymbol{v},\boldsymbol{x}_{0},z)d\mathbb{P}(\boldsymbol{V}=\boldsymbol{v}|\boldsymbol{X}=\boldsymbol{x}_{0},Z=z)}{\int_{h_{d}(\boldsymbol{x}_{0},z,\boldsymbol{v})=1}d\mathbb{P}(\boldsymbol{V}=\boldsymbol{v}|\boldsymbol{X}=\boldsymbol{x}_{0},Z=z)}\\
=&\frac{\int_{h_{d}(\boldsymbol{x}_{m},z',\boldsymbol{v})=1}F_{U_{d}|\boldsymbol{V}\boldsymbol{X}Z}(u|\boldsymbol{v},\boldsymbol{x}_{m},z')d\mathbb{P}(\boldsymbol{V}=\boldsymbol{v}|\boldsymbol{X}=\boldsymbol{x}_{m},Z=z')}{\int_{h_{d}(\boldsymbol{x}_{m},z',\boldsymbol{v})=1}d\mathbb{P}(\boldsymbol{V}=\boldsymbol{v}|\boldsymbol{X}=\boldsymbol{x}_{m},Z=z')}\\
=&F_{U_{d}|\boldsymbol{V}\boldsymbol{X}Z}(u|h_{d}(\boldsymbol{x}_{m},z',\boldsymbol{V})=1,\boldsymbol{x}_{m},z')\\
=&F_{U_{d}|D\boldsymbol{X}Z}(u|d,\boldsymbol{x}_{m},z')
\end{align*}\end{linenomath*}
where the first and the last equalities follow equation \eqref{1.sl}, the second and the fourth equalities are by definition, and the third equality is by equations \eqref{1.eq3} and \eqref{1.eq2.5}.
\end{proof}
\begin{proof}[Proof of Propositions \ref{1.prop2} and \ref{1.prop1}]
For Proposition \ref{1.prop2}, by $g_{d}^{*}(\bm{x},u)=\varphi_{d}(g_{d}^{*}(\bm{x}_{0},u);\bm{x})$ and $\bm{x}\in\mathcal{X}_{MC}(\bm{x}_{0})$, Assumption \ref{1.assCM} implies that $F_{Y|D\bm{X}Z}(\varphi_{d}(g_{d}^{*}(\bm{x}_{0},u);\bm{x})|d,\bm{x},z)=F_{U_{d}|D\bm{X}Z}(u|d,\bm{x},z)$ for all $d\in S(D)$. By Assumptions \ref{1.assENSP} and \ref{1.assRSS}, for $d\neq d'\in S(D)$, $F_{U_{d}|\boldsymbol{V}\boldsymbol{X}Z}(u|\boldsymbol{v},\boldsymbol{x},z)=F_{U_{d'}|\boldsymbol{V}\boldsymbol{X}Z}(u|\boldsymbol{v},\boldsymbol{x},z)$. Therefore,
\begin{linenomath*}\begin{align*}
F_{U_{d}|D\boldsymbol{X}Z}(u|d,\boldsymbol{x},z)=&F_{U_{d}|\boldsymbol{V}\boldsymbol{X}Z}(u|h_{d}(\boldsymbol{x},z,\boldsymbol{V})=1,\boldsymbol{x},z)\\
=&\frac{\int_{h_{d}(\boldsymbol{x},z,\boldsymbol{v})=1}F_{U_{d}|\boldsymbol{V}\boldsymbol{X}Z}(u|\boldsymbol{v},\boldsymbol{x},z)d\mathbb{P}(\boldsymbol{V}=\boldsymbol{v}|\boldsymbol{X}=\boldsymbol{x},Z=z)}{\int_{h_{d}(\boldsymbol{x},z,\boldsymbol{v})=1}d\mathbb{P}(\boldsymbol{V}=\boldsymbol{v}|\boldsymbol{X}=\boldsymbol{x},Z=z)}\\
=&\frac{\int_{h_{d}(\boldsymbol{x},z,\boldsymbol{v})=1}F_{U_{d'}|\boldsymbol{V}\boldsymbol{X}Z}(u|\boldsymbol{v},\boldsymbol{x},z)d\mathbb{P}(\boldsymbol{V}=\boldsymbol{v}|\boldsymbol{X}=\boldsymbol{x},Z=z)}{\int_{h_{d}(\boldsymbol{x},z,\boldsymbol{v})=1}d\mathbb{P}(\boldsymbol{V}=\boldsymbol{v}|\boldsymbol{X}=\boldsymbol{x},Z=z)}\\
=&F_{U_{d'}|\boldsymbol{V}\boldsymbol{X}Z}(u|h_{d}(\boldsymbol{x},z,\boldsymbol{V})=1,\boldsymbol{x},z)\\
=&F_{U_{d'}|D\boldsymbol{X}Z}(u|d,\boldsymbol{x},z)
\end{align*}\end{linenomath*}
Hence, the left hand side of equation \eqref{1.eq2} is equal to $F_{U_{d}|\bm{X}Z}(u|\bm{x},z)$, which is equal to $u$ under Assumption \ref{1.assENSP}.

For Proposition \ref{1.prop1}, by $m_{d}^{*}(\bm{x})=m_{d}^{*}(\bm{x}_{0})+\Delta_{d}(\bm{x}_{0},\bm{x})$, equation \eqref{1.eq1} is equivalent as $\sum_{d\in S(D)}p_{d}(\boldsymbol{x},z)\cdot m_{d}^{*}(\boldsymbol{x})=\sum_{d\in S(D)}p_{d}(\boldsymbol{x},z)\cdot \mathbb{E}_{Y|D\boldsymbol{X}Z}(d,\boldsymbol{x},z)$, which holds if $\sum_{d\in S(D)}p_{d}(\boldsymbol{x},z)\cdot\mathbb{E}_{U_{d}|D\bm{X}Z}(d,\bm{x},z)=0$. Under Assumption \ref{1.assMS}, it can be verified that $\mathbb{E}_{U_{d}|D\bm{X}Z}(d,\bm{x},z)=\mathbb{E}_{U_{d'}|D\bm{X}Z}(d,\bm{x},z)$ for any $d\neq d'\in S(D)$ following a similar argument as in the proof of Proposition \ref{1.prop2}. Therefore, $\sum_{d\in S(D)}p_{d}(\boldsymbol{x},z)\cdot\mathbb{E}_{U_{d}|D\bm{X}Z}(d,\bm{x},z)=\mathbb{E}_{U_{d}|\bm{X}Z}(\bm{x},z)=0$ by Assumption \ref{1.assESP}.
\end{proof}

\begin{proof}[Proof of Lemma \ref{1.lemUnq}]

Suppose $\boldsymbol{y}^{*}$ is not the unique increasing solution path passing through $(u^{*},\bm{c})$. Let $\tilde{\boldsymbol{y}}$ be another increasing solution path such that $\tilde{\bm{y}}(u^{*})=\bm{c}$. Then there must exist $u_{1}\leq u^{*}\leq u_{2}$ such that $u_{1},u_{2}\in\mathcal{U}$ and $\tilde{\bm{y}}(u)=\bm{y}^{*}(u)$ for all $u\in [u_{1},u_{2}]$. Note that if $u_{1}=u_{2}$, $u=u^{*}$. Let $\bar{u}$ and $\underline{u}$ be the supremum and the infimum of the sets of such $u_{2}$ and $u_{1}$. Since $\mathcal{U}$ is compact, we have $\underline{u},\bar{u}\in\mathcal{U}$. Then there are the following two cases.

Case 1. $\tilde{\boldsymbol{y}}(\cdot)$ is continuous at both $\bar{u}$ and $\underline{u}$. By continuity of $\bm{y}^{*}$ and $\tilde{\bm{y}}$, $\bm{y}^{*}(\bar{u})=\tilde{\bm{y}}(\bar{u})$ and $\bm{y}^{*}(\underline{u})=\tilde{\bm{y}}(\underline{u})$. Therefore, if $[\underline{u},\bar{u}]=\mathcal{U}$, we are done. Otherwise, without loss of generality, suppose $\bar{u}$ is in the interior of $\mathcal{U}$. Since $\tilde{\boldsymbol{y}}$ is monotonic, it has at most countable discontinuities. Thus, there exists a $\bar{u}'>\bar{u}$ where $\bar{u}'$ is also in the interior of $\mathcal{U}$ such that $\tilde{\boldsymbol{y}}$ is continuous on $[\bar{u},\bar{u}')$ and $\tilde{\boldsymbol{y}}(u)\neq \boldsymbol{y}^{*}(u)$ for all $u\in (\bar{u},\bar{u}')$. 

Since the Jacobian matrix $\nabla\bm{M}$ is full rank and continuous at $(\bm{y}^{*}(\bar{u}),\bar{u})$, there exists a neighborhood of $(\boldsymbol{y}^{*}(\bar{u}),\bar{u})$, $\mathcal{N}$, on which $\bm{M}(\cdot,\cdot)$ is one-to-one. Since $\tilde{\bm{y}}$ is continuous on $[\bar{u},\bar{u}')$ and $\tilde{\bm{y}}(\bar{u})=\bm{y}^{*}(\bar{u})$, there must exist $u''\in (\bar{u},,\bar{u}')$ such that $(\tilde{\bm{y}}(u''),u'')\in\mathcal{N}$. Now $\tilde{\boldsymbol{y}}(u'')\neq \boldsymbol{y}^{*}(u'')$ but $\bm{M}(\tilde{\boldsymbol{y}}(u''),u'')=\bm{M}(\boldsymbol{y}^{*}(u''),u'')$, a contradiction. A similar argument can also be found in \cite{ortega1970iterative}\label{1.fn2}, pp. 133-134, \cite{ambrosetti1995primer}, pp. 48-49, and \cite{de2014global}, as an intermediate step to show variants of the Hadamard Theorem.


Case 2. $\tilde{\boldsymbol{y}}(\cdot)$ is discontinuous at $\bar{u}$ or $\underline{u}$. Again, without loss of generality, suppose discontinuity is at $\bar{u}$. Since $\tilde{\bm{y}}$ is increasing, it must be the case that $\lim_{u\searrow \bar{u}}\tilde{\bm{y}}(u)>\lim_{u\nearrow \bar{u}}\tilde{\bm{y}}(u)$, i.e., there is at least one component in $\tilde{\bm{y}}$ jumps up at $\bar{u}$. Since $\bm{M}(\cdot,\cdot)$ is continuous and $\bm{M}(\tilde{\bm{y}}(u),u)=\bm{0}$ for all $u\in\mathcal{U}$, 
\begin{linenomath*}\begin{align*}
\bm{0}=\lim_{u\searrow \bar{u}}\bm{M}(\tilde{\bm{y}}(u),u)=\bm{M}(\lim_{u\searrow \bar{u}}\tilde{\bm{y}}(u),\bar{u})>\bm{M}(\lim_{u\nearrow \bar{u}}\tilde{\bm{y}}(u),\bar{u})=\lim_{u\nearrow \bar{u}}\bm{M}(\tilde{\bm{y}}(u),u)=\bm{0}
\end{align*}\end{linenomath*}
where the inequality holds because $\bm{M}(\cdot,\bar{u})$ is strictly increasing on $\mathcal{Y}$. A contradiction.

Therefore, $\boldsymbol{y}^{*}$ is the unique increasing solution path passing through $(u^{*},\bm{c})$.
\end{proof}

\begin{proof}[Proof of Theorem \ref{1.thmIDNSP}]
Step 1. Uniqueness in $\mathcal{G}^{*}$. Recall that $\mathcal{G}^{*}\subseteq\mathcal{G}$ contains increasing functions whose ranges are contained in $\prod_{d=1}^{3} S(Y|d,\boldsymbol{x}_{0})$. Under Assumptions \ref{1.assCM} and \ref{1.assENSP}, $\Psi(\cdot;\tilde{\bm{z}}_{1},\tilde{\bm{z}}_{2},\tilde{\bm{z}}_{3})$ is strictly increasing and continuously differentiable on $\prod_{d=1}^{3} S(Y|d,\boldsymbol{x}_{0})$. Let the upper and the lower boundaries of $S(Y|d,\bm{x}_{0})$ be $\bar{y}_{d\bm{x}_{0}}$ and $\underline{y}_{d\bm{x}_{0}}$. For any candidate solution path $\bm{g}(\cdot)\in\mathcal{G}^{*}$, it must satisfy $\bm{g}(0)=(\underline{y}_{d\bm{x}_{0}})_{d}$ and $\bm{g}(1)=(\bar{y}_{d\bm{x}_{0}})_{d}$ by construction of $\Psi(\cdot;\tilde{\bm{z}}_{1},\tilde{\bm{z}}_{2},\tilde{\bm{z}}_{3})$. Therefore, conditions in Lemma \ref{1.lemUnq} are satisfied, implying that $\bm{g}^{*}(\bm{x}_{0},\cdot)$ is the unique solution path in $\mathcal{G}^{*}$. 

Step 2. Uniqueness in $\mathcal{G}$. Suppose there exists another solution path $\breve{\boldsymbol{g}}\in\mathcal{G}$. For each $d\in S(D)$, let 
\begin{linenomath*}\begin{equation*}
g^{\dagger}_{d}(u)=\left\{\begin{matrix}
\underline{y}_{d\boldsymbol{x}_{0}}, & \text{if } \breve{g}_{d}(u)<\underline{y}_{d\boldsymbol{x}_{0}}\\
\breve{g}_{d}(u), & \text{if }  \breve{g}_{d}(u)\in S(Y|d,\boldsymbol{x}_{0}) \\
\bar{y}_{d\boldsymbol{x}_{0}}, & \text{if } \breve{g}_{d}(u)>\bar{y}_{d\boldsymbol{x}_{0}}\\
\end{matrix}\right.
\end{equation*}\end{linenomath*}
Clearly, $\boldsymbol{g}^{\dagger}\in \mathcal{G}^{*}$ and is also a solution path. By the uniqueness of $\bm{g}^{*}(\bm{x}_{0},\cdot)$ in $\mathcal{G}^{*}$, $\boldsymbol{g}^{\dagger}(\cdot)=\boldsymbol{g}^{*}(\bm{x}_{0},\cdot)$. Since $\bm{g}^{*}(\bm{x}_{0},\cdot)$ is strictly increasing, $\underline{y}_{d\bm{x}_{0}}=g^{\dagger}_{d}(0)<g^{\dagger}_{d}(u)<g^{\dagger}_{d}(1)=\bar{y}_{d\bm{x}_{0}}$ for all $u\in (0,1)$ and $d\in S(D)$. Therefore, $\breve{\boldsymbol{g}}(u)$ necessarily equals $\boldsymbol{g}^{*}(\boldsymbol{x}_{0},u)$ for all $u\in (0,1)$. For $u=0$ or $1$, $g_{d}^{\dagger}(u)=\underline{y}_{d\bm{x}_{0}}$ or $\bar{y}_{d\bm{x}_{0}}$ for all $d\in S(D)$, so $\breve{g}_{d}(u)$ can take any value smaller than $\underline{y}_{d\boldsymbol{x}_{0}}$ or greater than $\bar{y}_{d\bm{x}_{0}}$ respectively.
\end{proof}

\begin{proof}[Proof of Corollary \ref{1.corIDNSP}]
Denote $\mathcal{G}_{0}^{-}= \{\boldsymbol{g}\in\mathcal{G}_{0}:\sup_{u\in \mathcal{U}_{0}}|\boldsymbol{g}(u)-\boldsymbol{g}^{*}(\bm{x}_{0},u)|\geq \delta \}$. Suppose inequality \eqref{1.eq42} does not hold. Then there exists a sequence $\{\boldsymbol{g}_{k}\}\subseteq\mathcal{G}_{0}^{-}$ such that
\begin{linenomath*}\begin{equation*}
\lim_{k\to\infty} \big(\int_{0}^{1}Q_{NSP}(\boldsymbol{g}_{k}(u),u)du-\int_{0}^{1}Q_{NSP}(\boldsymbol{g}^{*}(\bm{x}_{0},u),u)du\big)=0
\end{equation*}\end{linenomath*}

As functions in the sequence $\{\boldsymbol{g}_{k}\}$ are uniformly bounded monotonic functions on a compact interval, there exists a pointwise convergent subsequence $\tilde{\boldsymbol{g}}_{k_{l}}$ by Helly's Selection Theorem. Denote its pointwise limit by $\tilde{\boldsymbol{g}}$. Since pointwise convergence preserves monotonicity and the functions in $\{\boldsymbol{g}_{k_{l}}\}$ are uniformly bounded in $S(Y|d)$, $\tilde{\bm{g}}\in\mathcal{G}_{0}$. Note the equation above also holds for the subsequence $\tilde{\boldsymbol{g}}_{k_{l}}$. Then by the Dominated Convergence Theorem, we can change the order of the limit and the integral operators:
\begin{linenomath*}\begin{align*}
\int_{0}^{1}\lim_{k_{l}\to\infty}Q_{NSP}(\boldsymbol{g}_{k_{l}}(u),u)du&=\int_{0}^{1}Q_{NSP}(\boldsymbol{g}^{*}(\bm{x}_{0},u),u)du=0\\
\implies \int_{0}^{1}Q_{NSP}(\tilde{\boldsymbol{g}}(u),u)du&=\int_{0}^{1}Q_{NSP}(\boldsymbol{g}^{*}(\bm{x}_{0},u),u)du=0
\end{align*}\end{linenomath*}
where the last equation follows from continuity of $Q_{NSP}$. 

Since $\tilde{\bm{g}}\in\mathcal{G}_{0}$, by Theorem \ref{1.thmIDNSP}, $\tilde{\boldsymbol{g}}(\cdot)=\boldsymbol{g}^{*}(\bm{x}_{0},\cdot)$ on $\mathcal{U}_{0}$. Hence, $\bm{g}_{k_{l}}(u)$ converges to $\bm{g}^{*}(\bm{x}_{0},u)$ for all $u\in\mathcal{U}_{0}$. Since pointwise convergence of a sequence of monotonic functions on a compact domain implies uniform convergence if the limiting function is continuous, by continuity of $\boldsymbol{g}^{*}(\bm{x}_{0},\cdot)$,
\begin{linenomath*}\begin{equation*}
\lim_{k_{l}\to \infty}\sup_{u\in\mathcal{U}_{0}}||\boldsymbol{g}^{*}(\bm{x}_{0},u)-\boldsymbol{g}_{k_{l}}(u)||= 0,
\end{equation*}\end{linenomath*}
contradicting with $\{\boldsymbol{g}_{k_{l}}\}\subseteq\mathcal{G}_{0}^{-}$.
\end{proof}
\begin{rem}
From the last step in the proof of Corollary \ref{1.corIDNSP}, $\mathcal{U}_{0}$ in Corollary \ref{1.corIDNSP} can be replaced by $[0,1]$ if the uniqueness of $\boldsymbol{g}^{*}(\bm{x}_{0},\cdot)$ holds on the entire interval $[0,1]$. This would be the case if the infimum is taken over $\mathcal{G}^{*}$ instead of $\mathcal{G}_{0}$ in equation \eqref{1.eq42}.
\end{rem}
\end{appendices}
\makeatletter\@input{ns.tex}\makeatother
\nocite{united}
\bibliography{/Users/Ecthelion/Documents/Econometrics/NonparametricIdentification/Paper/references}
\end{document}


\nolinenumbers
\title{\vspace{-2cm} Supplement to "Matching Points: Supplementing Instruments with Covariates in Triangular Models"  
}
\author{Junlong Feng\thanks{Department of Economics, the Hong Kong University of Science and Technology; \href{mailto:jlfeng@ust.hk}{jlfeng@ust.hk}.}}
\maketitle
\vspace{-0.45in}
\begin{abstract}
Appendix \ref{secSA} discusses estimation of the matching points when multiple solutions exist to the generalized propensity score matching, derives an estimator for the nonseparable outcome function, and provides the proofs of the estimators' asymptotic properties. Appendix \ref{secSB} shows additional simulation results. 
\bigskip
\end{abstract}

\begin{appendices}
\renewcommand{\thesection}{S.A}
\section{More on Estimation}\label{secSA}
In this section, I still focus on the benchmark case specified in the beginning of Section \ref{1.sec4}, that is, only one covariate $X$ is in the model, and all the solutions to the generalized propensity score matching $\bm{p}(x,z')=\bm{p}(x_{0},z)$ are the matching points. Appendix \ref{1.appA2} discusses estimation of the set of the matching points when multiple solutions exist to $\bm{p}(x,z')=\bm{p}(x_{0},z)$. Appendix \ref{secSA2} provides an estimator of the nonseparable outcome function and presents its asymptotic properties. Appendix \ref{1.appE1} provides the proofs of these results. Proofs of the lemmas used in Appendix \ref{1.appE1} are in Appendix \ref{1.appE2}. 
\subsection{Estimating the Set of Matching Points}\label{1.appA2}
Let $\{x_{m1}\}$ be the set of the solutions to $\bm{p}(x,1)=\bm{p}(x_{0},0)$ and $\{x_{m2}\}$ be the solution set to $\bm{p}(x,0)=\bm{p}(x_{0},1)$. Recall the estimator for the general case is defined by equation \eqref{1.eq22} in Section \ref{1.sec4} with $a_{n}>0$:
\begin{linenomath*}\begin{equation*}
\hat{Q}_{x}(\hat{x}_{m1},\hat{x}_{m2})\leq \inf_{(x_{1},x_{2})\in S_{0}^{2}(X)} \hat{Q}_{x}(x_{1},x_{2})+a_{n}^{2}
\end{equation*}\end{linenomath*}

Let $\hat{\mathcal{X}}_{m}=\{(\hat{x}_{m1},\hat{x}_{m2})\}$ and $\mathcal{X}_{m}=\{(x_{m1},x_{m2})\}$. Let $Q_{x}$ be the probability limit of $\hat{Q}_{x}$. By construction, $Q_{x}(x_{m1},x_{m2})=0$. For any $(x_{1},x_{2})\in S^{2}(X)$, let $\rho((x_{1},x_{2}),\mathcal{X}_{m})\equiv \inf_{(x'_{1},x'_{2})\in\mathcal{X}_{m}}||(x_{1},x_{2})-(x'_{1},x'_{2})||$ be the distance of $(x_{1},x_{2})$ to the set $\mathcal{X}_{m}$. For two generic real numbers, let $a\land b\equiv \min\{a,b\}$ and $a\lor b\equiv \max\{a,b\}$. Following \cite{chernozhukov2007estimation}, the following assumption is made:
\begin{appxass}\label{assApp1}
There exist $C,\delta>0$ such that 
\begin{linenomath*}\begin{equation*}
Q_{x}(x_{1},x_{2})\geq C\cdot(\rho(x_{1},x_{2}),\mathcal{X}_{m})\land\delta)^{2}
\end{equation*}\end{linenomath*}
\end{appxass} 

Assumption \ref{assApp1} is satisfied when $\mathcal{X}_{m}$ is a singleton under the full rankness and continuity of the Jacobian of $Q_{x}$ near $(x_{m1},x_{m2})$. See \cite{chernozhukov2007estimation} for a detailed discussion. By verifying the conditions in their Theorem 3.1, the following theorem establishes consistency of $\hat{\mathcal{X}}_{m}$ in Hausdorff distance\footnote{The Hausdorff distance $\rho_{H}$ between two generic subsets $A$ and $B$ of a metric space endowed with metric $\rho$ is defined as $\rho_{H}(A,B)=\max\{\sup_{a\in A}\inf_{b\in B} \rho(a,b),\sup_{b\in B}\inf_{a\in A} \rho(a,b)\}$.}.
\begin{appxthm}\label{1.thmA1}
 Let $c_{n}=\frac{\sqrt{\log(n)}}{\sqrt{nh_{x}}}+h_{x}^{2}$ and $a_{n}=\sqrt{\log(n)}c_{n}$. Under Assumptions \ref{1.assREGMP} and \ref{assApp1}, $\rho_{H}(\hat{\mathcal{X}}_{m},\mathcal{X}_{m})=O_{p}\left(a_{n}\right)$.
\end{appxthm}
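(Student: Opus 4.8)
The plan is to verify the conditions of Theorem~3.1 in \cite{chernozhukov2007estimation} for the criterion $\hat{Q}_{x}$. Since $\hat{Q}_{x}\ge 0$, $Q_{x}\ge 0$, $Q_{x}\equiv 0$ on $\mathcal{X}_{m}$, and the level $a_{n}^{2}$ is deterministic, the Hausdorff bound reduces to two statements: first, $\mathcal{X}_{m}\subseteq\hat{\mathcal{X}}_{m}$ with probability approaching one, which makes $\sup_{(x_{1},x_{2})\in\mathcal{X}_{m}}\rho((x_{1},x_{2}),\hat{\mathcal{X}}_{m})$ eventually zero; second, $Q_{x}(x_{1},x_{2})=O_{p}(a_{n}^{2})$ uniformly over $(x_{1},x_{2})\in\hat{\mathcal{X}}_{m}$, which via Assumption~\ref{assApp1} forces $\sup_{(x_{1},x_{2})\in\hat{\mathcal{X}}_{m}}\rho((x_{1},x_{2}),\mathcal{X}_{m})=O_{p}(a_{n})$. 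Combining the two gives $\rho_{H}(\hat{\mathcal{X}}_{m},\mathcal{X}_{m})=O_{p}(a_{n})$.

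The common input will be the uniform rate of the first stage: under Assumption~\ref{1.assREGMP}, the kernel estimator of the generalized propensity score satisfies $\sup_{x,z}\|\hat{\bm{p}}(x,z)-\bm{p}(x,z)\|=O_{p}(c_{n})$ with $c_{n}=\sqrt{\log(n)/(nh_{x})}+h_{x}^{2}$ over the trimmed support; Assumption~\ref{1.assREGMP} will also be used to ensure $\mathcal{X}_{m}$ lies in the interior of the set $S_{0}^{2}(X)$ over which the infimum is taken, so that trimming/boundary effects and sieve approximation are asymptotically negligible. Writing $\hat{Q}_{x}-Q_{x}$ as a finite sum, over the matching equations, of terms $2r_{j}(x_{1},x_{2})\Delta_{j}+\Delta_{j}^{2}$, where $r_{j}$ is the population matching residual and $\Delta_{j}$ collects first-stage errors with $\sup|\Delta_{j}|=O_{p}(c_{n})$, I would extract two facts. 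On $\mathcal{X}_{m}$ every residual $r_{j}$ vanishes, so $\sup_{(x_{1},x_{2})\in\mathcal{X}_{m}}\hat{Q}_{x}(x_{1},x_{2})=O_{p}(c_{n}^{2})$. And, using $2|r_{j}\Delta_{j}|\le\tfrac12 r_{j}^{2}+2\Delta_{j}^{2}$ together with $Q_{x}=\sum_{j}r_{j}^{2}$, one obtains the self-normalized bounds
\begin{equation*}
\tfrac12\,Q_{x}(x_{1},x_{2})-O_{p}(c_{n}^{2})\;\le\;\hat{Q}_{x}(x_{1},x_{2})\;\le\;2\,Q_{x}(x_{1},x_{2})+O_{p}(c_{n}^{2}),
\end{equation*}
uniformly over $S^{2}(X)$, with the $O_{p}(c_{n}^{2})$ terms controlled by $\sup_{j}\sup|\Delta_{j}|^{2}$.

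Given these, the first statement follows because $\inf_{S_{0}^{2}(X)}\hat{Q}_{x}\ge 0$ while $\sup_{\mathcal{X}_{m}}\hat{Q}_{x}=O_{p}(c_{n}^{2})$; since $a_{n}^{2}=\log(n)c_{n}^{2}$ dominates $c_{n}^{2}$, with probability approaching one every $(x_{1},x_{2})\in\mathcal{X}_{m}$ has $\hat{Q}_{x}(x_{1},x_{2})\le a_{n}^{2}\le\inf_{S_{0}^{2}(X)}\hat{Q}_{x}+a_{n}^{2}$, i.e.\ lies in $\hat{\mathcal{X}}_{m}$. For the second statement, fix $(x_{1},x_{2})\in\hat{\mathcal{X}}_{m}$; using $\inf_{S_{0}^{2}(X)}\hat{Q}_{x}\le\hat{Q}_{x}(x_{m1},x_{m2})=O_{p}(c_{n}^{2})$ and the defining inequality of $\hat{\mathcal{X}}_{m}$ gives $\hat{Q}_{x}(x_{1},x_{2})=O_{p}(c_{n}^{2})+a_{n}^{2}=O_{p}(a_{n}^{2})$; the lower self-normalized bound then yields $Q_{x}(x_{1},x_{2})\le 2\hat{Q}_{x}(x_{1},x_{2})+O_{p}(c_{n}^{2})=O_{p}(a_{n}^{2})$, uniformly in $\hat{\mathcal{X}}_{m}$. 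Assumption~\ref{assApp1} gives $\big(\rho((x_{1},x_{2}),\mathcal{X}_{m})\wedge\delta\big)^{2}\le C^{-1}Q_{x}(x_{1},x_{2})=O_{p}(a_{n}^{2})$; since $a_{n}\to 0$, the wedge is eventually attained by $\rho$ itself, so $\sup_{(x_{1},x_{2})\in\hat{\mathcal{X}}_{m}}\rho((x_{1},x_{2}),\mathcal{X}_{m})=O_{p}(a_{n})$, completing the argument.

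The main obstacle is obtaining the sharp rate rather than a crude one. A naive global bound only gives $\sup|\hat{Q}_{x}-Q_{x}|=O_{p}(c_{n})$, which would degrade the conclusion to the much slower $O_{p}(\sqrt{c_{n}})$; the rate $O_{p}(a_{n})$ is attainable only because the first-stage error enters $\hat{Q}_{x}-Q_{x}$ at second order ($\Delta_{j}^{2}$, hence $O_{p}(c_{n}^{2})$) once one is near $\mathcal{X}_{m}$, where the residuals $r_{j}$ are small---this localization is the crux of the \cite{chernozhukov2007estimation} framework, and it is exactly what dictates the choice $a_{n}^{2}=\log(n)c_{n}^{2}$: large enough to dominate the $O_{p}(c_{n}^{2})$ noise on $\mathcal{X}_{m}$, with the extra $\log(n)$ also upgrading the relevant $O_{p}$ statements to events of probability tending to one. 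The remaining care is bookkeeping around $S_{0}^{2}(X)$---that the uniform rate for $\hat{\bm{p}}$ holds there and that $\mathcal{X}_{m}$ is eventually interior to it---both handled by Assumption~\ref{1.assREGMP}.
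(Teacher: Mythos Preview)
Your proposal is correct and sits in the same \cite{chernozhukov2007estimation} framework the paper invokes, but your execution differs in a way worth noting. The paper's proof verifies conditions (i)--(ii) of their Theorem~3.1 by asserting the global additive bound $\sup_{S_{0}^{2}(X)}|\hat Q_{x}-Q_{x}|=O_{p}(c_{n}^{2})$ and then plugging it into both conditions. You instead observe (correctly) that the naive global bound is only $O_{p}(c_{n})$, because the cross term $2r_{j}\Delta_{j}$ is first order in $\Delta_{j}$ away from $\mathcal{X}_{m}$, and you replace the additive bound by the multiplicative sandwich $\tfrac{1}{2}Q_{x}-O_{p}(c_{n}^{2})\le \hat Q_{x}\le 2Q_{x}+O_{p}(c_{n}^{2})$ obtained via Young's inequality. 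This absorbs the problematic cross term into $Q_{x}$ itself and lets you run the two-direction Hausdorff argument directly, without ever needing $|\hat Q_{x}-Q_{x}|$ to be uniformly $O_{p}(c_{n}^{2})$. What your route buys is transparency about exactly where the quadratic structure of $\hat Q_{x}$ is used to get the sharp rate; the paper's route is shorter but leans on a global statement that, as written, is stronger than what actually follows from $\sup|\hat p_{d}-p_{d}|=O_{p}(c_{n})$. Either way the conclusion and the choice $a_{n}=\sqrt{\log n}\,c_{n}$ are the same.
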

\begin{proof}
See Appendix \ref{1.appE1}.
\end{proof}

Once $\hat{\mathcal{X}}_{m}$ is obtained, one can select an element in it as the estimator of a matching point. Although the rate of convergence is slightly slower than the case when the solution to $Q_{x}(x_{1},x_{2})=0$ is unique ($O_{p}\left(\frac{1}{\sqrt{nh_{x}}}+h_{x}^{2}\right)$), estimation of the outcome function can stay unaffected by appropriately undersmoothing. 

In order to conduct the overidentification test for the matching points, I make the following assumption and propose a post-estimation procedure to estimate the isolated matching points that are locally unique.
\begin{appxass}\label{1.assISO}
Let $\mathcal{X}_{ISO}\subseteq\mathcal{X}_{m}$ be the set of the isolated matching points:
\begin{linenomath*}\begin{equation*}
\mathcal{X}_{ISO}=\{(x_{m1},x_{m2})\in \mathcal{X}_{m}:\inf_{(x_{1},x_{2})\in\mathcal{X}_{m}\backslash\{(x_{m1},x_{m2})\}}||(x_{1},x_{2})-(x_{m1},x_{m2})||>0\}.
\end{equation*}\end{linenomath*}
Suppose the following hold:

i) All the isolated matching points are in the interior of $S_{0}^{2}(X)$.

ii) The Jacobian matrix $\partial_{x'}\Delta \bm{p}(x_{m1},x_{m2})$ defined in Section \ref{sec4.1} is full rank for every $(x_{m1},x_{m2})\in \mathcal{X}_{ISO}$.
\end{appxass}

The following post-estimation procedure is designed to obtain the isolated matching points:
\begin{itemize}
\item Step 0. Obtain $\hat{\mathcal{X}}_{m}$ by equation \eqref{1.eq22} with $a_{n}$ specified in Theorem \ref{1.thmA1}.
\item Step 1 (Isolated matching points selection). Cover $\hat{\mathcal{X}}_{m}$ with squares $\{I_{k}\}$ of side $b_{n}=\log(n)a_{n}$. Select $I_{k}$ if it fully covers a cluster of solutions.
\item Step 2 (Reestimation). Minimize $\hat{Q}_{x}(x_{1},x_{2})$ again over each local parameter space $I_{k}$.
\end{itemize}
From Theorem \ref{1.thmA1} and Assumption \ref{1.assISO}, if $\mathcal{X}_{ISO}$ is nonempty, there are isolated clusters of solutions to equation \eqref{1.eq22}. These clusters are close to their corresponding isolated matching points with probability approaching one (w.p.a.1). The diameters of such clusters shrink to zero at the rate of $a_{n}$ as they converge to the isolated matching points in probability. Therefore, each cluster can be covered by a square $I_{k}$ w.p.a.1 since $b_{n}>a_{n}$ in order. For the reestimation step, again, as $b_{n}/a_{n}\to\infty$, the isolated matching point is in the interior of $I_{k}$ w.p.a.1. Under that event, the minimizer of $\hat{Q}_{x}$ satisfies the first order condition. The asymptotic distribution and the overidentification test derived in Section \ref{sec4.1} hold if $\sqrt{nh_{x}}h_{x}^{2}\to 0$ and $nh_{x}^{3}\to\infty$.

\subsection{Estimating the Nonseparable Outcome Function}\label{secSA2}
In the benchmark case, we have 
\[
\Psi(\boldsymbol{g}(u))=
\begin{pmatrix}
\sum_{d=1}^{3}p_{d}(x_{0},0)\cdot F_{Y|DXZ}(g_{d}(u)|d,x_{0},0)\\
\sum_{d=1}^{3}p_{d}(x_{0},1)\cdot F_{Y|DXZ}(g_{d}(u)|d,x_{0},1)\\
\sum_{d=1}^{3}p_{d}(x_{m1},0)\cdot F_{Y|DXZ}(\varphi_{d}(g_{d}(u);x_{m1})|d,x_{m1},0)\\
\sum_{d=1}^{3}p_{d}(x_{m2},1)\cdot F_{Y|DXZ}(\varphi_{d}(\tilde{g}_{d}(u);x_{m2})|d,x_{m2},1)
\end{pmatrix}.
\]
where the dependence of $\Psi$ on the conditioning points is omitted for the ease of notation. Recall that Theorem \ref{1.thmIDNSP} implies that $\boldsymbol{g}^{*}(\boldsymbol{x}_{0},\cdot)$ is the unique minimizer (up to the boundaries) to $\min_{\boldsymbol{g}\in\mathcal{G}_{0}} \int_{0}^{1}Q_{NSP}(\boldsymbol{g}(u),u)du$ where $Q_{NSP}$ and $\mathcal{G}_{0}$ are defined in equations \eqref{eq3.1} and \eqref{eqG0} in Section \ref{1.sec3}. 

Like the matching points and the separable outcome function, $\boldsymbol{g}^{*}(\boldsymbol{x}_{0},\cdot)$ can be estimated by the sample analogue of the population minimization problem. Construct $\hat{Q}_{NSP}(\boldsymbol{g}(u),u)$ by plugging in estimators of $\Psi$ and $\boldsymbol{W}_{g}(u)$. For $\Psi$, estimate the generalized propensity scores by equation \eqref{1.eq23}. For the conditional cumulative distribution functions, I adopt the following smoothed kernel estimator (e.g. \cite{hansen2004nonparametric} and \cite{li2008nonparametric}):
\begin{linenomath*}\begin{equation}\label{1.eq28}
\widehat{F}_{Y|DXZ}(y|d,x,z)=\frac{\sum_{i=1}^{n}L(\frac{y-Y_{i}}{h_{0}})\mathbbm{1}(D_{i}=d)K(\frac{X_{i}-x}{h_{g}})\mathbbm{1}(Z_{i}=z)}{\sum_{i=1}^{n}\mathbbm{1}(D_{i}=d) K(\frac{X_{i}-x}{h_{g}})\mathbbm{1}(Z_{i}=z)}
\end{equation}\end{linenomath*}
where $L(\cdot)$ is a smooth cumulative distribution function and $h_{0}$ converges to $0$ faster than $h_{g}$. Suppose $(x_{0},z)$ and $(x_{m},z')$ are a matching pair. The mapping $\varphi_{d}$ can be estimated by
\begin{linenomath*}\begin{equation}\label{1.eq29}
\hat{\varphi}_{d}(y;x_{m})=\arg\min_{y'\in S(Y|d)}\big(\widehat{F}_{Y|DXZ}(y'|d,\hat{x}_{m},z')-\widehat{F}_{Y|DXZ}(y|d,x_{0},z)\big)^{2}
\end{equation}\end{linenomath*}
where the support $S(Y|d)$ can be estimated faster than the nonparametric rate so is treated as known.

Let $u_{j}=\frac{j}{J}$ where $1\leq j\leq J$ and $J\to\infty$. Let $\underline{\boldsymbol{y}}\equiv(\underline{y}_{d})_{d}$ and $\bar{\boldsymbol{y}}\equiv(\bar{y}_{d})_{d}$ where $\underline{y}_{d}$ and $\bar{y}_{d}$ are the lower and the upper boundaries of $S(Y|d),d=1,2,3$. The estimator of $\boldsymbol{g}^{*}(\boldsymbol{x}_{0},\cdot)$ solves the following minimization problem:
\begin{linenomath*}\begin{align}
\hat{\bm{g}}\equiv\min_{\underline{\boldsymbol{y}}\leq \boldsymbol{g}(u_{1})\leq ...\leq \boldsymbol{g}(u_{J})\leq \bar{\boldsymbol{y}}} \frac{1}{J}\sum_{j=1}^{J}\hat{Q}_{NSP}(\boldsymbol{g}(u_{j}),u_{j})\label{1.eq27}
\end{align}\end{linenomath*}
%

The minimization problem \eqref{1.eq27} defines a sieve estimator. The constraint induces a finite dimensional function space $\widehat{\mathcal{G}}\subseteq\mathcal{G}_{0}$; each element in it is increasing and piecewise affine with line segments connecting adjacent nodes. By sending $J$ to infinity, elements in $\widehat{\mathcal{G}}$ are able to approximate any increasing function bounded between $\underline{\boldsymbol{y}}$ and $\bar{\boldsymbol{y}}$. One can also add a penalty term to smooth the estimator, for instance $\lambda \sum_{j=2}^{J}\left[\boldsymbol{g}(u_{j})-\boldsymbol{g}(u_{j-1})\right]'\left[\boldsymbol{g}(u_{j})-\boldsymbol{g}(u_{j-1})\right]$ where $\lambda\to 0$ sufficiently fast.

It is noteworthy that adopting a sieve estimator in this paper is not to solve an ill-posed problem \citep{chen2007large,chen2012estimation,chen2015sieve}, but to coordinate with the identification concept. In fact, if identification is established for a fixed $u$, one can simply minimize $\hat{Q}_{NSP}(\boldsymbol{g}(u),u)$ like in \cite{lewbel2007local} and the true parameter $\bm{g}^{*}(x_{0},u)$ is finite dimensional. Under identification of the solution path, however, the minimizer of $Q_{NSP}(\bm{g}(u),u)$ may not be unique at a given $u$. Joint estimation at multiple nodes under monotonicity is necessary to guarantee identification of $\bm{g}^{*}(\bm{x}_{0},\cdot)$.

\begin{appxass}\label{1.assREGNSP} 
Let $F_{YDXZ}\equiv F_{Y|DXZ}\cdot \mathbb{P}_{DZ|X}\cdot f_{X}$. The cumulative distribution functions $F_{YDXZ}$ and $F_{Y|DXZ}$ are three times continuously differentiable in the continuous arguments on $S(Y,X|D,Z)$ and $S(Y|D,X,Z)$ with bounded derivatives. The conditional density $f_{Y|DXZ}$ is uniformly bounded away from $0$ almost surely over $S(Y|D,X,Z)$. The derivative of $L(\cdot)$ is the kernel function $K(\cdot)$. 
\end{appxass}
Since the density is bounded away from zero on the support, the functions $F_{YDXZ}$ and $F_{Y|DXZ}$ are not differentiable at the boundaries of the support. However, outside the support, these functions are flat (equal to zero or one), so they are globally Lipschitz continuous. This property is useful in proving uniform consistency of the cumulative distribution function estimator over sets that are larger than the support. 

Recall $\mathcal{U}_{0}$ is a compact subset in the interior of $[0,1]$. Let $a_{n}$ be the rate of convergence of $(\hat{x}_{m1},\hat{x}_{m2})$. The following theorem establishes uniform consistency of $\hat{\bm{g}}(\cdot)$ over $\mathcal{U}_{0}$.

\begin{appxthm}\label{thmSUPCONS}
Suppose $\log(n)/\sqrt{nh_{g}}$, $h_{g}$, $h_{0}$, $a_{n}$ and $h_{0}/h_{g}$ are all $o(1)$. Under Assumptions \ref{1.assREGMP}, \ref{1.assREGNSP} and the conditions in Theorem \ref{1.thmIDNSP}, if $J\to\infty$, then
\begin{linenomath*}\begin{equation}\label{1.eq43}
\sup_{u\in\mathcal{U}_{0}}\big|\hat{\boldsymbol{g}}(u)-\boldsymbol{g}^{*}(x_{0},u)\big|=o_{p}(1).
\end{equation}\end{linenomath*}
\end{appxthm}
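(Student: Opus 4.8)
The plan is to establish uniform consistency of the sieve estimator $\hat{\boldsymbol{g}}(\cdot)$ by the standard two-step route for extremum/sieve estimators: (i) show that the sample criterion $\frac{1}{J}\sum_{j}\hat{Q}_{NSP}(\boldsymbol{g}(u_{j}),u_{j})$ converges uniformly (over $\boldsymbol{g}\in\mathcal{G}_{0}$, or over the relevant sieve space) to the population criterion $\int_{0}^{1}Q_{NSP}(\boldsymbol{g}(u),u)\,du$, and (ii) invoke an identification/well-separatedness argument coming from Theorem \ref{1.thmIDNSP} to conclude that near-minimizers of the sample criterion must be uniformly close to $\boldsymbol{g}^{*}(x_{0},\cdot)$. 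Because the parameter is a function and the conclusion is uniform in $u$, I would work in the sup-norm on the function space $\mathcal{G}_{0}$ restricted to $\mathcal{U}_{0}$; the monotonicity constraint is what makes $\mathcal{G}_{0}$ a compact (totally bounded) set in a suitable topology, so Helly-type/sequential-compactness arguments are available and one can pass from pointwise (on the grid $\{u_j\}$) control to uniform control.

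First I would control the plug-in components. The generalized propensity score estimators from \eqref{1.eq23} converge uniformly at the usual nonparametric rate, and the matching-point estimators $(\hat{x}_{m1},\hat{x}_{m2})$ are consistent at rate $a_n=o(1)$ by Theorem \ref{1.thmA1}. The main new object is the smoothed conditional CDF estimator $\widehat{F}_{Y|DXZ}$ in \eqref{1.eq28}: using Assumption \ref{1.assREGNSP} (three-times differentiability, density bounded away from zero, and the global Lipschitz property noted in the remark after it), I would prove a uniform consistency statement of the form $\sup_{y,d,x,z}|\widehat{F}_{Y|DXZ}(y|d,x,z)-F_{Y|DXZ}(y|d,x,z)| = O_p(\log(n)/\sqrt{nh_g} + h_g + h_0) = o_p(1)$, where the Lipschitz-outside-support property lets the sup range over a slightly enlarged set so that evaluation at the estimated points $\hat{x}_m$ (which may stray $O_p(a_n)$ outside $S_0^2(X)$) is harmless. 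Combining these with continuity of $F_{Y|DXZ}$ and of $\varphi_d$ in its arguments (and the fact that $\hat{\varphi}_d$ in \eqref{1.eq29} is a continuous functional of $\widehat{F}$), I get $\sup_{\boldsymbol{g}\in\mathcal{G}_0, u\in\mathcal{U}_0}|\hat{\Psi}(\boldsymbol{g}(u)) - \Psi(\boldsymbol{g}(u))| = o_p(1)$, and likewise for the weight matrix $\boldsymbol{W}_g$, hence $\sup_{\boldsymbol{g}\in\mathcal{G}_0,u}|\hat{Q}_{NSP}(\boldsymbol{g}(u),u) - Q_{NSP}(\boldsymbol{g}(u),u)| = o_p(1)$.

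Next I would handle the discretization. The grid $\{u_j = j/J\}$ with $J\to\infty$ becomes dense in $[0,1]$; since elements of $\widehat{\mathcal{G}}$ are piecewise affine and monotone and $Q_{NSP}(\cdot,u)$ is continuous (indeed smooth) in both arguments, the Riemann-sum approximation error $|\frac{1}{J}\sum_j Q_{NSP}(\boldsymbol{g}(u_j),u_j) - \int_0^1 Q_{NSP}(\boldsymbol{g}(u),u)\,du|$ goes to zero uniformly over the sieve space — this is where monotonicity is essential, because it bounds the total variation of $u\mapsto \boldsymbol{g}(u)$ uniformly, so the family of functions being integrated is equicontinuous enough for the sum to track the integral. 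Together with step two this gives uniform convergence of the full sample objective to $\bar{Q}(\boldsymbol{g}) \equiv \int_0^1 Q_{NSP}(\boldsymbol{g}(u),u)\,du$ over $\mathcal{G}_0$. Then a standard argument-of-the-minimum/consistency theorem for sieve extremum estimators (e.g. verifying the conditions of the sort in \citealp{chen2007large}) applies: $\hat{\boldsymbol{g}}$ nearly minimizes $\bar{Q}$ over an increasing sieve whose closure is $\mathcal{G}_0$, and by Theorem \ref{1.thmIDNSP} the unique (up to boundaries) minimizer of $\bar{Q}$ over $\mathcal{G}_0$ is $\boldsymbol{g}^{*}(x_0,\cdot)$; compactness of $\mathcal{G}_0$ in the sup topology (via Helly's selection theorem on monotone functions) plus upper semicontinuity of $\bar{Q}$ upgrades "near-minimizer" to "$\sup_{u\in\mathcal{U}_0}|\hat{\boldsymbol{g}}(u)-\boldsymbol{g}^{*}(x_0,u)| = o_p(1)$."

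I expect the main obstacle to be the uniform consistency of $\widehat{F}_{Y|DXZ}$ over an enlarged domain together with the propagation of the matching-point error: one has to evaluate $\widehat{F}$ at the \emph{estimated} point $\hat{x}_m$ and inside the estimated functional $\hat{\varphi}_d$, and near the boundary of $S_0^2(X)$ the kernel smoothing in the $X$-direction degrades, so I would need the Lipschitz-outside-support structure from Assumption \ref{1.assREGNSP} and a careful decomposition $\widehat{F}(\cdot|\hat{x}_m,\cdot) - F(\cdot|x_m,\cdot) = [\widehat{F}(\cdot|\hat{x}_m,\cdot)-F(\cdot|\hat{x}_m,\cdot)] + [F(\cdot|\hat{x}_m,\cdot)-F(\cdot|x_m,\cdot)]$, bounding the first term by the uniform rate and the second by $O_p(a_n)$ via differentiability. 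A secondary delicate point is that the minimizer of $Q_{NSP}(\cdot,u)$ need not be unique at a fixed $u$ (as the paper emphasizes), so the identification argument genuinely must be run at the level of the whole path under the monotonicity constraint rather than node-by-node; I would therefore be careful to state the well-separatedness condition for $\bar{Q}$ on $(\mathcal{G}_0, \|\cdot\|_\infty)$ rather than pointwise in $u$, and deduce the uniform-in-$u$ conclusion from convergence in the function-space metric together with the equicontinuity that monotonicity provides.
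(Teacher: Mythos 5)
Your overall architecture is the same as the paper's: uniform consistency of the plug-in ingredients, a Riemann-sum argument (using monotonicity to bound total variation) showing $\frac{1}{J}\sum_{j}\hat{Q}_{NSP}(\boldsymbol{g}(u_{j}),u_{j})$ tracks $\int_{0}^{1}Q_{NSP}(\boldsymbol{g}(u),u)du$ uniformly over the sieve, and then well-separated identification of $\boldsymbol{g}^{*}(x_{0},\cdot)$ as the path minimizer (the paper invokes its Corollary on identification and runs a contradiction argument with an interpolant $\boldsymbol{g}_{0}\in\hat{\mathcal{G}}$ of $\boldsymbol{g}^{*}$ at the nodes plus dominated convergence, which plays the role of your sieve-approximation step). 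These pieces correspond to the paper's Lemmas on $\hat{F}_{Y|DXZ}$ and on the discretized criterion.

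There is, however, a genuine gap in the step where you pass from uniform consistency of $\hat{F}_{Y|DXZ}$ to uniform consistency of $\hat{\Psi}$ by treating $\hat{\varphi}_{d}$ in \eqref{1.eq29} as ``a continuous functional of $\hat{F}$'' combined with continuity of $\varphi_{d}$. The supremum must run over $y\in S(Y|d)$, which is strictly larger than $S(Y|d,x_{0})$, and for $y$ outside $[\underline{y}_{dx_{0}},\bar{y}_{dx_{0}}]$ the population version of the minimization problem \eqref{1.eq29} has a non-unique solution set (the objective is flat there), so the argmin map is neither single-valued nor continuous at exactly the points you need, and no uniform consistency of $\hat{\varphi}_{d}(y;\hat{x}_{m})$ toward $\varphi_{d}(y;x_{m})$ is available over $y\in[\underline{y}_{d},\bar{y}_{d}]$ — the paper states this explicitly and its Lemma on this point is designed to circumvent it. The repair is to bound the composition directly rather than $\hat{\varphi}_{d}$ itself: control $\sup_{y}\big|\hat{F}_{Y|DXZ}(\hat{\varphi}_{d}(y;\hat{x}_{m})|d,\hat{x}_{m},z')-F_{Y|DXZ}(\varphi_{d}(y;x_{m})|d,x_{m},z')\big|$ using the defining minimization property of $\hat{\varphi}_{d}$ together with the matching equality $F_{Y|DXZ}(\varphi_{d}(y;x_{m})|d,x_{m},z')=F_{Y|DXZ}(y|d,x_{0},z)$, and then transfer the bound from $z'$ to the other instrument value $z$ by writing both increments of $F_{Y|DXZ}(\cdot|d,x_{m},z)$ and $F_{Y|DXZ}(\cdot|d,x_{m},z')$ as a density at a mean value times the same clipped increment (possible because the two conditional distributions share the support $S(Y|d,x_{m})$ and the densities are bounded away from zero). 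Without this device, your chain ``$\hat{\varphi}_{d}$ consistent $\Rightarrow$ $\hat{\Psi}$ uniformly consistent'' does not go through; note also that the non-uniqueness you flag at the end concerns the pointwise minimizer of $Q_{NSP}$ (an identification issue), which is a different matter from this estimation-stage non-uniqueness of $\varphi_{d}$. Your remaining steps (uniform CDF consistency with the $O(h_{g})$ boundary-driven bias via global Lipschitz continuity, the $O_{p}(a_{n})$ propagation of the matching-point error, the discretization bound, and the well-separation conclusion) are in line with the paper.
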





For the asymptotic distribution of $\hat{\boldsymbol{g}}(u_{0})-\boldsymbol{g}^{*}(x_{0},u_{0})$ for a fixed $u_{0}\in [u_{1},...,u_{J}]\cap\mathcal{U}_{0}$, if $J$ goes to infinity sufficiently slow and $|\boldsymbol{g}^{*}(x_{0},u)-\boldsymbol{g}^{*}(x_{0},u\pm 1/J)| =O(1/J)$ for all $u\in [0,1]$, then all the inequalities in the constraint in equation \eqref{1.eq27} can be nonbinding for $\hat{\bm{g}}$ under uniform consistency. The estimator $\hat{\boldsymbol{g}}$ at $u_{0}$ is thus asymptotically equivalent as the unconstrained pointwise estimator minimizing $\hat{Q}_{NSP}(\bm{g}(u_{0}),u_{0})$ under global identification at $u_{0}$. The asymptotic distribution of $\hat{\boldsymbol{g}}(u_{0})-\boldsymbol{g}^{*}(x_{0},u_{0})$ can then be obtained following the standard argument of local GMM estimators.






Let $\tilde{\boldsymbol{z}}_{1},...,\tilde{\boldsymbol{z}}_{6}$ be $(x_{0},0)$, $(x_{m1},0)$, $(x_{m1},1)$, $(x_{0},1)$, $(x_{m2},1)$ and $(x_{m2},0)$. Let $\phi_{d1}=\frac{f_{Y|DXZ}(g^{*}_{d}(x_{m1},u_{0})|d,x_{m1},0)}{f_{Y|DXZ}(g^{*}_{d}(x_{m1},u_{0})|d,x_{m1},1)}$, and $\phi_{d2}=\frac{f_{Y|DXZ}(g^{*}_{d}(x_{m2},u_{0})|d,x_{m2},1)}{f_{Y|DXZ}(g^{*}_{d}(x_{m2},u_{0})|d,x_{m2},0)}$. Let $\mathbb{V}_{\delta|DXZ}=F_{Y|DXZ}(1-F_{Y|DXZ})$. Let $\Sigma_{NSP}=\kappa\left(\Sigma_{NSP,1}+\Sigma_{NSP,2}+\Sigma_{NSP,3}\right)$ where $\kappa=\int K(\nu)^{2} d\nu$ and $\Sigma_{NSP,d}$ ($d=1,2,3$) equals\footnotesize
\begin{linenomath*}\begin{equation*}
\setlength\arraycolsep{0pt}
\begin{pmatrix}
\frac{p^{2}_{d}(\tilde{\boldsymbol{z}}_{1})\mathbb{V}_{\delta|DXZ}(d,\tilde{\boldsymbol{z}}_{1})}{f_{DXZ}(d,\tilde{\boldsymbol{z}}_{1})}&0& \frac{\phi_{d1}p_{d}(\tilde{\boldsymbol{z}}_{1})p_{d}(\tilde{\boldsymbol{z}}_{2})\mathbb{V}_{\delta|DXZ}(d,\tilde{\boldsymbol{z}}_{1})}{f_{DXZ}(d,\tilde{\boldsymbol{z}}_{1})}&0\\
0 &\frac{p^{2}_{d}(\tilde{\boldsymbol{z}}_{4})\mathbb{V}_{\delta|DXZ}(d,\tilde{\boldsymbol{z}}_{4})}{f_{DXZ}(d,\tilde{\boldsymbol{z}}_{4})}&0&\frac{\phi_{d2}p_{d}(\tilde{\boldsymbol{z}}_{4})p_{d}(\tilde{\boldsymbol{z}}_{5})\mathbb{V}_{\delta|DXZ}(d,\tilde{\boldsymbol{z}}_{4})}{f_{DXZ}(d,\tilde{\boldsymbol{z}}_{4})}\\
\frac{\phi_{d1}p_{d}(\tilde{\boldsymbol{z}}_{1})p_{d}(\tilde{\boldsymbol{z}}_{2})\mathbb{V}_{\delta|DXZ}(d,\tilde{\boldsymbol{z}}_{1})}{f_{DXZ}(d,\tilde{\boldsymbol{z}}_{1})}&0 & \sum_{k=1}^{3}\frac{\phi_{d1}^{2}p^{2}_{d}(\tilde{\boldsymbol{z}}_{2})\mathbb{V}_{\delta|DXZ}(d,\tilde{\boldsymbol{z}}_{k})}{f_{DXZ}(d,\tilde{\boldsymbol{z}}_{k})}&0\\
0& \frac{\phi_{d2}p_{d}(\tilde{\boldsymbol{z}}_{4})p_{d}(\tilde{\boldsymbol{z}}_{5})\mathbb{V}_{\delta|DXZ}(d,\tilde{\boldsymbol{z}}_{4})}{f_{DXZ}(d,\tilde{\boldsymbol{z}}_{4})}& 0 & \sum_{k=4}^{6}\frac{\phi_{d2}^{2}p^{2}_{d}(\tilde{\boldsymbol{z}}_{5})\mathbb{V}_{\delta|DXZ}(d,\tilde{\boldsymbol{z}}_{k})}{f_{DXZ}(d,\tilde{\boldsymbol{z}}_{k})}
\end{pmatrix}.
\end{equation*}\end{linenomath*}\normalsize
Let the feasible optimal weighting matrix $\hat{\Sigma}_{NSP}^{-1}$ be a uniformly consistent estimator of $\Sigma_{NSP}^{-1}$ obtained by a uniformly consistent first step estimator of $\bm{g}^{*}(x_{0},\cdot)$, for instance under the identify weighting matrix. Let $\Pi_{NSP}$ be the Jacobian matrix of $\Psi(\bm{g}^{*}(x_{0},u_{0}))$. Then we have the following theorem.
\begin{appxthm}\label{thmASYMP} 
Suppose $\sqrt{nh_{g}}\cdot h_{g}^{2}\to 0$, $h_{0}/h_{g}\to 0$, $\sqrt{nh_{g}}\cdot a_{n}\to 0$, $nh_{0}h_{g}\to \infty$, $nh_{g}^{3}\to \infty$ and $J^{3/2}\left(\sqrt{\log(n)/nh_{g}}+h_{g}\right)\to 0$. If for each $d\in S(D)$, $g_{d}^{*}(x_{0},\cdot)$ is differentiable on $[0,1]$ with the derivative bounded away from zero, then under all the conditions in Theorems \ref{1.thmIDNSP} and \ref{thmSUPCONS}, for any fixed node $u_{0}\in \{u_{1},...,u_{J}\}\cap \mathcal{U}_{0}$,
\begin{linenomath*}\begin{equation}
\sqrt{nh_{g}}(\hat{\boldsymbol{g}}(u_{0})-\boldsymbol{g}^{*}(x_{0},u_{0})) \overset{d}{\to}
 \mathcal{N}\left(0,\left[\Pi_{NSP}'\Sigma_{NSP}^{-1}\Pi_{NSP}\right]^{-1}\right)\label{1.eq47}
\end{equation}\end{linenomath*}
\end{appxthm}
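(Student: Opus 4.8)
The plan is to reduce the constrained sieve problem \eqref{1.eq27} to an \emph{unconstrained} pointwise local GMM estimation at $u_{0}$ and then invoke the standard asymptotic expansion for local GMM estimators (as for the pointwise estimator in \cite{lewbel2007local}), taking care that $\hat{\Psi}$ is built not only from the smoothed kernel CDF estimators \eqref{1.eq28} but also from the estimated propensity scores, the estimated matching points $(\hat{x}_{m1},\hat{x}_{m2})$, and the estimated reallocation maps $\hat{\varphi}_{d}$ in \eqref{1.eq29}. Since the criterion in \eqref{1.eq27} is additively separable across the nodes $u_{1},\dots,u_{J}$, which are coupled only through the monotonicity constraints, it suffices to show that the unconstrained pointwise minimizers $\boldsymbol{y}_{j}^{*}\equiv\arg\min_{\boldsymbol{y}}\hat{Q}_{NSP}(\boldsymbol{y},u_{j})$ already satisfy $\underline{\boldsymbol{y}}\leq\boldsymbol{y}_{1}^{*}\leq\cdots\leq\boldsymbol{y}_{J}^{*}\leq\bar{\boldsymbol{y}}$ w.p.a.1, in which case $\hat{\boldsymbol{g}}(u_{j})=\boldsymbol{y}_{j}^{*}$ for every $j$. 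By (the proof of) Theorem \ref{thmSUPCONS} one has $\max_{j}|\boldsymbol{y}_{j}^{*}-\boldsymbol{g}^{*}(x_{0},u_{j})|=O_{p}\bigl(\sqrt{\log(n)/(nh_{g})}+h_{g}^{2}+a_{n}\bigr)$, while differentiability of $g_{d}^{*}(x_{0},\cdot)$ with derivative bounded away from zero gives $|g_{d}^{*}(x_{0},u_{j})-g_{d}^{*}(x_{0},u_{j-1})|\gtrsim 1/J$; under $J^{3/2}\bigl(\sqrt{\log(n)/(nh_{g})}+h_{g}\bigr)\to0$, $\sqrt{nh_{g}}\,h_{g}^{2}\to0$ and $\sqrt{nh_{g}}\,a_{n}\to0$ the estimation error is $o_{p}(1/J)$, so every inequality is slack and $\hat{\boldsymbol{g}}(u_{0})$ coincides w.p.a.1 with the unconstrained minimizer of $\hat{Q}_{NSP}(\cdot,u_{0})$.

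Writing $\hat{Q}_{NSP}(\boldsymbol{y},u_{0})=\bigl(\hat{\Psi}(\boldsymbol{y})-\hat{\boldsymbol{W}}_{g}(u_{0})\bigr)'\hat{\Sigma}_{NSP}^{-1}\bigl(\hat{\Psi}(\boldsymbol{y})-\hat{\boldsymbol{W}}_{g}(u_{0})\bigr)$, where $\hat{\boldsymbol{W}}_{g}(u_{0})$ is known or estimated faster than $(nh_{g})^{-1/2}$, and combining consistency of $\hat{\boldsymbol{g}}(u_{0})$, its first-order condition, a mean-value expansion of $\hat{\Psi}$ around $\boldsymbol{g}^{*}(x_{0},u_{0})$, $\partial_{\boldsymbol{y}}\hat{\Psi}\overset{p}{\to}\Pi_{NSP}$ (which uses $nh_{0}h_{g}\to\infty$ and $nh_{g}^{3}\to\infty$ to control the $y$- and $x$-derivatives of $\widehat{F}_{Y|DXZ}$ and of $\hat{\varphi}_{d}$), and $\hat{\Sigma}_{NSP}^{-1}\overset{p}{\to}\Sigma_{NSP}^{-1}$ (the usual two-step GMM argument with a uniformly consistent first-step estimator) yields
\[
\sqrt{nh_{g}}\bigl(\hat{\boldsymbol{g}}(u_{0})-\boldsymbol{g}^{*}(x_{0},u_{0})\bigr)=-\bigl[\Pi_{NSP}'\Sigma_{NSP}^{-1}\Pi_{NSP}\bigr]^{-1}\Pi_{NSP}'\Sigma_{NSP}^{-1}\,\sqrt{nh_{g}}\bigl(\hat{\Psi}(\boldsymbol{g}^{*}(x_{0},u_{0}))-\hat{\boldsymbol{W}}_{g}(u_{0})\bigr)+o_{p}(1),
\]
so the problem reduces to the limiting distribution of $\sqrt{nh_{g}}\bigl(\hat{\Psi}(\boldsymbol{g}^{*}(x_{0},u_{0}))-\hat{\boldsymbol{W}}_{g}(u_{0})\bigr)$.

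To obtain that limit I would linearize the four coordinates of $\hat{\Psi}$. At a fixed interior point, $\widehat{F}_{Y|DXZ}$ in \eqref{1.eq28} admits the standard i.i.d.\ expansion with linear term $\frac{1}{nh_{g}f_{DXZ}(d,x,z)}\sum_{i}K\bigl(\tfrac{X_{i}-x}{h_{g}}\bigr)\mathbbm{1}(D_{i}=d)\mathbbm{1}(Z_{i}=z)\bigl(\mathbbm{1}(Y_{i}\leq y)-F_{Y|DXZ}(y|d,x,z)\bigr)$, the smoothing through $L$ and the bias $O(h_{g}^{2}+h_{0}^{2})$ being of smaller order than $(nh_{g})^{-1/2}$ under $h_{0}/h_{g}\to0$, $nh_{0}h_{g}\to\infty$ and $\sqrt{nh_{g}}\,h_{g}^{2}\to0$; the plug-in errors from the estimated propensity scores, from $(\hat{x}_{m1},\hat{x}_{m2})$ (which enter at rate $O_{p}(a_{n})=o_{p}((nh_{g})^{-1/2})$), and from the estimated support $S(Y|d)$ are asymptotically negligible. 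The only nontrivial plug-in is $\hat{\varphi}_{d}$: the implicit function theorem applied to \eqref{1.eq29}, using that $f_{Y|DXZ}$ is bounded away from zero (Assumption \ref{1.assREGNSP}), gives a linear representation of $\hat{\varphi}_{d}(y;\hat{x}_{m})-\varphi_{d}(y;x_{m})$ in terms of the CDF estimation errors at the two conditioning points defining the matched pair, divided by $f_{Y|DXZ}$ at the relevant argument. Substituting these expansions into all four coordinates of $\hat{\Psi}$ and collecting terms — a computation in which the chain rule through $\hat{\varphi}_{d}$ produces the density ratios $\phi_{d1},\phi_{d2}$, and the reuse of the same CDF estimators across coordinates produces the cross terms — expresses $\sqrt{nh_{g}}\bigl(\hat{\Psi}-\hat{\boldsymbol{W}}_{g}\bigr)$ as a kernel-weighted sum of mean-zero summands indexed by $d$ and by the six conditioning points $\tilde{\boldsymbol{z}}_{1},\dots,\tilde{\boldsymbol{z}}_{6}$. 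A Lyapunov CLT for this triangular array gives asymptotic normality with variance $\kappa\bigl(\Sigma_{NSP,1}+\Sigma_{NSP,2}+\Sigma_{NSP,3}\bigr)$, where $\kappa=\int K^{2}$, the sum over $d$ arising because the summands carry the mutually exclusive factors $\mathbbm{1}(D_{i}=d)$ and each coordinate's per-point variance is $\mathbb{V}_{\delta|DXZ}=F_{Y|DXZ}(1-F_{Y|DXZ})$ divided by $f_{DXZ}$. Plugging into the expansion above yields \eqref{1.eq47}.

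The main obstacle is this joint linearization together with the accounting of the covariance structure: tracking exactly how the nonparametric noise propagates through the composition $\widehat{F}_{Y|DXZ}\bigl(\hat{\varphi}_{d}(\cdot)|d,\hat{x}_{m},z'\bigr)$ so as to produce each entry of the blocks $\Sigma_{NSP,d}$ — the density-ratio weights $\phi_{d1},\phi_{d2}$, the sums $\sum_{k=1}^{3}$ and $\sum_{k=4}^{6}$ in the $(3,3)$ and $(4,4)$ entries, and the off-diagonal entries coming from coordinates that share a conditioning point — while verifying that the propensity-score and matching-point plug-in errors are indeed $o_{p}((nh_{g})^{-1/2})$. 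A secondary but still nontrivial point is the first step, namely showing (under $J^{3/2}\bigl(\sqrt{\log(n)/(nh_{g})}+h_{g}\bigr)\to0$) that the monotonicity constraints never bind, so that the sieve estimator genuinely collapses to the pointwise local GMM estimator whose distribution is then standard.
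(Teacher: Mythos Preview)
Your overall architecture---reduce to a pointwise first-order condition at $u_{0}$, linearize $\hat{\Psi}$ (including the composition through $\hat{\varphi}_{d}$, which produces the density ratios $\phi_{d1},\phi_{d2}$), treat the propensity-score and matching-point plug-ins as $o_{p}((nh_{g})^{-1/2})$, and then apply a kernel CLT to read off $\Sigma_{NSP}$---is the same as the paper's, and your description of the variance bookkeeping (the six conditioning points, the $\mathbbm{1}(D_{i}=d)$ orthogonality giving the sum over $d$, the shared-point cross terms) is accurate.

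The one place where your argument diverges from the paper, and where it has a gap, is the ``constraints nonbinding'' step. You propose to show that the \emph{unconstrained} pointwise minimizers $\boldsymbol{y}_{j}^{*}=\arg\min_{\boldsymbol{y}}\hat{Q}_{NSP}(\boldsymbol{y},u_{j})$ already line up monotonically, hence coincide with $\hat{\boldsymbol{g}}(u_{j})$, and you justify their uniform rate ``by (the proof of) Theorem~\ref{thmSUPCONS}.'' But Theorem~\ref{thmSUPCONS} is about the \emph{constrained} estimator; its proof relies on Corollary~\ref{1.corIDNSP}, i.e.\ on identification of the whole monotone path, and the paper is explicit that the population pointwise problem $\min_{\boldsymbol{y}}Q_{NSP}(\boldsymbol{y},u)$ need not have a unique minimizer. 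So consistency (let alone a uniform rate) for the unconstrained $\boldsymbol{y}_{j}^{*}$ is not available from that theorem, and your first step does not go through as written.

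The paper handles this the other way around. It first proves a separate rate lemma (Lemma~\ref{1.thmROCNSP}) for the \emph{constrained} estimator at the nodes,
\[
\max_{u_{j}\in\mathcal{U}_{0}}\|\hat{\boldsymbol{g}}(u_{j})-\boldsymbol{g}^{*}(x_{0},u_{j})\|=O_{p}\bigl(\sqrt{J}\,r_{n}\bigr),\qquad r_{n}=\sqrt{\log(n)/(nh_{g})}+h_{g},
\]
using only uniform consistency of $\hat{\boldsymbol{g}}$, the full-rank Jacobian, and the fact that $\hat{\boldsymbol{g}}$ beats $\boldsymbol{g}^{*}$ in the sample criterion. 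This is exactly where the $\sqrt{J}$ and hence the condition $J^{3/2}r_{n}\to 0$ come from: one needs $\sqrt{J}\,r_{n}=o(1/J)$ to conclude $\hat{g}_{d}(u_{j})-\hat{g}_{d}(u_{j-1})\geq C/J-O_{p}(\sqrt{J}\,r_{n})>0$. Once the constraints at $u_{0}$ are slack, $\hat{\boldsymbol{g}}(u_{0})$ satisfies the first-order condition of $\hat{Q}_{NSP}(\cdot,u_{0})$ (it need not be the \emph{global} unconstrained minimizer, only a critical point near $\boldsymbol{g}^{*}(x_{0},u_{0})$), and from there your expansion is the right one. In short: keep your steps 2--3, but replace your step 1 by a rate result for the constrained estimator itself rather than for the unconstrained pointwise minimizers.
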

In Theorem \ref{thmASYMP}, undersmoothing is made ($\sqrt{nh_{g}}\cdot h_{g}^{2}\to 0$) for simplicity. Similarly, I assume $h_{0}/h_{g}\to 0$ so that the smoothing bias introduced by $L$ is negligible. The requirement $\sqrt{nh_{g}}\cdot a_{n}\to 0$ guarantees that the estimation of the generalized propensity scores and the matching points do not affect the asymptotic distribution. The requirements $nh_{0}h_{g}\to \infty$ and $nh_{g}^{3}\to \infty$ ensure that the partial derivatives of $\hat{F}_{Y|DXZ}$ with respect to the continuous arguments are uniformly consistent over the interior of the support. Finally, the rate of $J$ guarantees that the constraints in the minimization problem \eqref{1.eq27} are nonbinding within $\mathcal{U}_{0}$. Based on the asymptotic distribution, an overidentification test can be constructed under the optimal weighting matrix by $\mathcal{J}_{NSP}=nh_{g}\hat{Q}_{NSP}(\hat{\boldsymbol{g}}(u_{0}),u_{0})$. Under the null that all the moment conditions hold jointly, $\mathcal{J}_{NSP}\overset{d}{\to} \chi_{1}^{2}$.

\subsection{Proofs of Results in Appendix \ref{secSA2}}\label{1.appE1}

%
%


%
%
%
%
%
%
%
%
%
%
%
%
%
\subsubsection{The Matching Points}
\begin{proof}[Proof of Theorem \ref{1.thmA1}]
I prove the theorem by verifying the conditions needed for Theorem 3.1 in \cite{chernozhukov2007estimation}. By a bit abuse of notation for simplicity, denote an arbitrary $(x_{1},x_{2})\in S^{2}(X)$ by $\bm{x}$. Two conditions need to be verified: i) $\sup_{\boldsymbol{x}\in S_{0}^{2}(X)}|\hat{Q}_{x}(\boldsymbol{x})-Q_{x}(\boldsymbol{x})|=O_{p}(c_{n}^{2})$ and ii) there exist positive $(\delta,\mu)$ such that for any $\varepsilon\in(0,1)$, there are $(\mu_{\varepsilon},n_{\varepsilon})$ such that for all $n>n_{\varepsilon}$, $\hat{Q}_{x}(\boldsymbol{x})\geq \mu\left[\rho(\boldsymbol{x},\mathcal{X}_{m})\land\delta\right]^{2}$ uniformly on $\Delta\equiv \{\boldsymbol{x}\in S_{0}^{2}(X):\rho(\boldsymbol{x},\mathcal{X}_{m})\geq \mu_{\varepsilon}c_{n}\}$ with probability at least $1-\varepsilon$.

For i), the following equation holds for any $z\in S(Z)$ following the standard argument for Nadaraya-Watson estimator under Assumption \ref{1.assREGMP} (see for example \cite{mack1982weak}, \cite{silverman2018density}, \cite{hardle1990bandwidth}, \cite{masry1996multivariate}, etc.):
\begin{linenomath*}\begin{align*}
\sup_{x\in S_{0}(X)}\big|\hat{p}_{d}(x,z)-p_{d}(x,z))\big|=O_{p}\left(\sqrt{\frac{\log(n)}{nh_{x}}}+h_{x}^{2}\right)\equiv O_{p}(c_{n})
\end{align*}\end{linenomath*}
Therefore, $\sup_{\boldsymbol{x}\in S_{0}^{2}(X)}|\hat{Q}_{x}(\boldsymbol{x})-Q_{x}(\boldsymbol{x})|=O_{p}(c_{n}^{2})$.
%
%

For ii), by Assumption \ref{assApp1} and $Q_{x},\hat{Q}_{x}\geq 0$,
\begin{linenomath*}\begin{align*}
\inf_{\boldsymbol{x}\in\Delta}\hat{Q}_{x}(\boldsymbol{x})=&\inf_{\boldsymbol{x}\in\Delta}|\hat{Q}_{x}(\boldsymbol{x})-Q_{x}(\boldsymbol{x})+Q_{x}(\boldsymbol{x})|\\
\geq &\inf_{\boldsymbol{x}\in\Delta}Q_{x}(\boldsymbol{x})-\sup_{\boldsymbol{x}\in\Delta}|\hat{Q}_{x}(\boldsymbol{x})-Q_{x}(\boldsymbol{x})|\\
\geq &C\left[\rho(\boldsymbol{x},\mathcal{X}_{m})\land\delta\right]^{2}-O_{p}\left(c_{n}^{2}\right)\\
\geq &C[\mu_{\varepsilon}^{2}c_{n}^{2}\land\delta^{2}]-O_{p}\left(c_{n}^{2}\right)
\end{align*}\end{linenomath*}
Therefore, we can choose large enough $(\mu_{\varepsilon},n_{\varepsilon})$ so that the desired inequality holds uniformly on $\Delta$ for some $\delta$ and $\mu$.
\end{proof}

\subsubsection{The Nonseparable Model}
For convenience, let us review some notation introduced earlier. For $d\in S(D)$ and $x\in S(X)$, let $\underline{y}_{dx}$ and $\bar{y}_{dx}$ denote the lower and upper boundaries of $S(Y|d,x)$. By continuity, $S(Y|d,x)=[\underline{y}_{dx},\bar{y}_{dx}]$. Similarly, let $S(Y|d)=[\underline{y}_{d},\bar{y}_{d}]$. Several parameter spaces are used in the proofs:
\begin{linenomath}\begin{align*}
&\mathcal{G}_{0}=\{\bm{g}:[0,1]\mapsto\prod_{d=1}^{3}S(Y|d)\text{ and is weakly increasing}\},\\
&\mathcal{G}^{*}=\{\bm{g}:[0,1]\mapsto\prod_{d=1}^{3}S(Y|d,x_{0})\text{ and is weakly increasing}\},\\
&\hat{\mathcal{G}}=\{\bm{g}:[0,1]\mapsto\prod_{d=1}^{3}S(Y|d)\text{ and is piecewise affine with $J$ increasing nodes at }\{u_{j}\}\}
\end{align*}\end{linenomath}

The following lemmas are needed to prove Theorems \ref{thmSUPCONS} and \ref{thmASYMP}. The proofs of these lemmas are in Appendix \ref{1.appE2}
\begin{appxlem}\label{1.lemA1}
Suppose $h_{0}/h_{g}\to 0,h_{g}\to 0$ and $\sqrt{nh_{g}}\to\infty$. Under Assumptions \ref{1.assREGMP} and \ref{1.assREGNSP}, the following holds for all $d\in S(D)$ and $z\in S(Z)$,
\begin{linenomath*}\begin{align}
&\sup_{\substack{y\in S(Y|d)\\x\in S_{0}(X)}}\big|\hat{F}_{Y|DXZ}(y|d,x,z)-F_{Y|DXZ}(y|d,x,z)\big|=O_{p}\left(\sqrt{\frac{\log(n)}{nh_{g}}}+h_{g}\right)\label{1.eqA5}
\end{align}\end{linenomath*}
\end{appxlem}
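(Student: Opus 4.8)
The plan is to write $\hat{F}_{Y|DXZ}$ as a ratio of a smoothed numerator and a Nadaraya--Watson-type denominator, bound each uniformly, and linearize the ratio. Inserting the normalization $1/(nh_{g})$ into both the numerator and the denominator of \eqref{1.eq28} (it cancels), set $\hat{m}(y,x)=(nh_{g})^{-1}\sum_{i}L(\tfrac{y-Y_{i}}{h_{0}})\mathbbm{1}(D_{i}=d)K(\tfrac{X_{i}-x}{h_{g}})\mathbbm{1}(Z_{i}=z)$ and $\hat{f}(x)=(nh_{g})^{-1}\sum_{i}\mathbbm{1}(D_{i}=d)K(\tfrac{X_{i}-x}{h_{g}})\mathbbm{1}(Z_{i}=z)$, so that $\hat{F}_{Y|DXZ}(y|d,x,z)=\hat{m}(y,x)/\hat{f}(x)$. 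The population counterparts are $m(y,x)=F_{YDXZ}(y,d,x,z)=F_{Y|DXZ}(y|d,x,z)\,f_{DXZ}(d,x,z)$ and $f(x)=f_{DXZ}(d,x,z)=\mathbb{P}(D=d,Z=z\mid X=x)\,f_{X}(x)$, the latter bounded away from zero on $S_{0}(X)$ under Assumption \ref{1.assREGMP}. Since $\hat{F}_{Y|DXZ}-F_{Y|DXZ}=[(\hat{m}-m)f-m(\hat{f}-f)]/(\hat{f}f)$ and $0\le m\le f$ is bounded, it suffices to show that $\hat{f}$ is uniformly bounded away from zero on $S_{0}(X)$ w.p.a.1 and that $\sup_{y\in S(Y|d),\,x\in S_{0}(X)}|\hat{m}(y,x)-m(y,x)|$ and $\sup_{x\in S_{0}(X)}|\hat{f}(x)-f(x)|$ are $O_{p}(\sqrt{\log(n)/nh_{g}}+h_{g})$; \eqref{1.eqA5} then follows because $\hat{f}f$ is bounded away from zero w.p.a.1 and $m,f$ are bounded.

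For each of $\hat{m}$ and $\hat{f}$ I split into a stochastic term $\hat{m}-\mathbb{E}\hat{m}$ and a bias term $\mathbb{E}\hat{m}-m$. The stochastic term is controlled by standard uniform-convergence results for kernel estimators (\cite{mack1982weak}, \cite{silverman2018density}, \cite{masry1996multivariate}): the functions $(y,x)\mapsto L(\tfrac{y-\cdot}{h_{0}})\mathbbm{1}(\cdot=d)K(\tfrac{\cdot-x}{h_{g}})\mathbbm{1}(\cdot=z)$, indexed by $(y,x)$, form a VC-type class since $L$ is a bounded monotone c.d.f.\ and $K$ is of bounded variation; because $L\le 1$, the localization---and hence the variance $O(1/nh_{g})$ and the envelope $O(1/h_{g})$---is driven by $K$ alone, so $h_{0}$ does not enter the rate. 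This yields $\sup_{y,x}|\hat{m}-\mathbb{E}\hat{m}|=O_{p}(\sqrt{\log(n)/nh_{g}})$ and likewise $\sup_{x}|\hat{f}-\mathbb{E}\hat{f}|=O_{p}(\sqrt{\log(n)/nh_{g}})$. For the bias of $\hat{m}$, write $\mathbb{E}\hat{m}(y,x)=\int\!\!\int\tfrac1{h_{g}}K(\tfrac{v-x}{h_{g}})L(\tfrac{y-w}{h_{0}})\,\partial_{y}F_{YDXZ}(w,d,v,z)\,dw\,dv$; integrating by parts in $w$ with $L'=K$ replaces the inner integral by $F_{YDXZ}(y,d,v,z)$ up to an $O(h_{0})$ remainder (or $O(h_{0}^{2})$ where $F_{YDXZ}$ is $C^{2}$ in $y$), which is $o(h_{g})$ since $h_{0}=o(h_{g})$, leaving the outer $x$-smoothing bias $\int K(t)\big[F_{YDXZ}(y,d,x-h_{g}t,z)-F_{YDXZ}(y,d,x,z)\big]\,dt$. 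The bias of $\hat{f}$ involves no $y$, and since $f_{DXZ}(d,\cdot,z)$ is smooth under Assumption \ref{1.assREGMP} it is the usual $O(h_{g}^{2})$.

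The main obstacle is bounding this $x$-smoothing bias uniformly near the boundary of the support. For a fixed $y$ the map $v\mapsto F_{YDXZ}(y,d,v,z)$ need not be twice---or even once---differentiable: as $v$ varies the conditional support $S(Y|d,v)=[\underline{y}_{dv},\bar{y}_{dv}]$ moves, so the map has a kink at any $v$ with $\underline{y}_{dv}=y$ or $\bar{y}_{dv}=y$, and the naive $O(h_{g}^{2})$ Taylor bound fails there. As noted after Assumption \ref{1.assREGNSP}, however, the relevant c.d.f.\ is $C^{3}$ with bounded derivatives on the support and flat in $y$ outside it, hence \emph{globally Lipschitz}; a first-order bound on the displayed integral then gives $\sup_{y\in S(Y|d),\,x\in S_{0}(X)}|\mathbb{E}\hat{m}(y,x)-m(y,x)|=O(h_{g})$, uniformly over all such $(y,x)$---including the pairs with $y\notin S(Y|d,x)$, which is exactly the ``larger-than-the-support'' region where this estimator is used in the later proofs. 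Combining, $\sup|\hat{m}-m|=O_{p}(\sqrt{\log(n)/nh_{g}}+h_{g})$ and $\sup|\hat{f}-f|=O_{p}(\sqrt{\log(n)/nh_{g}}+h_{g}^{2})=o_{p}(1)$, so $\hat{f}\ge f/2$ on $S_{0}(X)$ w.p.a.1 and is thus uniformly bounded away from zero; substituting into the ratio decomposition gives \eqref{1.eqA5}.
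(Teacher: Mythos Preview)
Your proof is correct and follows essentially the same route as the paper: a ratio/linearization of $\hat{F}_{Y|DXZ}$ into a numerator $\hat{m}$ and denominator $\hat{f}$, a stochastic/bias split, the standard uniform $O_{p}(\sqrt{\log n/(nh_{g})})$ for the stochastic part, an $O(h_{0})$ bound on the $y$-smoothing bias via integration by parts with $L'=K$ (the content of the paper's Lemma~\ref{1.lemA6}), and---crucially---a first-order Lipschitz bound rather than a second-order Taylor expansion for the $x$-smoothing bias, yielding $O(h_{g})$ because $v\mapsto F_{YDXZ}(y,d,v,z)$ has kinks where $y$ hits $\underline{y}_{dv}$ or $\bar{y}_{dv}$.

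The one tactical difference is organizational. The paper first uses monotonicity of $\hat{F}_{Y|DXZ}$ and the fact that $F_{Y|DXZ}$ is $0$ or $1$ outside $[\underline{y}_{dx},\bar{y}_{dx}]$ to reduce the supremum over $S(Y|d)$ to (three times) the supremum over $[\underline{y}_{dx},\bar{y}_{dx}]$, and only then runs the bias/variance analysis. You instead work directly over all of $S(Y|d)$, invoking the global Lipschitz property of $F_{YDXZ}$ (flat off the support, $C^{3}$ on it) to handle $y\notin S(Y|d,x)$ in the same stroke. Both are valid; the paper's reduction is a clean device but does not avoid the Lipschitz argument, since even for $y\in[\underline{y}_{dx},\bar{y}_{dx}]$ the map $x'\mapsto F_{Y|DXZ}(y|d,x',z)$ can be nondifferentiable when $y$ falls outside $S(Y|d,x')$ for nearby $x'$.
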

The term $h_{g}$ captures the order of the bias and is not the common $h_{g}^{2}$. This is because the supremum is taken over $S(Y|d)$ where $F_{Y|DXZ}$ is not differentiable everywhere. Taylor expansion is not available but since $F_{Y|DXZ}$ is globally Lipshitz continuous, an order of $O(h_{g})$ for the bias can still be achieved.  

Recall $a_{n}$ is the rate of convergence of the estimated matching points. Let $r_{n}=O_{p}\left(\sqrt{\frac{\log(n)}{nh_{g}}}+h_{g}\right)$. 
\begin{appxlem}\label{1.lemA2}
Suppose $(x_{0},z)$ and $(x_{m},z')$ are a matching pair. Under the conditions in Theorem \ref{thmSUPCONS}, the following holds for all $d\in S(D)$:
\begin{linenomath*}\begin{equation*}
\sup_{y\in [\underline{y}_{d},\bar{y}_{d}]}\big|\hat{F}_{Y|DXZ}(\hat{\varphi}_{d}(y;\hat{x}_{m})|d,\hat{x}_{m},z)-F_{Y|DXZ}(\varphi_{d}(y;x_{m})|d,x_{m},z)\big|=O_{p}(r_{n}+a_{n})
\end{equation*}\end{linenomath*}
\end{appxlem}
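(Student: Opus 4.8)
The plan is to bound the target quantity by a triangle inequality that separates the error from estimating $\varphi_d$ (together with the matching point $x_m$) from the error in the CDF estimator itself. Specifically, I would write
\begin{linenomath*}\begin{align*}
&\big|\hat{F}_{Y|DXZ}(\hat{\varphi}_{d}(y;\hat{x}_{m})|d,\hat{x}_{m},z)-F_{Y|DXZ}(\varphi_{d}(y;x_{m})|d,x_{m},z)\big|\\
&\quad\leq \big|\hat{F}_{Y|DXZ}(\hat{\varphi}_{d}(y;\hat{x}_{m})|d,\hat{x}_{m},z)-F_{Y|DXZ}(\hat{\varphi}_{d}(y;\hat{x}_{m})|d,\hat{x}_{m},z)\big|\\
&\qquad +\big|F_{Y|DXZ}(\hat{\varphi}_{d}(y;\hat{x}_{m})|d,\hat{x}_{m},z)-F_{Y|DXZ}(\varphi_{d}(y;x_{m})|d,x_{m},z)\big|.
\end{align*}\end{linenomath*}
The first term is $O_p(r_n)$ uniformly in $y$ by Lemma \ref{1.lemA1}, since the supremum over $y\in[\underline y_d,\bar y_d]$ of the argument $\hat\varphi_d(y;\hat x_m)$ still ranges within $S(Y|d)$ and $\hat x_m\in S_0(X)$ w.p.a.1 (using that $\hat x_m$ is consistent, which gives $\hat x_m \in S_0(X)$ eventually). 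So the whole job is to control the second, purely population term.

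For the second term, I would further split off the dependence on the estimated matching point: the map $(y',x)\mapsto F_{Y|DXZ}(y'|d,x,z)$ is Lipschitz in $x$ (by Assumption \ref{1.assREGNSP}, three times continuously differentiable with bounded derivatives in the continuous arguments, so in particular Lipschitz in $x$ on the relevant compact set), and $\hat x_m - x_m = O_p(a_n)$, so replacing $\hat x_m$ by $x_m$ in the second slot costs $O_p(a_n)$ uniformly. It remains to bound $\sup_y|F_{Y|DXZ}(\hat\varphi_d(y;\hat x_m)|d,x_m,z)-F_{Y|DXZ}(\varphi_d(y;x_m)|d,x_m,z)|$. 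Here I use that $F_{Y|DXZ}(\cdot|d,x_m,z)$ is Lipschitz (in fact its derivative is the bounded density $f_{Y|DXZ}$), so this is bounded by a constant times $\sup_y|\hat\varphi_d(y;\hat x_m)-\varphi_d(y;x_m)|$, reducing everything to a uniform rate for the $\varphi_d$ estimator.

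The main obstacle, and the step requiring the most care, is establishing $\sup_{y\in[\underline y_d,\bar y_d]}|\hat\varphi_d(y;\hat x_m)-\varphi_d(y;x_m)| = O_p(r_n+a_n)$. The estimator $\hat\varphi_d$ in \eqref{1.eq29} is defined as an argmin of a squared difference of two CDF estimators, while $\varphi_d$ is the population analogue equating $F_{Y|DXZ}(\cdot|d,x_m,z')$ with $F_{Y|DXZ}(\cdot|d,x_0,z)$; by Assumption \ref{1.assREGNSP} the conditional density $f_{Y|DXZ}$ is bounded away from zero on the support, so $F_{Y|DXZ}(\cdot|d,x_m,z')$ is strictly increasing with a uniformly bounded-below derivative, hence its inverse is Lipschitz. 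I would therefore argue that the criterion in \eqref{1.eq29} is, up to a uniform $O_p(r_n+a_n)$ perturbation of its two CDF arguments (Lemma \ref{1.lemA1} for the $\hat x_m$-evaluated CDF after absorbing the $a_n$ cost of $\hat x_m$ versus $x_m$, and Lemma \ref{1.lemA1} again for the $x_0$-evaluated CDF), a well-separated minimization problem: the population objective $(F_{Y|DXZ}(y'|d,x_m,z')-F_{Y|DXZ}(y|d,x_0,z))^2$ vanishes uniquely at $y'=\varphi_d(y;x_m)$ and grows at least quadratically away from it because of the density lower bound. A standard argmin-consistency / rate argument (as in the matching-points proof above, or via the quadratic minorization) then converts the $O_p(r_n+a_n)$ uniform perturbation of the objective into an $O_p(\sqrt{r_n+a_n})$ crude rate and, using the linear (not merely quadratic) behavior afforded by strict monotonicity of $F_{Y|DXZ}$, in fact into the sharp $O_p(r_n+a_n)$ rate for $\hat\varphi_d-\varphi_d$, uniformly in $y$. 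Combining this with the two Lipschitz reductions above yields the claimed bound.
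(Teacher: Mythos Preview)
Your initial triangle-inequality decomposition and the handling of the $\hat F$-versus-$F$ term and the $\hat x_m$-versus-$x_m$ term are fine and match the paper. The gap is in the last step, where you reduce to a uniform rate for $\hat\varphi_d-\varphi_d$ and justify it by asserting that the population objective $(F_{Y|DXZ}(y'|d,x_m,z')-F_{Y|DXZ}(y|d,x_0,z))^2$ vanishes uniquely at $y'=\varphi_d(y;x_m)$ with quadratic growth. This fails because the supremum runs over $y\in[\underline y_d,\bar y_d]=S(Y|d)$, which can strictly contain $S(Y|d,x_0)=[\underline y_{dx_0},\bar y_{dx_0}]$. For $y\notin[\underline y_{dx_0},\bar y_{dx_0}]$ one has $F_{Y|DXZ}(y|d,x_0,z)\in\{0,1\}$, and then every $y'$ outside $[\underline y_{dx_m},\bar y_{dx_m}]$ also gives $F_{Y|DXZ}(y'|d,x_m,z')\in\{0,1\}$, so the objective vanishes on an entire interval. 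The density lower bound in Assumption \ref{1.assREGNSP} holds only on $S(Y|d,x,z)$, not on $S(Y|d)$, so your ``derivative bounded below, hence well-separated minimum'' argument does not go through uniformly in $y$. Consequently the claimed uniform rate $\sup_y|\hat\varphi_d-\varphi_d|=O_p(r_n+a_n)$ is unsupported.

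The paper notes precisely this obstacle and avoids it by never bounding $\hat\varphi_d-\varphi_d$ itself. Instead it first shows
\[
\sup_{y}\big|F_{Y|DXZ}(\hat\varphi_d(y;\hat x_m)|d,x_m,z')-F_{Y|DXZ}(\varphi_d(y;x_m)|d,x_m,z')\big|=O_p(r_n+a_n),
\]
i.e.\ the rate at $z'$, the value appearing in the defining criterion \eqref{1.eq29}. This follows directly from the minimizer property of $\hat\varphi_d$ combined with $F_{Y|DXZ}(\varphi_d(y;x_m)|d,x_m,z')=F_{Y|DXZ}(y|d,x_0,z)$ and Lemma \ref{1.lemA1}, without any uniqueness or separation argument. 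The transfer from $z'$ to $z$ is then done via the inequality
\[
\big|F_{Y|DXZ}(y'|d,x_m,z)-F_{Y|DXZ}(y|d,x_m,z)\big|\le C\,\big|F_{Y|DXZ}(y'|d,x_m,z')-F_{Y|DXZ}(y|d,x_m,z')\big|,
\]
valid for all $y,y'\in[\underline y_d,\bar y_d]$ because $S(Y|d,x_m,z)=S(Y|d,x_m,z')$ (so both CDFs are flat on the same set) and the ratio of their densities on the common support is uniformly bounded. This sidesteps the non-identifiability of $\varphi_d$ outside $S(Y|d,x_0)$ entirely.
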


Under Lemma \ref{1.lemA2} and given uniform consistency of the estimator of the generalized propensity scores, it can be shown that
\begin{linenomath*}\begin{equation*}
\sup_{\boldsymbol{y}\in\prod_{d}S(Y|d)}|\hat{Q}_{NSP}(\boldsymbol{y},u)-Q_{NSP}(\boldsymbol{y},u)|=o_{p}(1),\forall u\in [0,1].
\end{equation*}\end{linenomath*}
Then we have the following lemma.
\begin{appxlem}\label{1.lemA3}
Under all the conditions in Theorem \ref{thmSUPCONS}, the following holds:
\begin{linenomath*}\begin{equation*}
\sup_{\boldsymbol{g}\in\hat{\mathcal{G}}}\big|\frac{1}{J}\sum_{j=1}^{J}\hat{Q}_{NSP}(\boldsymbol{g}(u_{j}),u_{j})-\int_{0}^{1}Q_{NSP}(\boldsymbol{g}(u),u)du\big|=o_{p}(1)
\end{equation*}\end{linenomath*}
\end{appxlem}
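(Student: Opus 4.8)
The plan is to bound the difference between the Riemann sum $\frac{1}{J}\sum_{j=1}^{J}\hat{Q}_{NSP}(\boldsymbol{g}(u_{j}),u_{j})$ and the integral $\int_{0}^{1}Q_{NSP}(\boldsymbol{g}(u),u)du$ by splitting it into two pieces via the triangle inequality: a \emph{stochastic piece} $\frac{1}{J}\sum_j \big[\hat{Q}_{NSP}(\boldsymbol{g}(u_j),u_j)-Q_{NSP}(\boldsymbol{g}(u_j),u_j)\big]$, and a \emph{discretization piece} $\frac{1}{J}\sum_j Q_{NSP}(\boldsymbol{g}(u_j),u_j)-\int_0^1 Q_{NSP}(\boldsymbol{g}(u),u)du$. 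The first piece is handled by the uniform consistency of $\hat{Q}_{NSP}$ over $\boldsymbol{y}\in\prod_d S(Y|d)$ already recorded just before the lemma statement: since each $\boldsymbol{g}(u_j)\in\prod_d S(Y|d)$ for every $\boldsymbol{g}\in\hat{\mathcal{G}}$, the average of the deviations is bounded by $\sup_{\boldsymbol{y},u}|\hat{Q}_{NSP}(\boldsymbol{y},u)-Q_{NSP}(\boldsymbol{y},u)|=o_p(1)$, uniformly in $\boldsymbol{g}\in\hat{\mathcal{G}}$.

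For the second (deterministic) piece, I would show that the map $u\mapsto Q_{NSP}(\boldsymbol{g}(u),u)$ has variation that is controlled uniformly over $\boldsymbol{g}\in\hat{\mathcal{G}}$, so that the Riemann-sum error is $O(1/J)=o(1)$. The key observation is that $Q_{NSP}(\boldsymbol{y},u)$ is Lipschitz in $u$ (through $\boldsymbol{W}_g(u)$ and the target vector, which depend smoothly on $u$ under Assumption \ref{1.assREGNSP}) and Lipschitz in $\boldsymbol{y}$ on the compact set $\prod_d S(Y|d)$, with constants not depending on $\boldsymbol{g}$. Since every $\boldsymbol{g}\in\hat{\mathcal{G}}$ is monotone with range in the fixed compact box $[\underline{\boldsymbol{y}},\bar{\boldsymbol{y}}]$, the composite function $u\mapsto Q_{NSP}(\boldsymbol{g}(u),u)$ has total variation bounded by a constant depending only on the Lipschitz constants and the diameter of the box (the monotonicity of each coordinate of $\boldsymbol{g}$ is what prevents the total variation from blowing up — the variation contributed by $\boldsymbol{g}$ telescopes). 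A function of bounded variation $V$ on $[0,1]$ has Riemann-sum error over the equispaced grid $\{u_j\}$ bounded by $V/J$, giving a deterministic $O(1/J)$ bound uniform in $\boldsymbol{g}\in\hat{\mathcal{G}}$.

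Combining the two bounds gives $\sup_{\boldsymbol{g}\in\hat{\mathcal{G}}}\big|\frac{1}{J}\sum_{j=1}^{J}\hat{Q}_{NSP}(\boldsymbol{g}(u_{j}),u_{j})-\int_{0}^{1}Q_{NSP}(\boldsymbol{g}(u),u)du\big|\leq o_p(1)+O(1/J)=o_p(1)$, since $J\to\infty$. The main obstacle I anticipate is establishing the uniform-in-$\boldsymbol{g}$ bounded-variation claim rigorously: one needs to verify that the Lipschitz constant of $Q_{NSP}(\cdot,\cdot)$ in its $\boldsymbol{y}$-argument, when composed with a monotone path, produces a total variation that telescopes rather than accumulating across the $J$ subintervals, and that the dependence of $Q_{NSP}$ on $u$ through $\boldsymbol{W}_g(u)$ and the moment target is genuinely Lipschitz (not merely continuous) under the stated smoothness assumptions. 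Once monotonicity of $\boldsymbol{g}$ and compactness of $\prod_d S(Y|d)$ are exploited, the argument is routine, but care is needed because $Q_{NSP}$ involves $\varphi_d$ and the conditional CDFs, whose smoothness in the relevant arguments must be traced back to Assumption \ref{1.assREGNSP}.
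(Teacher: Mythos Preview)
Your proposal is correct and follows essentially the same route as the paper: the same triangle-inequality split into a stochastic piece (handled by the uniform consistency of $\hat Q_{NSP}$ already recorded) and a deterministic discretization piece, the latter controlled by a Lipschitz bound on $Q_{NSP}$ together with the telescoping that monotonicity of each $g_d$ affords, yielding an $O(1/J)$ Riemann-sum error uniformly in $\boldsymbol g\in\hat{\mathcal G}$. The one technical device in the paper you did not anticipate addresses exactly the smoothness concern you flag at the end: before invoking the mean-value theorem, the paper replaces the supremum over $\hat{\mathcal G}$ by the supremum over $\mathcal G^{*}=\{\boldsymbol g:[0,1]\to\prod_d S(Y|d,x_0),\ \text{increasing}\}$, using the flatness of the conditional CDFs outside their supports so that any $\boldsymbol g\in\hat{\mathcal G}$ has a counterpart in $\mathcal G^{*}$ giving the same $Q_{NSP}$ values; on $\prod_d S(Y|d,x_0)\times[0,1]$ the function $Q_{NSP}$ is genuinely differentiable with bounded partials, so no boundary issues arise.
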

%

\begin{proof}[Proof of Theorem \ref{thmSUPCONS}]
Suppose $\sup_{u\in\mathcal{U}_{0}}|\hat{\boldsymbol{g}}(u)-\boldsymbol{g}^{*}(x_{0},u)|>\delta$ for some postive $\delta$. By construction, $\hat{\mathcal{G}}\subseteq\mathcal{G}_{0}$ and thus $\hat{\boldsymbol{g}}\in\mathcal{G}_{0}$. By Corollary \ref{1.corIDNSP}, there exists an $\varepsilon>0$ such that 
\begin{linenomath*}\begin{equation*}
\int_{0}^{1}Q_{NSP}(\hat{\boldsymbol{g}}(u),u)du-\int_{0}^{1}Q_{NSP}(\boldsymbol{g}^{*}(x_{0},u),u)du>\varepsilon
\end{equation*}\end{linenomath*}
Let $\bm{g}_{0}\in \hat{\mathcal{G}}$ such that $\bm{g}_{0}(0)=\bm{g}^{*}(x_{0},0)$, $\bm{g}_{0}(1)=\bm{g}^{*}(x_{0},1)$, and $\bm{g}_{0}(u_{j})=\bm{g}^{*}(x_{0},u_{j})$ for all $u_{j}$. Then by monotonicity and $J\to\infty$, $||\bm{g}_{0}(u)-\bm{g}^{*}(x_{0},u)||=o(1)$ for all $u\in [0,1]$. Therefore, by the Dominated Convergence Theorem, 
\begin{linenomath*}\begin{equation*}
\int_{0}^{1}Q_{NSP}(\boldsymbol{g}_{0}(u),u)du-\int_{0}^{1}Q_{NSP}(\boldsymbol{g}^{*}(x_{0},u),u)du=o(1).
\end{equation*}
Hence,
\end{linenomath*}
\begin{linenomath*}\begin{align*}
&\int_{0}^{1}Q_{NSP}(\hat{\boldsymbol{g}}(u),u)du-\int_{0}^{1}Q_{NSP}(\boldsymbol{g}^{*}(x_{0},u),u)du\\
= & \frac{1}{J}\sum_{j=1}^{J}\hat{Q}_{NSP}(\hat{\boldsymbol{g}}(u_{j}),u_{j})-\int_{0}^{1}Q_{NSP}(\boldsymbol{g}^{*}(x_{0},u),u)du\\
&+\int_{0}^{1}Q_{NSP}(\hat{\boldsymbol{g}}(u),u)du- \frac{1}{J}\sum_{j=1}^{J}\hat{Q}_{NSP}(\hat{\boldsymbol{g}}(u_{j}),u_{j})\\
\leq &\frac{1}{J}\sum_{j=1}^{J}\hat{Q}_{NSP}(\boldsymbol{g}_{0}(u_{j}),u_{j})-\int_{0}^{1}Q_{NSP}(\boldsymbol{g}_{0}(u),u)du\\
&+\int_{0}^{1}Q_{NSP}(\boldsymbol{g}_{0}(u),u)du-\int_{0}^{1}Q_{NSP}(\boldsymbol{g}^{*}(x_{0},u),u)du\\
&+\sup_{\boldsymbol{g}\in\hat{\mathcal{G}}}\big|\frac{1}{J}\sum_{j=1}^{J}\hat{Q}_{NSP}(\boldsymbol{g}(u_{j}),u_{j})-\int_{0}^{1}Q_{NSP}(\boldsymbol{g}(u),u)du\big|\\
\leq &2\sup_{\boldsymbol{g}\in\hat{\mathcal{G}}}\big|\frac{1}{J}\sum_{j=1}^{J}\hat{Q}_{NSP}(\boldsymbol{g}(u_{j}),u_{j})-\int_{0}^{1}Q_{NSP}(\boldsymbol{g}(u),u)du\big|+o(1)\\
=&o_{p}(1)
\end{align*}\end{linenomath*}
where the first inequality is by the definition of $\hat{\bm{g}}$ and $\bm{g}_{0}\in\hat{\mathcal{G}}$. The last equality is by Lemma \ref{1.lemA3}.
\end{proof}
%
To derive the asymptotic distribution in Theorem \ref{thmASYMP}, the inverse of the divergence rate of $J$ needs to be slower than the rate of convergence of $\hat{\bm{g}}$ at the interior nodes, so that the inequality constraints in the minimization problem \eqref{1.eq27} are nonbinding. Lemma \ref{1.thmROCNSP} provides an upper bound on the rate of convergence of $\hat{\bm{g}}(u_{j})$.
\begin{appxlem}\label{1.thmROCNSP}
Suppose $a_{n}=o(r_{n})$. Under all the conditions in Theorem \ref{thmSUPCONS}, for node $u_{j}$,
\begin{linenomath*}\begin{align}\label{1.eq44}
\max_{u_{j}\in\mathcal{U}_{0}}||\hat{\boldsymbol{g}}(u_{j})-\boldsymbol{g}^{*}(x_{0},u_{j})||=O_{p}(\sqrt{J}r_{n})
\end{align}\end{linenomath*}
\end{appxlem}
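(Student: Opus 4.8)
The plan is to turn the optimality of $\hat{\boldsymbol{g}}$ in the minimization problem \eqref{1.eq27} into a scalar quadratic inequality for $\max_{u_{j}\in\mathcal{U}_{0}}\|\hat{\boldsymbol{g}}(u_{j})-\boldsymbol{g}^{*}(x_{0},u_{j})\|$ and then solve it. Let $\boldsymbol{g}_{0}\in\hat{\mathcal{G}}$ be the piecewise-affine interpolant of $\boldsymbol{g}^{*}(x_{0},\cdot)$ constructed in the proof of Theorem \ref{thmSUPCONS}, so that $\boldsymbol{g}_{0}(u_{j})=\boldsymbol{g}^{*}(x_{0},u_{j})$ at every node; it is feasible in \eqref{1.eq27}. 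Since $\hat{\boldsymbol{g}}$ minimizes $\frac{1}{J}\sum_{j=1}^{J}\hat{Q}_{NSP}(\boldsymbol{g}(u_{j}),u_{j})$ over $\hat{\mathcal{G}}$ and $\hat{Q}_{NSP}\geq0$,
\begin{linenomath*}\begin{equation*}
0\leq\sum_{j=1}^{J}\hat{Q}_{NSP}(\hat{\boldsymbol{g}}(u_{j}),u_{j})\leq\sum_{j=1}^{J}\hat{Q}_{NSP}(\boldsymbol{g}_{0}(u_{j}),u_{j})=\sum_{j=1}^{J}\hat{Q}_{NSP}(\boldsymbol{g}^{*}(x_{0},u_{j}),u_{j}).
\end{equation*}\end{linenomath*}
I would bound the right-hand side first, then pass from $\hat{Q}_{NSP}$ to $Q_{NSP}$ at $\hat{\boldsymbol{g}}(u_{j})$, and finally invoke a local quadratic minorant for $Q_{NSP}$.

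For the right-hand side, $Q_{NSP}(\boldsymbol{g}^{*}(x_{0},u),u)=0$ by construction, so $\hat{Q}_{NSP}(\boldsymbol{g}^{*}(x_{0},u_{j}),u_{j})$ is the squared (weighted) norm of the sampling error of the moment vector at $\boldsymbol{g}^{*}(x_{0},u_{j})$. That error is controlled uniformly in $u_{j}$: the generalized propensity scores converge at rate $O_{p}(c_{n})$ (Assumption \ref{1.assREGMP}), the conditional c.d.f.\ estimators at rate $O_{p}(r_{n})$ (Lemma \ref{1.lemA1}), and the matched quantities $\hat{F}_{Y|DXZ}(\hat{\varphi}_{d}(\cdot;\hat{x}_{m})|d,\hat{x}_{m},z')$ at rate $O_{p}(r_{n}+a_{n})$ (Lemma \ref{1.lemA2}); under $a_{n}=o(r_{n})$ all of these are $O_{p}(r_{n})$, and the suprema in the lemmas make the bound uniform over $j$. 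Hence $\max_{1\leq j\leq J}\hat{Q}_{NSP}(\boldsymbol{g}^{*}(x_{0},u_{j}),u_{j})=O_{p}(r_{n}^{2})$, so $\sum_{j=1}^{J}\hat{Q}_{NSP}(\hat{\boldsymbol{g}}(u_{j}),u_{j})=O_{p}(Jr_{n}^{2})$, and since every summand is nonnegative, $\hat{Q}_{NSP}(\hat{\boldsymbol{g}}(u_{j}),u_{j})=O_{p}(Jr_{n}^{2})$ uniformly over $j$.

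By Theorem \ref{thmSUPCONS}, $\sup_{u\in\mathcal{U}_{0}}\|\hat{\boldsymbol{g}}(u)-\boldsymbol{g}^{*}(x_{0},u)\|=o_{p}(1)$, so w.p.a.1 each $\hat{\boldsymbol{g}}(u_{j})$, $u_{j}\in\mathcal{U}_{0}$, lies in a fixed neighborhood of $\boldsymbol{g}^{*}(x_{0},u_{j})$. On that neighborhood, write $\hat{Q}_{NSP}-Q_{NSP}=2\langle m,\hat{m}-m\rangle+\|\hat{m}-m\|^{2}$, where $m(\boldsymbol{g}(u),u)$ is the population moment residual (which vanishes at $\boldsymbol{g}^{*}(x_{0},u)$) and $\hat{m}$ its plug-in estimate; then $\hat{m}-m=O_{p}(r_{n})$ uniformly (same inputs as above) and $\|m(\boldsymbol{g}(u_{j}),u_{j})\|=O(\|\boldsymbol{g}(u_{j})-\boldsymbol{g}^{*}(x_{0},u_{j})\|)$ by the Lipschitz continuity of the moment map (differentiability of $F_{Y|DXZ}$ with bounded derivatives and Lipschitz $\varphi_{d}$, Assumption \ref{1.assREGNSP}). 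Writing $A_{j}=\|\hat{\boldsymbol{g}}(u_{j})-\boldsymbol{g}^{*}(x_{0},u_{j})\|$, this gives, uniformly over $u_{j}\in\mathcal{U}_{0}$,
\begin{linenomath*}\begin{equation*}
\big|\hat{Q}_{NSP}(\hat{\boldsymbol{g}}(u_{j}),u_{j})-Q_{NSP}(\hat{\boldsymbol{g}}(u_{j}),u_{j})\big|=O_{p}(r_{n}A_{j})+O_{p}(r_{n}^{2}).
\end{equation*}\end{linenomath*}
Conversely, linearizing $Q_{NSP}$ about the true path, full column rank of the Jacobian of $\Psi$ at $\boldsymbol{g}^{*}(x_{0},u)$ (maintained under the identification conditions of Theorem \ref{1.thmIDNSP}), together with its continuity in $u$ and compactness of $\mathcal{U}_{0}$, yields a constant $c>0$ with $Q_{NSP}(\boldsymbol{g}(u_{j}),u_{j})\geq cA_{j}^{2}$ for all $u_{j}\in\mathcal{U}_{0}$ once $A_{j}$ is small.

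Combining at a node $u_{j^{\star}}$ with $A_{j^{\star}}=\max_{u_{j}\in\mathcal{U}_{0}}A_{j}$,
\begin{linenomath*}\begin{equation*}
cA_{j^{\star}}^{2}\leq Q_{NSP}(\hat{\boldsymbol{g}}(u_{j^{\star}}),u_{j^{\star}})\leq\hat{Q}_{NSP}(\hat{\boldsymbol{g}}(u_{j^{\star}}),u_{j^{\star}})+O_{p}(r_{n}A_{j^{\star}})+O_{p}(r_{n}^{2})=O_{p}(Jr_{n}^{2})+O_{p}(r_{n}A_{j^{\star}}),
\end{equation*}\end{linenomath*}
a scalar quadratic inequality $cA_{j^{\star}}^{2}-O_{p}(r_{n})A_{j^{\star}}-O_{p}(Jr_{n}^{2})\leq0$; solving it gives $A_{j^{\star}}=O_{p}(r_{n})+O_{p}(\sqrt{J}\,r_{n})=O_{p}(\sqrt{J}\,r_{n})$, which is \eqref{1.eq44}. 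The main obstacle is the node-wise local quadratic minorant uniformly over $\mathcal{U}_{0}$: because (as noted after \eqref{1.eq27}) the pointwise minimizer of $Q_{NSP}(\cdot,u)$ need not be \emph{globally} unique, one must argue only a \emph{local} lower bound, which reduces to full column rank of the Jacobian of $\Psi$ at $\boldsymbol{g}^{*}(x_{0},u_{j})$---ensured by $f_{Y|DXZ}$ being bounded away from zero and by the derivatives entering $\partial_{y}\varphi_{d}$ and the chain rule being nonzero---plus its uniformity over the compact $\mathcal{U}_{0}$; a secondary point is checking, via $a_{n}=o(r_{n})$ and Lemma \ref{1.lemA2}, that matching-point estimation stays negligible at rate $r_{n}$ throughout.
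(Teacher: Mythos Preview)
Your proposal is correct and follows essentially the same approach as the paper: both use optimality of $\hat{\boldsymbol{g}}$ against the interpolant $\boldsymbol{g}_{0}$, the uniform $O_{p}(r_{n})$ rate of $\hat{\Psi}$ (Lemmas \ref{1.lemA1}--\ref{1.lemA2} with $a_{n}=o(r_{n})$), and full column rank of the Jacobian of $\Psi$ at $\boldsymbol{g}^{*}(x_{0},u)$ to convert the criterion into a quadratic in $\|\hat{\boldsymbol{g}}(u_{j})-\boldsymbol{g}^{*}(x_{0},u_{j})\|$. The only cosmetic difference is order of operations: the paper first linearizes via the mean value theorem to get $\sum_{u_{j}\in\mathcal{U}_{0}}A_{j}^{2}\leq C\sum_{j}\hat{Q}_{NSP}(\hat{\boldsymbol{g}}(u_{j}),u_{j})+O_{p}(Jr_{n}^{2})$ and then bounds $\max_{j}A_{j}\leq\sqrt{\sum_{j}A_{j}^{2}}$, whereas you first extract $\hat{Q}_{NSP}(\hat{\boldsymbol{g}}(u_{j^{\star}}),u_{j^{\star}})\leq O_{p}(Jr_{n}^{2})$ from nonnegativity and then apply the local quadratic minorant at $u_{j^{\star}}$ and solve the resulting scalar inequality.
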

Now we can derive the asymptotic distribution of $\hat{\bm{g}}(x_{0},u_{0})$ for some $u_{0}\in [u_{1},...,u_{J}]\cap \mathcal{U}_{0}$.
\begin{proof}[Proof of Theorem \ref{thmASYMP}]
For all $d\in S(D)$, there exists a $C>0$ such that
\begin{linenomath*}\begin{align*}
&\min_{\substack{u_{j-1},u_{j}\in \mathcal{U}_{0}\\j=2,...,J}}\left(\hat{g}_{d}(u_{j})-\hat{g}_{d}(u_{j-1})\right)\\
=&\min_{\substack{u_{j-1},u_{j}\in \mathcal{U}_{0}\\j=2,...,J}}\left[\left(\hat{g}_{d}(u_{j})-g^{*}_{d}(x_{0},u_{j})\right)-\left(\hat{g}_{d}(u_{j-1})-g^{*}_{d}(x_{0},u_{j-1})\right)
+g^{*}_{d}(x_{0},u_{j})-g^{*}_{d}(x_{0},u_{j-1})\right]\\
\geq & \inf_{u\in\mathcal{U}_{0}}\left(g^{*}_{d}(x_{0},u)-g^{*}_{d}\left(x_{0},u-\frac{1}{J}\right)\right)-2\max_{u_{j}\in\mathcal{U}_{0}}||\hat{\boldsymbol{g}}(u_{j})-\boldsymbol{g}^{*}(x_{0},u_{j})||\\
\geq & \frac{C}{J}-O_{p}\left(\sqrt{J}r_{n}\right)
\end{align*}\end{linenomath*}
where the last inequality holds because the derivative of $g^{*}_{d}(x_{0},\cdot)$ is bounded away from $0$. Under the rate for $J$ specified in Theorem \ref{thmASYMP}, $1/J$ dominates the second term in order. Therefore, for all $d\in S(D)$ and all $u_{j-1},u_{j}\in \mathcal{U}_{0}$, $\hat{g}_{d}(u_{j})>\hat{g}_{d}(u_{j-1})$ w.p.a.1. Consequently, $\hat{\boldsymbol{g}}(u_{0})$ satisfies the first order condition w.p.a.1. since the constraints for it are nonbinding.

In order to obtain the expansion of the first order condition for $\hat{\boldsymbol{g}}(u_{0})$, let us first consider a typical term in $\hat{\Psi}(\hat{g}_{d}(u_{0}))$, $\hat{F}_{Y|DXZ}(\hat{\varphi}_{d}(\hat{g}_{d}(u_{0}))|d,\hat{x}_{m},z)$, where $\hat{x}_{m}$ is a generic estimated matching point such that $(x_{0},z)$ and $(x_{m},z')$ are a matching pair. The dependence of $\hat{\varphi}_{d}$ on $\hat{x}_{m}$ is omitted for simplicity. 

Since $g^{*}_{d}(x_{0},u_{0})$ is in the interior of $S(Y|d,x_{0})$, $\varphi_{d}(g^{*}_{d}(x_{0},u_{0}))$ is unique and in the interior of $S(Y|d,x_{m})$ as well. The mapping $\hat{\varphi}(g^{*}_{d}(x_{0},u_{0}))$ is then consistent following the standard argument for consistency of extremum estimators and uniform consistency of $\hat{F}_{Y|DXZ}$. The partial derivatives $\partial_{Y}\hat{F}_{Y|DXZ}(\cdot|d,\cdot,z)$ and $\partial_{X}\hat{F}_{Y|DXZ}(\cdot|d,\cdot,z)$ are also uniformly consistent of $f_{Y|DXZ}(\cdot|d,\cdot,z)$ and $\partial_{X}F_{Y|DXZ}(\cdot|d,\cdot,z)$ over a small neighborhood of any interior point in the support under Assumptions \ref{1.assREGMP} and \ref{1.assREGNSP} and the rates of the bandwidths specified in Theorem \ref{thmASYMP}. By manipulating the first order condition of the minimization problem equation \eqref{1.eq29} with $y=g^{*}_{d}(x_{0},u_{0})$, we have
\begin{linenomath*}\begin{align*}
&\hat{\varphi}_{d}(g^{*}_{d}(x_{0},u_{0}))-\varphi_{d}(g_{d}^{*}(x_{0},u_{0}))\\
=&\frac{\hat{F}_{Y|DXZ}(g^{*}_{d}(x_{0},u_{0})|d,x_{0},z)-\hat{F}_{Y|DXZ}(g^{*}_{d}(x_{m},u_{0})|d,x_{m},z')}{f_{Y|DXZ}(g^{*}_{d}(x_{m},u_{0})|d,x_{m},z')}+O_{p}(a_{n}),
\end{align*}\end{linenomath*}

Similarly,
\begin{linenomath*}\begin{align*}
\hat{F}_{Y|DXZ}(\hat{\varphi}_{d}(\hat{g}_{d}(u_{0}))|d,\hat{x}_{m},z)=\hat{F}_{Y|DXZ}(\hat{\varphi}_{d}(\hat{g}_{d}(u_{0}))|d,x_{m},z)+O_{p}(a_{n})
\end{align*}\end{linenomath*}

Therefore, we can expand $\hat{F}_{Y|DXZ}(\hat{\varphi}_{d}(\hat{g}_{d}(u_{0}))|d,\hat{x}_{m},z)$ around\newline $\hat{F}_{Y|DXZ}(\varphi_{d}(g^{*}_{d}(x_{0},u_{0}))|d,x_{m},z)$ by
\begin{linenomath*}\begin{align*}
&\hat{F}_{Y|DXZ}(\hat{\varphi}_{d}(\hat{g}_{d}(u_{0}))|d,\hat{x}_{m},z)-\hat{F}_{Y|DXZ}(\varphi_{d}(g^{*}_{d}(x_{0},u_{0}))|d,x_{m},z)\\
=&\hat{F}_{Y|DXZ}(\hat{\varphi}_{d}(\hat{g}_{d}(u_{0}))|d,x_{m},z)-\hat{F}_{Y|DXZ}(\varphi_{d}(g^{*}_{d}(x_{0},u_{0}))|d,x_{m},z)+O_{p}(a_{n})\\
=&\hat{F}_{Y|DXZ}(\hat{\varphi}_{d}(\hat{g}_{d}(u_{0}))|d,x_{m},z)-\hat{F}_{Y|DXZ}(\hat{\varphi}_{d}(g^{*}_{d}(x_{0},u_{0}))|d,x_{m},z)\\
&+\hat{F}_{Y|DXZ}(\hat{\varphi}_{d}(g^{*}_{d}(x_{0},u_{0}))|d,x_{m},z)-\hat{F}_{Y|DXZ}(\varphi_{d}(g^{*}_{d}(x_{0},u_{0}))|d,x_{m},z)+O_{p}(a_{n})\\
=&\partial_{g}F_{Y|DXZ}(\varphi_{d}(g^{*}_{d}(x_{0},u_{0}))|d,x_{m},z)\cdot (\hat{g}_{d}(u_{0})-g^{*}_{d}(x_{0},u_{0}))\\
&+\frac{f_{Y|DXZ}(g^{*}_{d}(x_{m},u_{0})|d,x_{m},z)}{f_{Y|DXZ}(g^{*}_{d}(x_{m},u_{0})|d,x_{m},z')}\left(\hat{F}_{Y|DXZ}(g^{*}_{d}(x_{0},u_{0})|d,x_{0},z)-\hat{F}_{Y|DXZ}(g^{*}_{d}(x_{m},u_{0})|d,x_{m},z')\right)\\
&+O_{p}(a_{n})
\end{align*}\end{linenomath*}
where $\partial_{g}F_{Y|DXZ}(\varphi_{d}(g^{*}_{d}(x_{0},u_{0}))|d,x_{m},z)$ is the derivative of $F_{Y|DXZ}(\varphi_{d}(\cdot)|d,x_{m},z)$ evaluated at $g^{*}_{d}(x_{0},u_{0})$.

Take the first order partial derivative of the objective function $\frac{1}{J}\sum_{j=1}^{J}\hat{Q}_{NSP}(\bm{g}(u_{j}),u_{j})$ with respect to $\bm{g}(u_{0})$ and set it to zero. As $\sqrt{nh_{g}}a_{n}=o(1)$, under some manipulation, we obtain the following first order condition for $\hat{\boldsymbol{g}}(u_{0})$:
\begin{linenomath*}\begin{align*}
&\hat{\boldsymbol{g}}(u_{0})-\boldsymbol{g}^{*}(x_{0},u_{0})=-\big(\Pi_{NSP}'\boldsymbol{W}_{g}(u_{0})\Pi_{NSP}\big)^{-1}\cdot\Pi_{NSP}'\boldsymbol{W}_{g}(u_{0})\cdot\left[A_{n}+o_{p}\left(\frac{1}{\sqrt{nh_{g}}}\right)\right]\normalsize
\end{align*}\end{linenomath*}
where 
\begin{linenomath*}\begin{align*}
A_{n}=&\left(\hat{\Psi}^{*}\left(\boldsymbol{g}^{*}(x_{0},u_{0})\right)-\Psi\left(\boldsymbol{g}^{*}(x_{0},u_{0})\right)\right)\\
&\footnotesize+\begin{pmatrix}
0\\
0\\
\sum_{d=1}^{3}p_{d}(x_{m1},0)\phi_{d1}\cdot\big(\hat{F}_{Y|DXZ}(g^{*}_{d}(x_{0},u_{0})|d,x_{0},0)-\hat{F}_{Y|DXZ}(g^{*}_{d}(x_{m1},u_{0})|d,x_{m1},1)\big)\\
\sum_{d=1}^{3}p_{d}(x_{m2},1)\phi_{d2}\cdot\big(\hat{F}_{Y|DXZ}(g^{*}_{d}(x_{0},u_{0})|d,x_{0},1)-\hat{F}_{Y|DXZ}(g^{*}_{d}(x_{m2},u_{0})|d,x_{m2},0)\big)
\end{pmatrix},
\end{align*}\end{linenomath*}
and $\Pi_{NSP}$, $\phi_{d1}$, $\phi_{d2}$ are as defined in Appendix \ref{secSA2}. In $\hat{\Psi}^{*}$, the generalized propensity scores, the matching points, and the mapping $\varphi_{d}$ are all equal to the true values. The only estimated parts are the conditional cumulative distribution functions in $\Psi$. The bias of $A_{n}$ is $O(h_{g}^{2})$ by the standard argument since $g^{*}_{d}(x_{0},u_{0})$ is in the interior of $S(Y|d,x_{0})$ for all $d$. To derive the asymptotic variance, let
\begin{linenomath*}\begin{equation*}
\mathbb{G}_{d}(y,x,z)\equiv \hat{F}_{YDXZ}(y,d,x,z)-F_{Y|DXZ}(y|d,x,z)\hat{f}_{DXZ}(d,x,z).
\end{equation*}\end{linenomath*}
Recall that by Lemma \ref{1.thmMEQ},
\begin{linenomath*}\begin{align*}
F_{Y|DXZ}(g^{*}_{d}(x_{0},u_{0})|d,x_{0},0)-F_{Y|DXZ}(g^{*}_{d}(x_{m1},u_{0})|d,x_{m1},1)&=0,\\
F_{Y|DXZ}(g^{*}_{d}(x_{0},u_{0})|d,x_{0},1)-F_{Y|DXZ}(g^{*}_{d}(x_{m2},u_{0})|d,x_{m2},0)&=0.
\end{align*}\end{linenomath*}
Therefore, $A_{n}$ is asymptotically equivalent as the following expression. 
\begin{linenomath*}\begin{equation*}\footnotesize
\begin{pmatrix}
\sum_{d=1}^{3}p_{d}(x_{0},0)\dfrac{\mathbb{G}_{d}(g_{d}^{*}(x_{0},u_{0}),x_{0},0)}{f_{DXZ}(d,x_{0},0)}\\[2em]
\sum_{d=1}^{3}p_{d}(x_{0},1)\dfrac{\mathbb{G}_{d}(g_{d}^{*}(x_{0},u_{0}),x_{0},1)}{f_{DXZ}(d,x_{0},1)}\\[2em]
\sum_{d=1}^{3}p_{d}(x_{m1},0)\left[\dfrac{\mathbb{G}_{d}(g_{d}^{*}(x_{m1},u_{0}),x_{m1},0)}{f_{DXZ}(d,x_{m1},0)}+\phi_{d1}\dfrac{\mathbb{G}_{d}(g_{d}^{*}(x_{0},u_{0}),x_{0},0)}{f_{DXZ}(d,x_{0},0)}-\phi_{d1}\dfrac{\mathbb{G}_{d}(g_{d}^{*}(x_{m1},u_{0}),x_{m1},1)}{f_{DXZ}(d,x_{m1},1)}\right]\\[2em]
\sum_{d=1}^{3}p_{d}(x_{m2},1)\left[\dfrac{\mathbb{G}_{d}(g_{d}^{*}(x_{m2},u_{0}),x_{m2},1)}{f_{DXZ}(d,x_{m2},1)}+\phi_{d2}\dfrac{\mathbb{G}_{d}(g_{d}^{*}(x_{0},u_{0}),x_{0},1)}{f_{DXZ}(d,x_{0},1)}-\phi_{d2}\dfrac{\mathbb{G}_{d}(g_{d}^{*}(x_{m2},u_{0}),x_{m2},0)}{f_{DXZ}(d,x_{m2},0)}\right]
\end{pmatrix}
\end{equation*}\end{linenomath*}\normalsize

The variance of each $\mathbb{G}_{d}$ follows Theorem 2.2 in \cite{li2008nonparametric}:
\begin{linenomath*}\begin{equation*}
\mathbb{V}(\mathbb{G}_{d}(y,x,z))=\frac{\kappa f_{DXZ}(d,x,z) F_{Y|DXZ}(y|d,x,z)\cdot (1-F_{Y|DXZ}(y|d,x,z))}{nh_{g}}+o\left(\frac{1}{nh_{g}}\right)
\end{equation*}\end{linenomath*}
 
By i.i.d., the covariance between $\mathbb{G}_{d}(y,x,z)$ and $\mathbb{G}_{d'}(y',x',z')$ for any different $(y,d,x,z)$ and $(y',d',x',z')$ is equal to
\begin{linenomath*}\begin{align*}
&\mathbb{C}\left(\mathbb{G}_{d}(y,x,z), \mathbb{G}_{d'}(y',x',z')\right)\\
=&\frac{1}{nh^{2}_{g}}\mathbb{E}\Big[\left(L\left(\frac{y-Y_{i}}{h_{0}}\right)-F_{Y|DXZ}(y|d,x,z)\right)\left(L\left(\frac{y'-Y_{i}}{h_{0}}\right)-F_{Y|DXZ}(y'|d',x',z')\right)\\
&\cdot K\left(\frac{X_{i}-x}{h_{g}}\right)K\left(\frac{X_{i}-x'}{h_{g}}\right)\cdot\mathbbm{1}(D_{i}=d)\mathbbm{1}(D_{i}=d')\mathbbm{1}(Z_{i}=z)\mathbbm{1}(Z_{i}=z')\Big]+o\left(\frac{1}{nh_{g}}\right)
\end{align*}\end{linenomath*}
Whenever $z'\neq z$ or $d'\neq d$, the expectation on the right hand side is equal to zero. When $z'=z$ and $d'=d$ but $x\neq x'$, for large enough $n$, $h_{g}$ is sufficiently small such that $|x'-x|>2h_{g}$. As $K(\cdot)=0$ outside $[-1,1]$, the product of the kernel functions is always zero. Hence, the expectation is zero whenever $(d,x,z)\neq (d',x',z')$. Therefore, $\mathbb{C}\left(\mathbb{G}_{d}(y,x,z),\mathbb{G}_{d'}(y',x',z')\right)=o(1/nh_{g})$ and is negligible.

As $h_{g}^{2}\sqrt{nh_{g}}\to 0$, by central limit theorem and the delta method, the asymptotic distribution in the theorem is obtained.
\end{proof}
\subsection{Proofs of Lemmas}\label{1.appE2}
\subsubsection{Proof of Lemma \ref{1.lemA1}}
The following lemma generalizes Lemma A.5 in \cite{li2008nonparametric} for a bounded $Y$.
\begin{appxlem}\label{1.lemA6} Under the conditions in Lemma \ref{1.lemA1}, the following equation holds for all $(d,z)\in S(D,Z)$:
\begin{linenomath*}\begin{align*}
\sup_{\substack{y\in [\underline{y}_{d},\bar{y}_{d}]\\x\in S_{0}(X)}}\Big|\mathbb{E}\left(L\left(\frac{y-Y_{i}}{h_{0}}\right)|D_{i}=d,X_{i}=x,Z_{i}=z\right)-F_{Y|DXZ}(y|d,x,z)\Big|=O(h_{0})\label{1.eqA20}
\end{align*}\end{linenomath*}
\end{appxlem}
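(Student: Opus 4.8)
The plan is to reduce the statement to a textbook first-order smoothing-bias bound by moving the smoothing from $L$ onto $F_{Y|DXZ}$. First I would write the conditional expectation as a Riemann--Stieltjes integral against the conditional CDF,
\[
\mathbb{E}\!\left[L\!\left(\frac{y-Y_i}{h_0}\right)\Big|D_i=d,X_i=x,Z_i=z\right]=\int L\!\left(\frac{y-t}{h_0}\right)\,dF_{Y|DXZ}(t\mid d,x,z).
\]
Because $L$ is a cumulative distribution function, the map $t\mapsto L((y-t)/h_0)$ tends to $1$ as $t\to-\infty$ and to $0$ as $t\to+\infty$, which is exactly complementary to the behaviour of $F_{Y|DXZ}(\cdot\mid d,x,z)$, so integration by parts annihilates both boundary terms; using $L'=K$ from Assumption~\ref{1.assREGNSP} this leaves $\int F_{Y|DXZ}(t\mid d,x,z)\,h_0^{-1}K((y-t)/h_0)\,dt$. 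The substitution $v=(y-t)/h_0$ together with $\int K(v)\,dv=1$ then gives the exact identity
\[
\mathbb{E}\!\left[L\!\left(\frac{y-Y_i}{h_0}\right)\Big|d,x,z\right]-F_{Y|DXZ}(y\mid d,x,z)=\int\Big[F_{Y|DXZ}(y-h_0 v\mid d,x,z)-F_{Y|DXZ}(y\mid d,x,z)\Big]K(v)\,dv .
\]

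From here the bound is routine. As noted in the discussion following Assumption~\ref{1.assREGNSP}, $y\mapsto F_{Y|DXZ}(y\mid d,x,z)$ is globally Lipschitz (its density $f_{Y|DXZ}$ is bounded on the support and the function is flat outside it), and since the derivatives are bounded and $(d,z)$ ranges over a finite set while $x$ ranges over the compact $S_0(X)$, one Lipschitz constant $M$ works uniformly. Because $K$ is compactly supported (Assumption~\ref{1.assREGMP}; a finite first absolute moment would already suffice), the integrand in the identity above is bounded pointwise by $M h_0 |v|\,|K(v)|$, whence
\[
\sup_{\substack{y\in[\underline{y}_d,\bar{y}_d]\\ x\in S_0(X)}}\Big|\mathbb{E}\!\left[L\!\left(\tfrac{y-Y_i}{h_0}\right)\Big|d,x,z\right]-F_{Y|DXZ}(y\mid d,x,z)\Big|\le M h_0\!\int |v|\,|K(v)|\,dv=O(h_0),
\]
and the conclusion extends to all $(d,z)\in S(D,Z)$ since there are finitely many of them.

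The one point to flag — and the reason the rate is $O(h_0)$ rather than the usual $O(h_0^2)$ — is that one cannot sharpen the last step by Taylor-expanding $F_{Y|DXZ}(\cdot\mid d,x,z)$ to second order and invoking $\int vK(v)\,dv=0$: as $y$ sweeps over all of $[\underline{y}_d,\bar{y}_d]$ it reaches the endpoints, where $F_{Y|DXZ}(\cdot\mid d,x,z)$ has a kink (its density stays bounded away from zero right up to the edge of the support), so the smoothing window straddles a non-differentiable point and only first-order Lipschitz control is available. This mirrors the discussion after Lemma~\ref{1.lemA1}. There is no genuinely hard step; the only care needed is in making the Lipschitz constant uniform in $(d,x,z)$ and in the bookkeeping of the vanishing boundary terms in the integration by parts.
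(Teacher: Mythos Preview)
Your argument is correct and reaches the target rate, but it is organized differently from the paper's proof, and the difference is worth noting. The paper integrates over the finite conditional support $[\underline{y}_{dx},\bar{y}_{dx}]$, then changes variables and integrates by parts; this produces explicit boundary terms of the form $L\!\big((y-\bar{y}_{dx})/h_0\big)\big(1-F_{Y|DXZ}(y|d,x,z)\big)$ and $\big(1-L((y-\underline{y}_{dx})/h_0)\big)F_{Y|DXZ}(y|d,x,z)$, plus a remainder from a second-order Taylor expansion that is $O(h_0^2)$. The paper then handles the boundary terms by a case analysis in $y$ relative to $\underline{y}_{dx}$ and $\bar{y}_{dx}$, using that $K$ (hence $L'$) is supported on $[-1,1]$, and shows they are $O(h_0)$ uniformly. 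Your route instead extends the integral to all of $\mathbb{R}$ so that the boundary terms at $\pm\infty$ vanish trivially, and then applies only the global Lipschitz bound on $F_{Y|DXZ}(\cdot|d,x,z)$ to the convolution remainder. This is more elementary and avoids the case split entirely; the price is that you do not isolate the interior contribution as $O(h_0^2)$, which the paper uses in Remark~\ref{apprem1} to explain why the rate improves to $O(h_0^2)$ once $y$ is restricted to an interior subset of $S(Y|d,x)$. Both arguments are sound; yours is shorter, the paper's is more informative about where exactly the extra factor of $h_0$ is lost.
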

\begin{proof}[Proof of Lemma \ref{1.lemA6}]
By definition,
\begin{linenomath*}\begin{align*}
\mathbb{E}\left(L\left(\frac{y-Y_{i}}{h_{0}}\right)|D_{i}=d,X_{i}=x,Z_{i}=z\right)=&\int_{\underline{y}_{dx}}^{\bar{y}_{dx}}L\left(\frac{y-y'}{h_{0}}\right)dF_{Y|DXZ}(y'|d,x,z)
\end{align*}\end{linenomath*}
Let $\nu=\frac{y-y'}{h_{0}}$. By change of variable and integration by part, the right hand side can be rewritten as
\begin{linenomath*}\begin{align*}
&-\int_{\frac{y-\bar{y}_{dx}}{h_{0}}}^{\frac{y-\underline{y}_{dx}}{h_{0}}}L(\nu)dF_{Y|DXZ}\left(y-\nu h_{0}|d,x,z\right)\\
=&L\left(\frac{y-\bar{y}_{dx}}{h_{0}}\right)+\int_{\frac{y-\bar{y}_{dx}}{h_{0}}}^{\frac{y-\underline{y}_{dx}}{h_{0}}}L'(\nu)F_{Y|DXZ}(y-\nu h_{0}|d.x,z)d\nu\\
=&L\left(\frac{y-\bar{y}_{dx}}{h_{0}}\right)+F_{Y|DXZ}(y|d,x,z)\left(L\left(\frac{y-\underline{y}_{dx}}{h_{0}}\right)-L\left(\frac{y-\bar{y}_{dx}}{h_{0}}\right)\right)\\
&+\frac{h_{0}^{2}}{2}\int_{\frac{y-\bar{y}_{dx}}{h_{0}}}^{\frac{y-\underline{y}_{dx}}{h_{0}}}\nu^{2} f'_{Y|DXZ}(c(\nu)|d,x,z) L'(\nu)d\nu
\end{align*}\end{linenomath*}
where $c(\nu)\in S(Y|d,x)$. Note that the second order Taylor expansion can be applied to $F_{Y|DXZ}(y-\nu h_{0}|d,x,z)$ around $F_{Y|DXZ}(y|d,x,z)$ even if $y$ may lie outside $S(Y|d,x)$ and $F_{Y|DXZ}(\cdot|d,x,z)$ is not differentiable everywhere between $y$ and $y-\nu h_{0}$. This is because, for instance if $y<\underline{y}_{dx}$, we have $F_{Y|DXZ}(y|d,x,z)=F_{Y|DXZ}(\underline{y}_{dx}|d,x,z)$, so it is equivalent to expand $F_{Y|DXZ}(y-\nu h_{0}|d,x,z)$ around $F_{Y|DXZ}(\underline{y}_{dx}|d,x,z)$. Under Assumption \ref{1.assREGNSP}, $F_{Y|DXZ}(\cdot|d,x,z)$ is three times differentiable on $[\underline{y}_{dx},y-\nu h_{0}]$ for any $\nu\in [(y-\bar{y}_{dx})/h_{0},(y-\underline{y}_{dx})/h_{0}]$. 

Rearranging the terms, we obtain
\begin{linenomath*}\begin{align*}
&\Big|\mathbb{E}\left(L\left(\frac{y-Y_{i}}{h_{0}}\right)|D_{i}=d,X_{i}=x,Z_{i}=z\right)-F_{Y|DXZ}(y|d,x,z)\Big|\\
\leq & \Big|L\left(\frac{y-\bar{y}_{dx}}{h_{0}}\right)\big(1-F_{Y|DXZ}(y|d,x,z)\big)-\left(1-L\left(\frac{y-\underline{y}_{dx}}{h_{0}}\right)\right)F_{Y|DXZ}(y|d,x,z)\Big|+O(h_{0}^{2})
\end{align*}\end{linenomath*}
Recall that the derivative of the function $L$ is the kernel $K$, which is supported on $[-1,1]$. If $\underline{y}_{dx}+h_{0}<y<\bar{y}_{dx}-h_{0}$, then $L(\frac{y-\bar{y}_{dx}}{h_{0}})=0$ and $L(\frac{y-\underline{y}_{dx}}{h_{0}})=1$. The absolute value on the right hand side of the last inequality equals zero. If $y\leq \underline{y}_{dx}+h_{0}$, $L(\frac{y-\bar{y}_{dx}}{h_{0}})=0$ for small enough $h_{0}$. Then the the absolute value is bounded from above by $F_{Y|DXZ}(\underline{y}_{dx}+h_{0}|d,x,z)$, which is $O(h_{0})$ uniformly in $x$ by the mean value theorem. Similarly, if $\bar{y}_{dx}-h_{0}\leq y$, then $L(\frac{y-\underline{y}_{dx}}{h_{0}})=1$ and the absolute value is no greater than $1-F_{Y|DXZ}(\bar{y}_{dx}-h_{0}|d,x,z)=O(h_{0})$ uniformly. 
\end{proof}
\begin{appxrem}\label{apprem1}
From the proof, the bias is $O(h_{0})$ because near the boundaries of $S(Y|d,x)$, the leading term in the bias (the absolute value) is $O(h_{0})$. If the supremum is taken over an interior subset of $S(Y|d,x)$ instead, that term equals zero. The bias is then reduced to $O(h_{0}^{2})$. 
\end{appxrem}
Now we are ready to prove Lemma \ref{1.lemA1}.
By construction, $\hat{F}_{Y|DXZ}(\cdot|d,x,z)\in [0,1]$ and is weakly increasing on $\mathbb{R}$. Also we have $\underline{y}_{d}\leq \underline{y}_{dx}<\bar{y}_{dx}\leq \bar{y}_{d}$. Therefore,
\begin{linenomath*}\begin{align*}
&\sup_{\substack{y\in S(Y|d)\\x\in S_{0}(X)}}\big|\hat{F}_{Y|DXZ}(y|d,x,z)-F_{Y|DXZ}(y|d,x,z)\big|\\
\leq &\sup_{x\in S_{0}(X)}\sup_{y\leq \underline{y}_{dx}}\big|\hat{F}_{Y|DXZ}(y|d,x,z)-F_{Y|DXZ}(y|d,x,z)\big|\\
&+\sup_{x\in S_{0}(X)}\sup_{y\geq \bar{y}_{dx}}\big|\hat{F}_{Y|DXZ}(y|d,x,z)-F_{Y|DXZ}(y|d,x,z)\big|\\
&+\sup_{x\in S_{0}(X)}\sup_{ \underline{y}_{dx}\leq y\leq \bar{y}_{dx}}\big|\hat{F}_{Y|DXZ}(y|d,x,z)-F_{Y|DXZ}(y|d,x,z)\big|\\
=& \sup_{x\in S_{0}(X)}\hat{F}_{Y|DXZ}(\underline{y}_{dx}|d,x,z)
+\sup_{x\in S_{0}(X)}\big(1-\hat{F}_{Y|DXZ}(\bar{y}_{dx}|d,x,z)\big)\\
&+\sup_{x\in S_{0}(X)}\sup_{ \underline{y}_{dx}\leq y\leq \bar{y}_{dx}}\big|\hat{F}_{Y|DXZ}(y|d,x,z)-F_{Y|DXZ}(y|d,x,z)\big|\\
\leq & 3\sup_{x\in S_{0}(X)}\sup_{ \underline{y}_{dx}\leq y\leq \bar{y}_{dx}}\big|\hat{F}_{Y|DXZ}(y|d,x,z)-F_{Y|DXZ}(y|d,x,z)\big|\\
=& 3\sup_{x\in S_{0}(X)}\sup_{ \underline{y}_{dx}\leq y\leq \bar{y}_{dx}}\big|\frac{\hat{F}_{YDXZ}(y,d,x,z)-F_{Y|DXZ}(y|d,x,z)\hat{f}_{DXZ}(d,x,z)}{\hat{f}_{DXZ}(d,x,z)}\big|
\end{align*}\end{linenomath*}
Since the denominator is free of $y$ and is uniformly consistent for $f_{DXZ}(d,x,z)$ on $S_{0}(X)$ under the imposed regularity conditions, its infimum is bounded away from $0$. 

For the numerator, denote it by $\hat{q}(y,x)$:
\begin{linenomath*}\begin{align*}
\sup_{\substack{\underline{y}_{dx}\leq y\leq \bar{y}_{dx}\\x\in S_{0}(X)}}\big|\hat{q}(y,x)\big|
\leq \sup_{\substack{\underline{y}_{dx}\leq y\leq \bar{y}_{dx}\\x\in S_{0}(X)}}\big|\mathbb{E}(\hat{q}(y,x))\big|+\sup_{\substack{\underline{y}_{dx}\leq y\leq \bar{y}_{dx}\\x\in S_{0}(X)}}\big|\hat{q}(y,x)-\mathbb{E}(\hat{q}(y,x))\big|
\end{align*}\end{linenomath*}
Noting that the function $L(\cdot)\cdot K(\cdot)$ is bounded and Lipschitz continuous, the second term on the right hand side is $O_{p}\left(\frac{\log(n)}{\sqrt{nh_{g}}}\right)$ by establishing the exponential type inequalities following the standard argument (e.g. \cite{li2007nonparametric}, pp. 36-40). For the first term, by the law of iterated expectation, 
\begin{linenomath*}\begin{align*}
&\mathbb{E}(\hat{q}(y,x))\\
=&\frac{1}{h_{g}}\sum_{d,z}\int \left[\mathbb{E}_{Y|DXZ}\left(L\left(\frac{y-Y_{i}}{h_{0}}\right)\big|d,x',z\right)-F_{Y|DXZ}(y|d,x,z)\right]K\left(\frac{x'-x}{h_{g}}\right)f_{DXZ}(d,x',z)dx'
\end{align*}\end{linenomath*}
By Lemma \ref{1.lemA6}, $\mathbb{E}_{Y|DXZ}\left(L\left(\frac{y-Y_{i}}{h_{0}}\right)|d,x',z\right)=F_{Y|DXZ}(y|d,x',z)+O(h_{0})$ uniformly. Therefore, $\big|\mathbb{E}(\hat{q}(y,x))\big|$ is uniformly bounded by
\begin{linenomath*}\begin{align*}
\Big|\frac{1}{h_{g}}\sum_{d,z}\int\big[F_{Y|DXZ}(y|d,x',z)-F_{Y|DXZ}(y|d,x,z)\big]K\big(\frac{x'-x}{h_{g}}\big)f_{DXZ}(d,x',z)dx'\Big|+O(h_{0})
\end{align*}\end{linenomath*}

Since the supremum is taken over $[\underline{y}_{dx},\bar{y}_{dx}]$, the function $F_{Y|DXZ}(y|d,\cdot,z)$ may not be differentiable everywhere between $x$ and $x'$. For instance, $y=\bar{y}_{dx}$ may not be in $S(Y|d,x')$. However, $F_{Y|DXZ}(y|d,\cdot,z)$ is Lipschitz continuous by Assumption \ref{1.assREGNSP}. Let $C$ be its Lipschitz constant. For some $C'>0$, we have
\begin{linenomath*}\begin{align*}
\sup_{\substack{\underline{y}_{dx}\leq y\leq \bar{y}_{dx}\\x\in S_{0}(X)}}\big|\mathbb{E}(\hat{q}(y,x))\big|\leq & C\sup_{x\in S_{0}(X)}\sum_{d,z}\int \big|\frac{x'-x}{h_{g}}\big|\cdot K\left(\frac{x'-x}{h_{g}}\right)f_{DXZ}(d,x',z)dx'+O(h_{0})\\
\leq & h_{g} C'\int_{0}^{1}|\nu| K(\nu)d\nu+O(h_{0})\\
=& O(h_{g})
\end{align*}\end{linenomath*}


%
%
%

\begin{appxrem}
Here the bias is of $O(h_{g})$ because of the nonsmoothness of $F_{Y|DXZ}(y|d,\cdot,z)$ at the boundaries. When the supremum is taken over some neighborhood of $y$ in the interior of $S(Y|d,x)$, the second order Taylor expansion of $F_{Y|DXZ}(y|d,x',z)$ around $x$ is feasible because for small enough $h_{g}$, $F_{Y|DXZ}(y|d,\cdot,z)$ is three times differentiable on $[x-h_{g},x+h_{g}]$ under Assumption \ref{1.assREGNSP}. Meanwhile, in that case, the bias by smoothing the cumulative distribution function estimator is also reduced to $O(h_{0}^{2})$ (see Remark \ref{apprem1}). Then the order of the bias would be $O(h_{g}^{2})$.
\end{appxrem}
\subsubsection{Proof of Lemma \ref{1.lemA2}}
By the triangle inequality, 
\begin{linenomath*}\begin{align*}
&\sup_{y\in [\underline{y}_{d},\bar{y}_{d}]}\big|\hat{F}_{Y|DXZ}(\hat{\varphi}_{d}(y;\hat{x}_{m})|d,\hat{x}_{m},z)-F_{Y|DXZ}(\varphi_{d}(y;x_{m})|d,x_{m},z)\big|\\
\leq &\sup_{y\in [\underline{y}_{d},\bar{y}_{d}]}\big|\hat{F}_{Y|DXZ}(\hat{\varphi}_{d}(y;\hat{x}_{m})|d,\hat{x}_{m},z)-F_{Y|DXZ}(\hat{\varphi}_{d}(y;\hat{x}_{m})|d,\hat{x}_{m},z)\big|\\
&+\sup_{y\in [\underline{y}_{d},\bar{y}_{d}]}\big|F_{Y|DXZ}(\hat{\varphi}_{d}(y;\hat{x}_{m})|d,\hat{x}_{m},z)-F_{Y|DXZ}(\varphi_{d}(y;x_{m})|d,x_{m},z)\big|\\
\leq &\sup_{\substack{y\in [\underline{y}_{d},\bar{y}_{d}]\\x\in S_{0}(X)}}\big|\hat{F}_{Y|DXZ}(y|d,x,z)-F_{Y|DXZ}(y|d,x,z)\big|\\
&+\sup_{y\in [\underline{y}_{d},\bar{y}_{d}]}\big|F_{Y|DXZ}(\hat{\varphi}_{d}(y;\hat{x}_{m})|d,x_{m},z)-F_{Y|DXZ}(\varphi_{d}(y;x_{m})|d,x_{m},z)\big|+O_{p}(a_{n})
\end{align*}\end{linenomath*}
where the last inequality holds because of Lipschitz continuity of $F_{Y|DXZ}(\cdot|d,\cdot,z)$. Note the first term on the right hand side is $O_{p}(r_{n})$ by Lemma \ref{1.lemA1}. Hence, we only need to show that
\begin{linenomath*}\begin{equation*}
\sup_{y\in [\underline{y}_{d},\bar{y}_{d}]}\big|F_{Y|DXZ}(\hat{\varphi}_{d}(y;\hat{x}_{m})|d,x_{m},z)-F_{Y|DXZ}(\varphi_{d}(y;x_{m})|d,x_{m},z)\big|=O_{p}(r_{n}+a_{n})
\end{equation*}\end{linenomath*}
This is straightforward if $\sup_{y\in [\underline{y}_{d},\bar{y}_{d}]}|\hat{\varphi}_{d}(y;\hat{x}_{m})-\varphi_{d}(y;x_{m})|=O_{p}(r_{n}+a_{n})$. However, since the minimizer of the population problem of equation \eqref{1.eq29} is not unique when $y\not\in [\underline{y}_{dx_{0}},\bar{y}_{dx_{0}}]$, the uniform rate of $\hat{\varphi}_{d}(y;\hat{x}_{m})$ is hard to establish. Instead, I first show that the equation holds for $z'$:
\begin{linenomath*}\begin{equation*}
\sup_{y\in [\underline{y}_{d},\bar{y}_{d}]}\big|F_{Y|DXZ}(\hat{\varphi}_{d}(y;\hat{x}_{m})|d,x_{m},z')-F_{Y|DXZ}(\varphi_{d}(y;x_{m})|d,x_{m},z')\big|=O_{p}(r_{n}+a_{n}).
\end{equation*}\end{linenomath*}
To show the equation holds, the left hand side satisfies
\begin{linenomath*}\begin{align*}
&\sup_{y\in[\underline{y}_{d},\bar{y}_{d}]}\big|F_{Y|DXZ}(\hat{\varphi}_{d}(y;\hat{x}_{m})|d,x_{m},z')-F_{Y|DXZ}(\varphi_{d}(y;x_{m})|d,x_{m},z')\big|\\
\leq &\sup_{y\in[\underline{y}_{d},\bar{y}_{d}]}\big|\hat{F}_{Y|DXZ}(\hat{\varphi}_{d}(y;\hat{x}_{m})|d,\hat{x}_{m},z')-F_{Y|DXZ}(\varphi_{d}(y;x_{m})|d,x_{m},z')\big|+O_{p}(r_{n}+a_{n})
\end{align*}\end{linenomath*}
by Lipschitz continuity of $F_{Y|DXZ}$ and the uniform rate of convergence of $\hat{F}_{Y|DXZ}$. To bound the first term, recall that by Theorem \ref{1.thmMEQ}, $F_{Y|DXZ}(\varphi_{d}(y;x_{m})|d,x_{m},z')=F_{Y|DXZ}(y|d,x_{0},z)$. Hence, 
\begin{linenomath*}\begin{align*}
\sup_{y\in[\underline{y}_{d},\bar{y}_{d}]}\big|\hat{F}_{Y|DXZ}(y|d,x_{0},z)-F_{Y|DXZ}(\varphi_{d}(y;x_{m})|d,x_{m},z')\big|=O_{p}(r_{n}).
\end{align*}\end{linenomath*}
Therefore,
\begin{linenomath*}\begin{align*}
&\sup_{y\in[\underline{y}_{d},\bar{y}_{d}]}\big|\hat{F}_{Y|DXZ}(\hat{\varphi}_{d}(y;\hat{x}_{m})|d,\hat{x}_{m},z')-F_{Y|DXZ}(\varphi_{d}(y;x_{m})|d,x_{m},z')\big|\\
\leq& \sup_{y\in[\underline{y}_{d},\bar{y}_{d}]}\big|\hat{F}_{Y|DXZ}(\hat{\varphi}_{d}(y;\hat{x}_{m})|d,\hat{x}_{m},z')-\hat{F}_{Y|DXZ}(y|d,x_{0},z)\big|+O_{p}(r_{n})\\
\leq& \sup_{y\in[\underline{y}_{d},\bar{y}_{d}]}\big|\hat{F}_{Y|DXZ}(\varphi_{d}(y;x_{m})|d,\hat{x}_{m},z')-\hat{F}_{Y|DXZ}(y|d,x_{0},z)\big|+O_{p}(r_{n})\\
\leq &\sup_{y\in[\underline{y}_{d},\bar{y}_{d}]}\big|\hat{F}_{Y|DXZ}(\varphi_{d}(y;x_{m})|d,\hat{x}_{m},z')-F_{Y|DXZ}(\varphi_{d}(y;x_{m})|d,x_{m},z')\big|+O_{p}(r_{n})\\
\leq &\sup_{y\in[\underline{y}_{d},\bar{y}_{d}]}\big|\hat{F}_{Y|DXZ}(\varphi_{d}(y;x_{m})|d,\hat{x}_{m},z')-F_{Y|DXZ}(\varphi_{d}(y;x_{m})|d,\hat{x}_{m},z')\big|+O_{p}(r_{n}+a_{n})\\
=&O_{p}(r_{n}+a_{n})
\end{align*}\end{linenomath*}
where the second inequality follows from the definition of $\hat{\varphi}_{d}(y;\hat{x}_{m})$.

Finally, let us establish the relationship between $F_{Y|DXZ}(y'|d,x_{m},z)-F_{Y|DXZ}(y|d,x_{m},z)$ and $F_{Y|DXZ}(y'|d,x_{m},z')-F_{Y|DXZ}(y|d,x_{m},z')$ for any $y,y'\in[\underline{y}_{d},\bar{y}_{d}]$. Recall that by Assumption \ref{1.assENSP}, $S(Y|d,x_{m},z)=S(Y|d,x_{m},z')=S(Y|d,x_{m})$. Therefore, we have the following two equations:
\begin{linenomath*}\begin{align*}
&F_{Y|DXZ}(y'|d,x_{m},z)-F_{Y|DXZ}(y|d,x_{m},z)\\
=&f_{Y|DXZ}(\tilde{y}_{1}|d,x_{m},z)\cdot \left[\left(\underline{y}_{dx_{m}}\lor\left(y'\land \bar{y}_{dx_{m}}\right)\right)-\left(\underline{y}_{dx_{m}}\lor\left(y\land \bar{y}_{dx_{m}}\right)\right)\right],
\end{align*}\end{linenomath*}
and
\begin{linenomath*}\begin{align*}
&F_{Y|DXZ}(y'|d,x_{m},z')-F_{Y|DXZ}(y|d,x_{m},z')\\
=& f_{Y|DXZ}(\tilde{y}_{2}|d,x_{m},z')\cdot \left[\left(\underline{y}_{dx_{m}}\lor\left(y'\land \bar{y}_{dx_{m}}\right)\right)-\left(\underline{y}_{dx_{m}}\lor\left(y\land \bar{y}_{dx_{m}}\right)\right)\right],
\end{align*}\end{linenomath*}
where $\tilde{y}_{1}$ and $\tilde{y}_{2}$ are mean values lying in $[\underline{y}_{dx_{m}},\bar{y}_{dx_{m}}]$.
By Assumption \ref{1.assREGNSP}, $\frac{f_{Y|DXZ}(\tilde{y}_{1}|d,x_{m},z')}{f_{Y|DXZ}(\tilde{y}_{2}|d,x_{m},z)}$ is uniformly bounded on $S(Y|d,x_{m})\times S(Y|d,x_{m})$. Meanwhile, the terms in the brackets in the two equations are equal. Therefore, there exists a constant $C>0$ such that 
\begin{linenomath*}\begin{align*}
&\sup_{y\in [\underline{y}_{d},\bar{y}_{d}]}\big|F_{Y|DXZ}(\hat{\varphi}_{d}(y;\hat{x}_{m})|d,x_{m},z)-F_{Y|DXZ}(\varphi_{d}(y;x_{m})|d,x_{m},z)\big|\\
\leq &C\sup_{y\in [\underline{y}_{d},\bar{y}_{d}]}\big|F_{Y|DXZ}(\hat{\varphi}_{d}(y;\hat{x}_{m})|d,x_{m},z')-F_{Y|DXZ}(\varphi_{d}(y;x_{m})|d,x_{m},z')\big|\\
=& O_{p}(r_{n}+a_{n})
\end{align*}\end{linenomath*}
\subsubsection{Proof Lemma \ref{1.lemA3}}
By the triangle inequality, 
\begin{linenomath*}\begin{align*}
&\sup_{\boldsymbol{g}\in\hat{\mathcal{G}}}\big|\frac{1}{J}\sum_{j=1}^{J}\hat{Q}_{NSP}(\boldsymbol{g}(u_{j}),u_{j})-\int_{0}^{1}Q_{NSP}(\boldsymbol{g}(u),u)du\big|\\
\leq & \sup_{\boldsymbol{g}\in\hat{\mathcal{G}}}\big|\frac{1}{J}\sum_{j=1}^{J}\hat{Q}_{NSP}(\boldsymbol{g}(u_{j}),u_{j})-\frac{1}{J}\sum_{j=1}^{J}Q_{NSP}(\boldsymbol{g}(u_{j}),u_{j})\big|\\
&+\sup_{\boldsymbol{g}\in\hat{\mathcal{G}}}\big|\frac{1}{J}\sum_{j=1}^{J}Q_{NSP}(\boldsymbol{g}(u_{j}),u_{j})-\int_{0}^{1}Q_{NSP}(\boldsymbol{g}(u),u)du\big|\\
\leq & \sup_{\boldsymbol{y}\in\prod_{d}S(Y|d)}\big|\hat{Q}_{NSP}(\boldsymbol{y},u_{j})-Q_{NSP}(\boldsymbol{y},u_{j})\big|\\
&+\sup_{\boldsymbol{g}\in\hat{\mathcal{G}}}\big|\frac{1}{J}\sum_{j=1}^{J}Q_{NSP}(\boldsymbol{g}(u_{j}),u_{j})-\int_{0}^{1}Q_{NSP}(\boldsymbol{g}(u),u)du\big|\\
=&\sup_{\boldsymbol{g}\in\hat{\mathcal{G}}}\big|\frac{1}{J}\sum_{j=1}^{J}Q_{NSP}(\boldsymbol{g}(u_{j}),u_{j})-\int_{0}^{1}Q_{NSP}(\boldsymbol{g}(u),u)du\big|+o_{p}(1)
\end{align*}\end{linenomath*}

Let $u_{0}=0$. For some $C>0$, the first term satisfies
\begin{linenomath*}\begin{align*}
&\sup_{\boldsymbol{g}\in\hat{\mathcal{G}}}\big|\frac{1}{J}\sum_{j=1}^{J}Q_{NSP}(\boldsymbol{g}(u_{j}),u_{j})-\int_{0}^{1}Q_{NSP}(\boldsymbol{g}(u),u)du\big|\\
= & \sup_{\boldsymbol{g}\in\hat{\mathcal{G}}}\big|\frac{1}{J}\sum_{j=1}^{J}Q_{NSP}(\boldsymbol{g}(u_{j}),u_{j})-\sum_{j=1}^{J}\int_{\frac{j-1}{J}}^{\frac{j}{J}}Q_{NSP}(\boldsymbol{g}(u),u)du\big|\\
\leq & \sup_{\boldsymbol{g}\in\mathcal{G}^{*}}\big|\sum_{j=1}^{J}\big(\frac{1}{J}Q_{NSP}(\boldsymbol{g}(u_{j}),u_{j})-\int_{\frac{j-1}{J}}^{\frac{j}{J}}Q_{NSP}(\boldsymbol{g}(u),u)du\big)\big|\\
=& \sup_{\boldsymbol{g}\in\mathcal{G}^{*}}\big|\sum_{j=1}^{J}\big(\int_{\frac{j-1}{J}}^{\frac{j}{J}}Q_{NSP}(\boldsymbol{g}(u_{j}),u_{j})du-\int_{\frac{j-1}{J}}^{\frac{j}{J}}Q_{NSP}(\boldsymbol{g}(u),u)du\big)\big|\\
\leq& \sup_{\boldsymbol{g}\in\mathcal{G}^{*}}\sum_{j=j}^{J}\big(\int_{\frac{j-1}{J}}^{\frac{j}{J}}C\cdot \big[||\boldsymbol{g}(u_{j})-\boldsymbol{g}(u)||+|u_{j}-u|\big]du\big)
\end{align*}\end{linenomath*}
In the first inequality, $\hat{\mathcal{G}}$ is changed to $\mathcal{G}^{*}$. This is feasible by the flatness of the cumulative distribution functions outside the support: For any increasing $\boldsymbol{g}(u)$ that takes values outside $\prod_{d}S(Y|d,x_{0})$, there exists an increasing function taking values only in $\prod_{d}S(Y|d,x_{0})$ that yields the same value of $Q_{NSP}$. See details in the proof of Theorem \ref{1.thmIDNSP}. On the other hand, $\hat{G}$ only contains piecewise affine increasing functions, while $\mathcal{G}^{*}$ contains any increasing functions. Hence the supremum over $\hat{\mathcal{G}}$ is no greater than that over $\mathcal{G}^{*}$. The equality following it is due to the fact that $\int_{\frac{j-1}{J}}^{\frac{j}{J}}Q_{NSP}(\boldsymbol{g}(u_{j}),u_{j})du=\frac{1}{J}Q_{NSP}(\boldsymbol{g}(u_{j}),u_{j})$. The last inequality follows differentiability of $Q_{NSP}(\cdot,\cdot)$ on $\prod_{d=1}^{3}S(Y|d,x_{0})\times [0,1]$, the mean value theorem, and the boundedness of all the partial derivatives of $Q_{NSP}$. 

Note that for $u\in [(j-1)/J,j/J]$, $u<u_{j}=j/J$. Since $\bm{g}\in\mathcal{G}^{*}$, $\bm{g}$ is increasing. Therefore, $\bm{g}(u_{j-1})\leq \bm{g}(u)\leq \bm{g}(u_{j})$ and
\begin{linenomath*}\begin{align*}
&\sup_{\boldsymbol{g}\in\mathcal{G}^{*}}\sum_{j=j}^{J}\big(\int_{\frac{j-1}{J}}^{\frac{j}{J}}C\cdot \big[||\boldsymbol{g}(u_{j})-\boldsymbol{g}(u)||+|u_{j}-u|\big]du\big)\\
\leq& \sup_{\boldsymbol{g}\in\mathcal{G}^{*}}\sum_{j=j}^{J}\big(\int_{\frac{j-1}{J}}^{\frac{j}{J}}C\cdot \big[||\boldsymbol{g}(u_{j})-\boldsymbol{g}(u_{j-1})||+|u_{j}-u_{j-1}|\big]du\big)\\
=& \frac{C}{J}\cdot\sup_{\boldsymbol{g}\in\mathcal{G}^{*}}\sum_{j=1}^{J}||\boldsymbol{g}(u_{j})-\boldsymbol{g}(u_{j-1})||+\frac{C}{J}\sum_{j=1}^{J}(u_{j}-u_{j-1})\\
\leq & \frac{C}{J}\sup_{\boldsymbol{g}\in\mathcal{G}^{*}}\sum_{j=1}^{J}\sum_{d=1}^{3}(g_{d}(u_{j})-g_{d}(u_{j-1}))+\frac{C}{J}\sum_{j=1}^{J}(u_{j}-u_{j-1})\\
\leq &\frac{C}{J}\sum_{d=1}^{3}(\bar{y}_{dx_{0}}-\underline{y}_{dx_{0}})+\frac{C}{J}\\
=&o(1)
\end{align*}\end{linenomath*}
\subsubsection{Proof of Lemma \ref{1.thmROCNSP}}
It is straightforward that
\begin{linenomath*}\begin{equation*}
\max_{u_{j}\in \mathcal{U}_{0}}||\hat{\boldsymbol{g}}(u_{j})-\boldsymbol{g}^{*}(x_{0},u_{j})||\leq \sqrt{\sum_{\{j:u_{j}\in\mathcal{U}_{0}\}}(\hat{\boldsymbol{g}}(u_{j})-\boldsymbol{g}^{*}(x_{0},u_{j}))'(\hat{\boldsymbol{g}}(u_{j})-\boldsymbol{g}^{*}(x_{0},u_{j}))}
\end{equation*}\end{linenomath*}
Now I derive an upper bound on the right hand side.

By uniform consistency (Theorem \ref{thmSUPCONS}), $\hat{\boldsymbol{g}}(\mathcal{U}_{0})\subseteq \prod_{d}S(Y|d,x_{0})$ w.p.a.1. Under this event, $\Psi(\cdot)$ is differentiable at $\hat{\boldsymbol{g}}(u_{j})$. Since $a_{n}=o(r_{n})$, the generalized propensity score estimators and $\hat{x}_{m}$ in $\hat{\Psi}$ do not affect the rate of convergence of $\hat{\Psi}$, which is $O_{p}(r_{n})$ by Lemma \ref{1.lemA2}. By $\Psi(\boldsymbol{g}^{*}(x_{0},u_{j}))=\bm{u}_{j}$ and the mean value theorem,
\begin{linenomath*}\begin{align*}
\tilde{\Pi}_{NSP}(u_{j})\cdot (\hat{\boldsymbol{g}}(u_{j})-\boldsymbol{g}^{*}(x_{0},u_{j}))=&\Psi(\hat{\boldsymbol{g}}(u_{j}))-\Psi(\boldsymbol{g}^{*}(x_{0},u_{j}))\\
=& \hat{\Psi}(\hat{\boldsymbol{g}}(u_{j}))-\bm{u}_{j}+O_{p}(r_{n})
\end{align*}\end{linenomath*}
where $\tilde{\Pi}_{NSP}(u_{j})$ is the Jacobian of $\Psi$ at the mean value. By uniform consistency of $\hat{\bm{g}}$ and the full rank condition in Theorem \ref{1.thmIDNSP}, $\tilde{\Pi}_{NSP}(u_{j})$ is invertible w.p.a.1. Then w.p.a.1,
\begin{linenomath*}\begin{equation*}
\hat{\boldsymbol{g}}(u_{j})-\boldsymbol{g}^{*}(x_{0},u_{j})=\big(\tilde{\Pi}_{NSP}'(u_{j})\tilde{\Pi}_{NSP}(u_{j})\big)^{-1}\tilde{\Pi}_{NSP}'(u_{j})\cdot\big(\hat{\Psi}(\hat{\boldsymbol{g}}(u_{j}))-\bm{u}_{j}\big)+O_{p}(r_{n})
\end{equation*}\end{linenomath*}

Let $\bm{W}_{gn}(u)$ be the weighting matrix used to obtain $\hat{\bm{g}}(u)$. By boundedness of all the conditional densities, there exists a constant $C>0$ such that
\begin{linenomath*}\begin{align*}
&\sum_{\{j:u_{j}\in\mathcal{U}_{0}\}}(\hat{\boldsymbol{g}}(u_{j})-\boldsymbol{g}^{*}(x_{0},u_{j}))'(\hat{\boldsymbol{g}}(u_{j})-\boldsymbol{g}^{*}(x_{0},u_{j}))\notag\\
\leq& C \sum_{j=1}^{J}
\big(\hat{\Psi}(\hat{\boldsymbol{g}}(u_{j}))-\bm{u}_{j}\big)'\bm{W}_{gn}(u_{j})\big(\hat{\Psi}(\hat{\boldsymbol{g}}(u_{j}))-\bm{u}_{j}\big)+O_{p}(Jr_{n}^{2})\\
\leq &C\sum_{j=1}^{J}\big(\hat{\Psi}(\boldsymbol{g}_{0}(u_{j}))-\boldsymbol{u}_{j})\big)'\bm{W}_{gn}(u_{j})\big(\hat{\Psi}(\boldsymbol{g}_{0}(u_{j}))-\boldsymbol{u}_{j}\big)+O_{p}(Jr_{n}^{2})\\
\leq &C\sum_{j=1}^{J}\big(\hat{\Psi}(\boldsymbol{g}_{0}(u_{j}))-\Psi(\boldsymbol{g}_{0}(u_{j}))\big)'\bm{W}_{gn}(u_{j})\big(\hat{\Psi}(\boldsymbol{g}_{0}(u_{j}))-\Psi(\boldsymbol{g}_{0}(u_{j}))\big)+O_{p}(Jr_{n}^{2})\\
=& O_{p}(Jr_{n}^{2})
\end{align*}\end{linenomath*}
where the function $\boldsymbol{g}_{0}\in \hat{\mathcal{G}}$ is the same as in the proof of Theorem \ref{thmSUPCONS}. The second inequality thus follows the definition of $\hat{\bm{g}}$. Since $\boldsymbol{g}_{0}$ satisfies $\boldsymbol{g}_{0}(u_{j})=\boldsymbol{g}^{*}(x_{0},u_{j})$ for all $j=1,...,J$, $\Psi(\boldsymbol{g}_{0}(u_{j}))=\bm{u}_{j}$. The third inequality follows. The last equality is by the uniform rate of convergence of $\hat{\Psi}$. This completes the proof.
\renewcommand{\thesection}{S.B}
\setcounter{table}{0}
\renewcommand{\thetable}{S.B.\Roman{table}}	
\section{Additional Simulation Results}\label{secSB}
This section presents the simulation results under different $x_{0}$ or parameter values in the model in Section \ref{1.sec6}. Table \ref{1.tab2} and Table \ref{1.tab3} present the results for $x_{0}=-0.3$ and $0.3$. The results are similar to Table \ref{1.tab1}. Table \ref{1.tab4} shows the results for a weaker instrument. The parameters $(\alpha,\beta)$ are now set to one fifth of the values in the benchmark case in Section \ref{1.sec6}, while the smallest eigenvalue of the matrix $\Pi_{SP}'\Pi_{SP}$ is $1/200$ of the benchmark case. The variances of the estimator blow up, but the biases are still very small. Table \ref{1.tab5} shows the results for different correlation coefficients between $U$ and $V$. The results are again very similar to the benchmark case $\rho=0.5$ in Table \ref{1.tab1}. The relevant results in Table \ref{1.tab1} are also recorded in Tables \ref{1.tab3} and \ref{1.tab4} for the readers' convenience.
\begin{table}[H]
\centering
\caption{$x_{0}=-0.3$. $\boldsymbol{m}^{*}(-0.3)=(1.05,2.1,2.45)$}\label{1.tab2}
\begin{tabular}{C{1.4cm} C{1cm} C{1.3cm} C{1.3cm} C{1.3cm} C{1.3cm} C{1.3cm} C{1.3cm} C{1.3cm}}
\hline
\hline
&$n$ & Average & $\text{Bias}^{2}$& Variance & MSE & 90\% & 95\% & 99\% \\
\hline
\multirow {3}{*}{$\hat{m}_{1}(-0.3)$} &1000 &1.02 &0.001&0.14&0.14&93\% & 97\%& 99.2\% \tabularnewline
                                                    &2000 &1.04 &$2\cdot 10^{-4}$ &0.07&0.07& 91\%& 96.4\% &99.2\%  \tabularnewline
                                                    &3000 &1.04 &$10^{-4}$&0.05&0.05& 89.8\%&94.4\% &99\% \tabularnewline
\hline 
\multirow {3}{*}{$\hat{m}_{2}(-0.3)$} & 1000 &2.09&$10^{-4}$&0.85&0.85&93.8\%& 96.8\%& 99\%\tabularnewline
                                                     &2000 &2.04 &0.003&0.34&0.35& 92.8\%& 96.8\%&99.6\%\tabularnewline
                                                     &3000 &2.04 & 0.004&0.25&0.26& 90\%&95.8\% &99.4\%\tabularnewline
\hline 
\multirow {3}{*}{$\hat{m}_{3}(-0.3)$} & 1000 &2.36&0.01&0.21&0.22&91.4\%& 96\%& 98.4\%\tabularnewline
                                                     &2000 &2.41 &0.002&0.10&0.10& 93.2\%& 96.6\%&98.6\%\tabularnewline
                                                     &3000 &2.43 & $3\cdot 10^{-4}$&0.07&0.07& 89.2\%&94.4\% &99.2\%\tabularnewline
\hline 

\multirow {3}{*}{$\mathcal{J}_{x}$} &1000&             &           &            &     &  88.6\%  &   94.6\%   &99\%\tabularnewline
                                                     &2000&             &           &            &            &  90.4\%  &   95.2\%   &99.2\%\tabularnewline
                                                     &3000&             &           &            &            &  91.2\%  &   97\%   &99.4\%\tabularnewline
\hline 
\multirow {3}{*}{$\mathcal{J}_{SP}$} &1000&             &           &            &  &  93.4\%  &   96.6\%   &99.2\%\tabularnewline
                                                     &2000&             &           &            &            &  92.2\%  &   96.4\%   &99.2\%\tabularnewline
                                                     &3000&             &           &            &            &  91\%  &   94.2\%   &99.2\%\tabularnewline
\hline 
\end{tabular}
\end{table}
\begin{table}[H]
\centering
\caption{$x_{0}=0.3$. $\boldsymbol{m}^{*}(0.3)=(1.95,3.9,4.55)$}\label{1.tab3}
\begin{tabular}{C{1.4cm} C{1cm} C{1.3cm} C{1.3cm} C{1.3cm} C{1.3cm} C{1.3cm} C{1.3cm} C{1.3cm}}
\hline
\hline
&$n$ & Average & $\text{Bias}^{2}$& Variance & MSE & 90\% & 95\% & 99\% \\
\hline
\multirow {3}{*}{$\hat{m}_{1}(0.3)$} &1000 &1.95 &$2\cdot 10^{-5}$&0.11&0.11&89.4\%& 93.4\% & 97.2\% \tabularnewline
                                                     &2000 &1.98 &$8\cdot 10^{-4}$ &0.05&0.05& 90.2\%& 95\%  &98.6\%  \tabularnewline
                                                     &3000 &1.95 &$2\cdot 10^{-6}$&0.04&0.04& 89\%&94.2\%&99\% \tabularnewline
\hline 
\multirow {3}{*}{$\hat{m}_{2}(0.3)$} & 1000 &3.73&0.03&0.87&0.90&90.2\%& 93.8\% & 97\% \tabularnewline
                                                     &2000 &3.72 &0.03&0.38&0.41& 92\%&95.2\%&99\% \tabularnewline
                                                     &3000 &3.81 & 0.01&0.27&0.28& 89.6\%&95.6\% &99.6\% \tabularnewline
\hline 
\multirow {3}{*}{$\hat{m}_{3}(0.3)$} & 1000 &4.55&$10^{-5}$&0.27&0.27&92.4\%& 95.4\%& 97.6\% \tabularnewline
                                                     &2000 &4.56 &$2\cdot 10^{-4}$&0.13&0.13& 93.8\%& 97.4\%&99.2\%\tabularnewline
                                                     &3000 &4.53& $3\cdot 10^{-5}$&0.08&0.08& 93\%&97.8\%&99.2\% \tabularnewline
\hline 
\multirow {3}{*}{$\mathcal{J}_{x}$} &1000&             &           &            &     &  88.6\%  &   94.6\%   &99\%\tabularnewline
                                                     &2000&             &           &            &            &  88.8\%  &   95.2\%   &99.2\%\tabularnewline
                                                     &3000&             &           &            &            &  92\%  &   95.6\%   &99\%\tabularnewline
\hline 
\multirow {3}{*}{$\mathcal{J}_{SP}$} &1000&             &           &            &  &  90.2\%  &   94.2\%   &97.4\%\tabularnewline
                                                     &2000&             &           &            &            &  91.4\%  &   96.8\%   &99\%\tabularnewline
                                                     &3000&             &           &            &            &  90.8\%  &   96.2\%   &99.2\%\tabularnewline
\hline 
\end{tabular}
\end{table}



\begin{table}[H]
\centering
\caption{Different Strengths of $(Z,X)$. $n=2000$}\label{1.tab4}
\begin{tabular}{C{1.4cm} C{1cm} C{0.8cm} C{1.1cm} C{1.2cm} C{1.2cm} C{1.1cm} C{1.1cm} C{1.1cm} C{1.1cm}}
\hline
\hline
$(\alpha,\beta)$& min. eig. & & Average & $\text{Bias}^{2}$& Variance & MSE & 90\% & 95\% & 99\% \\
\hline
\multirow {3}{*}{(0.8,0.4)}& \multirow {3}{*}{0.02} & $\hat{m}_{1}(0)$ &1.51 & $3\cdot 10^{-5}$  &0.06&0.06& 91.6\%& 96\% &99\%     \tabularnewline
                                                               &         & $\hat{m}_{2}(0)$ &2.88 &0.01&0.37&0.39& 89.6\%& 95\%   &99\%  \tabularnewline
                                                            &            & $\hat{m}_{3}(0)$ &3.49 &$2\cdot 10^{-4}$&0.12&0.12  & 92.2\%& 97.2\%  &98.8\%    \tabularnewline
\hline 
      \multirow {3}{*}{(0.16,0.08)}&\multirow {3}{*}{$10^{-4}$} & $\hat{m}_{1}(0)$ &1.48 & $3\cdot 10^{-4}$  &13.10&13.10& 93\%& 96.4\%&99.2\%  \tabularnewline
                                                                  &      & $\hat{m}_{2}(0)$ &3.11 &0.01&50.82&50.83& 93.2\%& 95.6\%&99.4\%  \tabularnewline
                                                                  &      & $\hat{m}_{3}(0)$ &3.44 &0.004&11.04&11.05  & 91.8\%& 95.2\%&99.2\%  \tabularnewline
\hline                                                               
\end{tabular}
\end{table}

\begin{table}[H]
\centering
\caption{Different Degree of Endogeneity. $n=2000$}\label{1.tab5}
\begin{tabular}{C{1.2cm} C{0.8cm} C{1.3cm} C{1.3cm} C{1.3cm} C{1.3cm} C{1.3cm} C{1.3cm} C{1.3cm}}
\hline
\hline
&$\rho$ & Average & $\text{Bias}^{2}$& Variance & MSE & 90\% & 95\% & 99\% \\
\hline
\multirow {3}{*}{$\hat{m}_{1}(0)$} & 0.3 &1.50 &$2\cdot 10^{-5}$&0.06&0.06&92.4\% & 96.8\%& 99.6\% \tabularnewline
                                                     &0.5 &1.51 & $3\cdot 10^{-5}$  &0.06&0.06& 91.6\%& 96\% &99\%   \tabularnewline
                                                    &0.7 &1.51 & $4\cdot 10^{-5}$ &0.05&0.05& 91.2\%&95.4\%&99.6\% \tabularnewline                                                                                                 
\hline 
\multirow {3}{*}{$\hat{m}_{2}(0)$} & 0.3 &2.88&0.01&0.38&0.39&91.8\%& 96.4\%& 99.4\%  \tabularnewline
                                                    &0.5 &2.88 &0.01&0.37&0.39& 89.6\%& 95\%  &99\%  \tabularnewline
                                                    &0.7 &2.88 & 0.01&0.35&0.37& 91\%&96.2\% &99\%  \tabularnewline
                                                                                               
\hline 
\multirow {3}{*}{$\hat{m}_{3}(0)$} &0.3 &3.49&$5\cdot 10^{-5}$&0.12&0.12&93\%& 96.8\% & 99.2\%  \tabularnewline
                                                     &0.5 &3.49 &$2\cdot 10^{-4}$&0.12&0.12  & 92.2\%& 97.2\%   &98.8\%  \tabularnewline
                                                     &0.7 &3.49 & $2\cdot 10^{-4}$&0.11&0.11& 92\%&96.8\% &99.4\% \tabularnewline                                                                                                
\hline 
\end{tabular}
\end{table}
\end{appendices}
\bibliography{/Users/Ecthelion/Documents/Econometrics/NonparametricIdentification/Paper/bib}
\makeatletter\@input{tupp.tex}\makeatother